\title{Testing by Dualization}
\author{Yishuai Li}
\date{2022} % Enter four-digit year of deposit
\abstract{Software engineering requires rigorous testing to guarantee the
  product's quality.  Semantic testing of functional correctness is challenged
  by nondeterminism in behavior, which makes testers difficult to write
  and reason about.

  This thesis presents a language-based technique for testing interactive
  systems.  I propose a theory for specifying and validating
  nondeterministic behaviors, with guaranteed soundness and correctness.  I then
  apply the theory to testing practices, and show how to derive specifications
  into interactive tester programs.  I also introduce a language design for
  producing test inputs that can effectively detect and reproduce
  invalid behaviors.

  I evaluate the methodology by specifying and testing real-world systems such
  as web servers and file synchronizers, demonstrating the derived testers'
  ability to find disagreements between the specification and the
  implementation.}
\newif\ifdraft\drafttrue
\theoremstyle{definition}
\newtheorem{definition}{Definition}
\newtheorem{lemma}{Lemma}[chapter]
\lstdefinestyle{customcoq}{
  language=Coq,
  literate={<>}{$\ne$}1
  {->}{$\to$}1
  {<-}{$\gets$}1
  {=>}{$\Rightarrow$}1
  {|->}{$\mapsto$}1
  {fun}{$\lambda$}1
  {\\in}{$\in$}1
  {\\\/}{$\vee$}1
  {<>}{$\neq$}1
  {>>}{$\gg$}1
  {>>=}{$\gg\!=$}1
  {forall}{$\forall$}1
  {+'}{$\oplus$}1
  {sigma}{$\sigma$}1
}
\lstdefinestyle{json}{
  language=Coq,
}
\lstdefinestyle{customc}{
  language=C,
  keywordstyle={\bfseries\codestyle\color{Keyword1Color}},
  showstringspaces=false,
  morekeywords={true,false,bool},
  escapechar=\$
}
\newcommand{\http}{HTTP/1.1\xspace}
\newcommand{\inlinec}[1]{\lstinline[style=customc]{#1}}
\newcommand{\ilc}[1]{\lstinline[style=customcoq]{#1}}
\newcommand{\ilj}[1]{\lstinline[style=json]{#1}}
\newcommand{\iletag}[1]{{\color{blue}\ilc{#1}}}
\g@addto@macro\normalsize{%
  \setlength{\abovedisplayskip}{.5em minus .5em}
  \setlength{\belowdisplayskip}{.5em minus .5em}
  \setlength{\abovedisplayshortskip}{0em minus .5em}
  \setlength{\belowdisplayshortskip}{0em minus .5em}
}
\newcommand{\complies}[2]{\mathsf{comply}_{#1}~{#2}}
\newcommand{\valid}[2]{\mathsf{valid}_{#1}~{#2}}
\newcommand{\invalid}[2]{\neg\valid{#1}{#2}}
\newcommand{\accepts}[2]{\mathsf{accept}_{#1}~{#2}}
\newcommand{\rejects}[2]{\neg\accepts{#1}{#2}}
\newcommand{\rejSound}[2]{#1\;\mathsf{sound}_{#2}^\mathsf{Rej}}
\newcommand{\rejComplete}[2]{#1\;\mathsf{complete}_{#2}^\mathsf{Rej}}
\newcommand{\triggers}[3]{#1\overunderset{#3}{#2}\longrightarrow}
\newcommand{\behaves}[2]{\triggers{#1}{}{#2}}
\newcommand{\llet}{\mathsf{let}\;}
\newcommand{\iin}{\mathsf{in}\;}
\newcommand{\letin}[2]{\llet#1=#2\;\iin}
\newcommand{\option}{\mathsf{option}\;}
\newcommand{\Some}[1]{\mathsf{Some}\;#1}
\newcommand{\None}{\mathsf{None}}
\newcommand{\sigT}[2]{\exists#1,#2}
\newcommand{\existT}[3]{\mathsf{pack}\;#1=#2\;\mathsf{with}\;#3}
\newcommand{\sstep}{\mathsf{sstep}}
\newcommand{\vstep}{\mathsf{vstep}}
\newcommand{\Server}{\mathsf{Server}}
\newcommand{\Validator}{\mathsf{Validator}}
\newcommand{\stepServer}{\mathsf{stepServer}}
\newcommand{\stepValidator}{\mathsf{stepValidator}}
\newcommand{\lam}[2]{\lambda#1\Rightarrow#2}
\newcommand{\Set}{\mathsf{set}\;}
\newcommand{\List}{\mathsf{list}\;}
\newcommand{\nil}{\varepsilon}
\newcommand{\If}{\mathsf{if}\;}
\newcommand{\tthen}{\mathsf{then}\;}
\newcommand{\Then}{\;\tthen}
\newcommand{\eelse}{\mathsf{else}\;}
\newcommand{\Else}{\;\eelse}
\newcommand{\Nat}{\mathbb N}
\newcommand{\Int}{\mathbb Z}
\newcommand{\Bool}{\mathbb B}
\newcommand{\Is}{\;\mathsf{is}\;}
\newcommand{\Prog}{\mathsf{Prog}}
\newcommand{\serverOf}{\mathsf{serverOf}}
\newcommand{\validatorOf}{\mathsf{validatorOf}}
\newcommand{\Return}{\mathsf{return}}
\newcommand{\Sexp}{\mathsf{SExp}}
\newcommand{\Eval}{\mathsf{eval}}
\newcommand{\Exec}{\mathsf{exec}}
\newcommand{\update}[3]{#1[#2\mapsto#3]}
\newcommand{\constraint}{\mathsf{constraint}}
\newcommand{\Fresh}{\mathsf{fresh}\;}
\newcommand{\solvable}{\mathsf{solvable}\;}
\newcommand{\Vexp}{\mathsf{VExp}}
\newcommand{\oop}{\;\mathit{op}\;}
\newcommand{\ccmp}{\;\mathit{cmp}\;}
\newcommand{\vs}{\mathit{vs}}
\newcommand{\cs}{\mathit{cs}}
\newcommand{\src}{\mathit{src}}
\newcommand{\dst}{\mathit{dst}}
\newcommand{\asgn}{\mathit{asgn}}
\newcommand{\Label}{\mathit{label}}
\newcommand{\Function}{\mathit{function}}
\newcommand{\Index}{\mathit{index}}
\newcommand{\Field}{\mathit{field}}
\newcommand{\Var}{\mathsf{var}}
\newcommand{\Reflects}[2]{#1\sim #2}
\newcommand{\satisfy}[2]{#1\;\mathsf{satisfies}\;#2}
\newcommand{\Jref}[3]{#1.#2.#3}
\newcommand{\Jpath}{\mathsf{Jpath}}
\newcommand{\This}{\mathsf{this}}
\newcommand{\Number}{\#}
\newcommand{\At}{@}
\newcommand{\Write}{\mathsf{write}}
\newcommand{\Havoc}{\mathsf{havoc}}
\newcommand{\ie}{\text{i.e.}}
\newcommand{\eg}{\text{e.g.}}
\numberwithin{definition}{chapter}
\begin{document}
\maketitle % If you are in the Romance Languages or the Managerial Science and Applied Economics graduate group, comment out this line. If you do, you will also need to edit lines 33 and 72.
%\makespecializationtitle % If in the Romance Languages or the Managerial Science and Applied Economics graduate group, uncomment this line. If you do, you will also need to edit lines 33 and 71.
\setcounter{page}{2}

%%%% OPTIONAL COPYRIGHT NOTICE
%\makecopyright % If not applicable, comment out this line to hide the optional traditional copyright notice page. If you do, you will also need to edit line 47.
%-------------------------------------------
\makecreativecommons % If applicable, uncomment this line to insert the optional Creative Commons License copyright notice page. If you do, you will also need to edit lines 49, 51, and 76.
%-------------------------------------------

%%%% OPTIONAL DEDICATION PAGE
%\makededication % If not applicable, comment out this line to hide the optional dedication page. If you do, you will also need to edit line 55.
%-------------------------------------------

%%%% OPTIONAL ACKNOWLEDGEMENT PAGE
\makeacknowledgement % If not applicable, comment out this line to hide the optional acknowledgment page. If you do, you will also need to edit line 59.
%-------------------------------------------

\makeabstract
\tableofcontents

%%%% OPTIONAL LIST OF TABLES
%\clearpage \phantomsection \addcontentsline{toc}{chapter}{LIST OF TABLES} \listoftables % If not applicable, comment out this line to hide the optional List of Tables
%-------------------------------------------

%%%% OPTIONAL LIST OF ILLUSTRATIONS
\clearpage \phantomsection \addcontentsline{toc}{chapter}{LIST OF ILLUSTRATIONS} \listoffigures % If not applicable, comment out this line to hide the optional List of Illustrations
%-------------------------------------------

%%%% OPTIONAL PREFACE
%\makepreface % If applicable, uncomment this line to insert the optional preface. If you do, you will also need to edit line 67.
%-------------------------------------------

%%%%% MAIN TEXT %%%%%
\begin{mainf} % The main body of your dissertation starts below this line 

\chpt{Introduction}
\label{chap:introduction}
Software engineering requires rigorous testing of rapidly evolving programs,
which costs manpower comparable to developing the product itself~\cite{vailshery}.  To guarantee
programs' compliance with their specifications, we need testers that can tell
compliant implementations from violating ones.

This thesis studies how to test the semantics of interactive systems: The system
under test (SUT) interacts with the tester by sending and receiving messages,
and the tester determines whether the messages sent by the SUT are valid with
respect to the protocol specification.  This kind of testing is applicable in
many scenarios, including web servers, distributed file systems, etc.

This chapter provides a brief view of interactive testing
(\autoref{sec:interactive-testing}), explains why nondeterminism makes this
problem difficult
(Sections~\ref{sec:internal-external-nondeterminism}--\ref{sec:inter-execution-nondeterminism}),
discusses the field of existing works (\autoref{sec:existing-work}), and
summarizes the contributions of this thesis in addressing the challenges caused
by nondeterminism (\autoref{sec:contribution}).

\paragraph{Convention}
In this thesis, I use standard terminologies and conventions from functional
programming, such as monads and coinduction.  The meta language for data
structures and algorithms is Coq, with syntax simplified in places for
readability.

\section{Interactive Testing}
\label{sec:interactive-testing}
Suppose we want to test a web server that supports GET and PUT methods.  We can
represent the server as a stateful program.
\begin{coq}
  CoFixpoint server (data: key -> value) :=
    request <- recv;;
    match request with
    | GET k   => send (data k);; server  data
    | PUT k v => send  Done   ;; server (data [k |-> v])
    end.
\end{coq}
Here syntax ``\ilc{x <- f;; y}'' encodes a monadic program that binds the result
of computation \ilc f as variable \ilc x in continuation \ilc y.  For example,
to receive a request is to bind the result of \ilc{recv} as variable
\ilc{request} in the remaining program that performs pattern matching on it.
Syntax ``\ilc{data [k|->v]}'' represents a key-value store where \ilc k is
mapped to \ilc v, and all other keys are mapped by \ilc{data}.  That is, for
all \ilc{k'} that are not equal to \ilc k, \ilc{(data [k|->v]) k'} is equal
to \ilc{(data k')}.

The \ilc{server} function iterates with a parameter called \ilc{data}, which is
a key-value store.  In each iteration, the server receives a request and
computes its response.  It then sends back the response and recurses with the
updated data.

We can write a tester client that interacts with the server and determines
whether it behaves correctly:
\begin{coq}
  CoFixpoint tester (data: key -> value) :=
    request <- random;;
    send request;;
    response <- recv;;
    match request with
    | GET k   => if response =? data k
                 then tester data
                 else reject
    | PUT k v => if response =? Done
                 then tester (data [k |-> v])
                 else reject
    end.
\end{coq}
This tester implements a reference server internally that computes the expected
behavior.  The behavior is then compared against that produced by the SUT.  The
tester rejects the SUT upon any difference from the computed expectation.

The above tester can be restructured into two modules: (i) a {\em test harness}
that interacts with the server and produces transactions of sends and receives,
and (ii) a {\em validator} that determines whether the transactions are valid or
not:
\begin{coq}
  (* Compute the expected response and next state of the server. *)
  Definition serverSpec request data :=
    match request with
    | GET k   => (data k, data)
    | PUT k v => (Done  , data [k |-> v])
    end.

  (* Validate the transaction against the stateful specification. *)
  Definition validate spec request response data :=
    let (expect, next) := spec request data in
    if response =? expect then Success next else Failure.

  (* Produce transactions for the validator. *)
  CoFixpoint harness validator state :=
    request <- random;;
    send request;;
    response <- recv;;
    if validator request response state is Success next
    then harness validator next
    else reject.

  Definition tester := harness (validate serverSpec).
\end{coq}
This testing method works for deterministic systems, whose behavior can be
precisely computed from their inputs.  But, many systems are allowed to behave
nondeterministically.  For example, systems may implement various hash
algorithms, or buffer network packets in different ways.  The following sections
discuss nondeterminism by partitioning it in two ways, and explains how they
pose challenges to the validator and the test harness.

\section{Internal and External Nondeterminism}
\label{sec:internal-external-nondeterminism}
When people talk to each other, voice is transmitted over substances like air or
metal.  When testers interact with the SUT, messages are transmitted via the
runtime environment.  The specification might allow SUTs to behave differently
from each other, just like people speaking in different accents; we call it {\em
internal nondeterminism}.  The runtime environment might affect the transmission
of messages, just like solids transmit voice faster than liquids and gases; we
call it {\em external nondeterminism}.

\subsection{Internal nondeterminism}
\label{sec:internal-nondeterminism}
Within the SUT, correct behavior may be \mbox{underspecified}.  Consider web
browsing as an example: If a client has cached a local copy of some resource,
then when fetching updates for the resource, the client can ask the server not
to send the resource's contents if it is the same as the cached copy.  To
achieve this, an HTTP server may generate a short string, called an ``entity
tag'' (ETag)~\cite{rfc7232}, identifying the content of some resource, and send
it to the client:
\begin{multicols}{2}
\begin{cpp}
  /* Client: */
  GET /target HTTP/1.1
\end{cpp}
\columnbreak
\begin{cpp}
  /* Server: */
  HTTP/1.1 200 OK
  ETag: "tag-foo"
  ... content of /target ...
\end{cpp}
\end{multicols}
The next time the client requests the same resource, it can include the ETag in
the GET request, informing the server not to send the content if its ETag still
matches:
\begin{multicols}{2}
\begin{cpp}
  /* Client: */
  GET /target HTTP/1.1
  If-None-Match: "tag-foo"
\end{cpp}
\columnbreak
\begin{cpp}
  /* Server: */
  HTTP/1.1 304 Not Modified
\end{cpp}
\end{multicols}
If the ETag does not match, the server responds with code 200 and the updated
content as usual.

Similarly, if a client wants to modify the server's resource atomically by
compare-and-swap, it can include the ETag in the PUT request as an
\inlinec{If-Match} precondition, which instructs the server to only update the
content if its current ETag matches:
\begin{multicols}{2}
\begin{cpp}
  /* Client: */
  PUT /target HTTP/1.1
  If-Match: "tag-foo"
  ... content (A) ...
\end{cpp}
\columnbreak
\begin{cpp}
  /* Server: */
  HTTP/1.1 204 No Content
\end{cpp}
\end{multicols}
\begin{multicols}{2}
\begin{cpp}
  /* Client: */
  GET /target HTTP/1.1
\end{cpp}
\columnbreak
\begin{cpp}
  /* Server: */
  HTTP/1.1 200 OK
  ETag: "tag-bar"
  ... content (A) ...
\end{cpp}
\end{multicols}
If the ETag does not match, then the server should not perform the requested
operation, and should reject with code 412:
\begin{multicols}{2}
\begin{cpp}
  /* Client: */
  PUT /target HTTP/1.1
  If-Match: "tag-baz"
  ... content (B) ...
\end{cpp}
\columnbreak
\begin{cpp}
  /* Server: */
  HTTP/1.1 412 Precondition Failed
\end{cpp}
\end{multicols}

\begin{multicols}{2}
\begin{cpp}
  /* Client: */
  GET /target HTTP/1.1
\end{cpp}
\columnbreak
\begin{cpp}
  /* Server: */
  HTTP/1.1 200 ok
  ETag: "tag-bar"
  ... content (A) ...
\end{cpp}
\end{multicols}
Whether a server's response should be judged {\em valid} or not depends on the
ETag it generated when creating the resource.  If the tester doesn't know the
server's internal state (\eg, before receiving any 200 response that
includes an ETag), and cannot enumerate all of them (as ETags can be arbitrary
strings), then it needs to maintain a space of all possible values, and narrow
the space upon further interactions with the server.  For example, ``If the
server has revealed some resource's ETag as \inlinec{"tag-foo"}, then it must
not reject requests targeting this resource conditioned over \inlinec{If-Match:
  "tag-foo"}, until the resource has been modified''; and ``Had the server
previously rejected an \inlinec{If-Match} request, it must reject the same
request until its target has been modified.''

\begin{figure}
\begin{coq}
  Definition validate request response
             (data      : key -> value)
             (tag_is    : key -> Maybe etag)
             (tag_is_not: key -> list etag) :=
    match request, response with
    | PUT k t v, NoContent => 
      if t \in tag_is_not k then Failure
      else if (tag_is k =? Unknown) || strong_match (tag_is k) t
      then (* Now the tester knows that the data in [k]
            * is updated to [v], but its new ETag is unknown. *)
        Success (data       [k |-> v],
                 tag_is     [k |-> Unknown],
                 tag_is_not [k |-> [] ])
      else Failure
    | PUT k t v, PreconditionFailed =>
      if strong_match (tag_is k) t then Failure
      else (* Now the tester knows that the ETag of [k]
            * is other than [t]. *)
        Success (data, tag_is, tag_is_not [k |-> t::(tag_is_not k)])
    | GET k t, NotModified =>
      if t \in tag_is_not then Failure
      else if (tag_is k =? Unknown) || weak_match (tag_is k) t
      then (* Now the tester knows that the ETag of [k]
            * is equal to [t]. *)
        Success (data, tag_is [k |-> Known t], tag_is_not)
      else Failure
    | GET k t0, OK t v =>
      if weak_match (tag_is k) t0 then Failure
      else if data k =? v
      then (* Now the tester knows the ETag of [k]. *)
        Success (data, tag_is [k |-> Known t], tag_is_not)
      else Failure
    | _, _ => Failure
    end.    
\end{coq}
\vspace*{1em}
  \caption[Ad hoc tester for \http conditional requests.]{Ad hoc tester for
    \http conditional requests.  \ilc{PUT k t v} represents a PUT request that
    changes \ilc k's value into \ilc v only if its ETag matches \ilc t; \ilc{GET
      k t} is a GET request for \ilc k's value only if its ETag does not match
    \ilc t; \ilc{OK t v} indicates that the request target's value is \ilc v and
    its ETag is \ilc t.}
  \label{fig:etag-tester}
\end{figure}

This idea of remembering matched and mismatched ETags is implemented in
\autoref{fig:etag-tester}.  For each key, the validator maintains three internal
states: (i) The value stored in \ilc{data}, (ii) the corresponding resource's
ETag, if known by the tester, stored in \ilc{tag_is}, and (iii) ETags that are
known to not match the resource's, stored in \ilc{tag_is_not}.  Each pair of
request and response contributes to the validator's knowledge of the target
resource.  The tester rejects the SUT if the observed behavior does not match
the knowledge gained in previous interactions.

Even simple nondeterminism like ETags requires such careful design of the
validator, based on thorough comprehension of the specification.  We need to
construct such validators in a scalable way for more complex protocols.  This is
one challenge posed by internal nondeterminism.

\subsection{External nondeterminism}
\label{sec:intro-external-nondet}
To discuss the nondeterminism caused by the environment, we need to more
precisely define the environment concept as it pertains to this testing
scenario.
\begin{definition}[Environment, input, output, and observations]
\label{def:environment}
  The {\em environment} is the substance that the tester and the SUT interact
  through.  The {\em input} is the subset of the environment that the tester can
  manipulate.  The {\em output} is the subset of the environment that the SUT
  can alter.  The {\em observation} is the tester's view of the inputs and the
  outputs.
\end{definition}
When testing servers, the environment is the network stack between the client
and the server.  The input is the sequence of requests sent by the client, and
the output is the sequence of responses sent by the server.  The responses are
transmitted via the network, until reaching the client side as observations.

\begin{figure}
  \centering
  \begin{minipage}[c]{.3\textwidth}
\begin{coq}
  (* Observation: *)
  1> PUT k "new"
  1< Done
  2> GET k
  2< "new"
\end{coq}
  \end{minipage}\begin{minipage}[c]{.4\textwidth}
  \includegraphics[width=\linewidth]{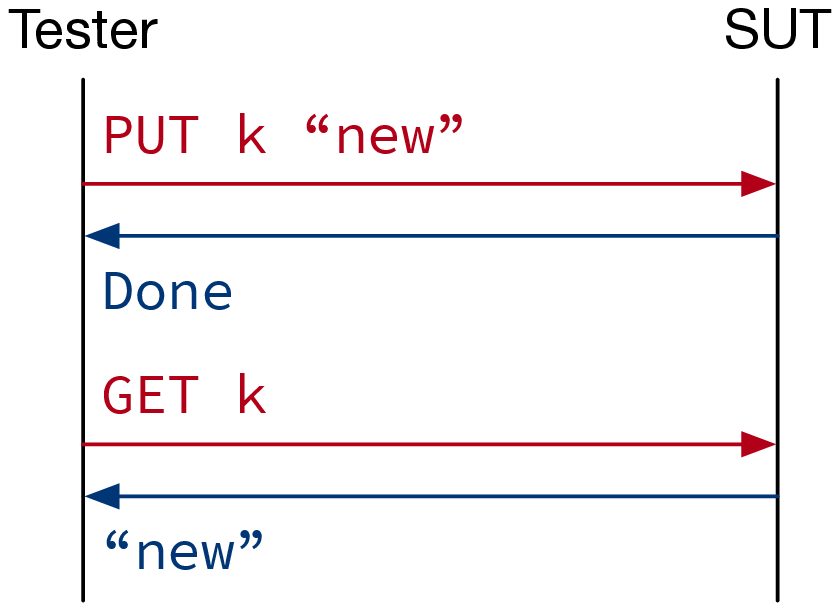}
  \end{minipage}\begin{minipage}[c]{.3\textwidth}
\begin{coq}
  (* Output: *)
  1> PUT k "new"
  1< Done
  2> GET k
  2< "new"
\end{coq}
  \end{minipage}
  \caption[Linear trace with no concurrency.]{With no concurrency, the
    observation is identical to the output.}
  \label{fig:linear-trace}
\end{figure}
\begin{figure}
  \centering
  \begin{minipage}[c]{.3\textwidth}
\begin{coq}
  (* Observation: *)
  1> PUT k "new"
  2> GET k
  1< Done
  2< "old"
\end{coq}
  \end{minipage}\begin{minipage}[c]{.4\textwidth}
    \includegraphics[width=\linewidth]{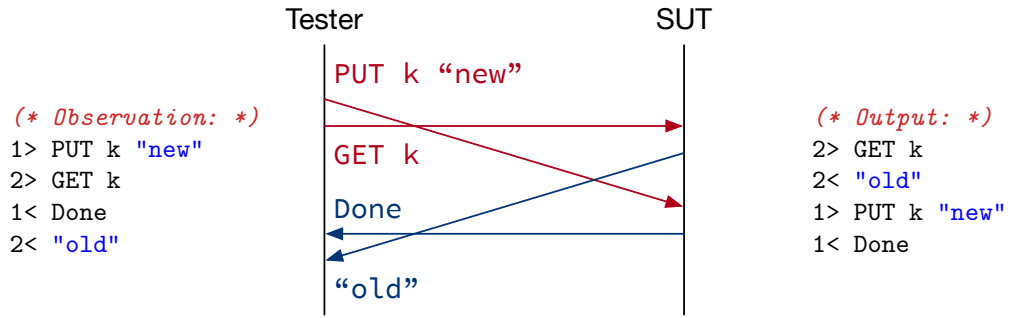}
  \end{minipage}\begin{minipage}[c]{.3\textwidth}
\begin{coq}
  (* Output: *)
  2> GET k
  2< "old"
  1> PUT k "new"
  1< Done
\end{coq}
  \end{minipage}
  \caption[Reordered trace upon network delays.]{Acceptable: The observation can
    be explained by a valid output reordered by the network environment.}
  \label{fig:reordered-trace}
\end{figure}
\begin{figure}
  \centering
  \begin{minipage}[c]{.3\textwidth}
\begin{coq}
  (* Observation: *)
  1> PUT k "new"
  1< Done
  2> GET k
  2< "old"
\end{coq}
  \end{minipage}\begin{minipage}[c]{.35\textwidth}
  \includegraphics[width=\linewidth]{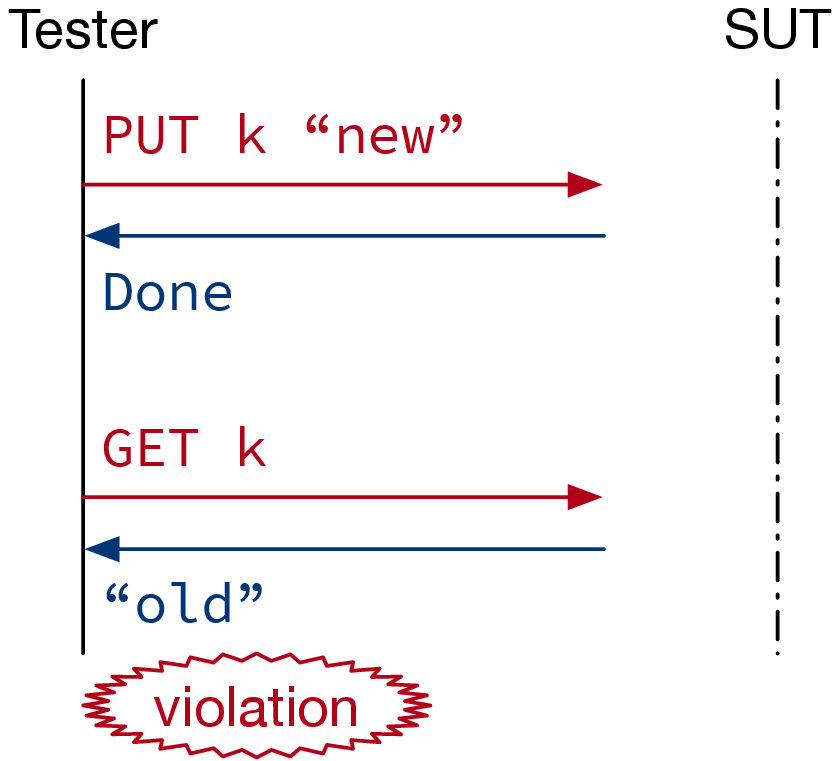}
  \end{minipage}\begin{minipage}[c]{.3\textwidth}\
  \end{minipage}
  \caption[Invalid trace that violates the specification.]{Unacceptable: The
    tester received the \ilc{Done} response before sending the \ilc{GET}
    request, thus the SUT must have processed the \ilc{PUT} request before the
    \ilc{GET} request.  Therefore, the \ilc{"old"} response is invalid.}
  \label{fig:invalid-trace}
\end{figure}
The tester shown in \autoref{sec:interactive-testing} runs one client at a time.
It waits for the response before sending the next request, as shown in
\autoref{fig:linear-trace}.  Such tester's observation is guaranteed identical
to the SUT's output, so it only needs to scan the requests and responses with
one stateful validator.

To observe the server's behavior upon concurrent requests, the tester needs to
simulate multiple clients, sending new requests before receiving previous
responses.  The network delay might cause the server to receive requests in a
different order from that on the tester side.  Conversely, responses sent by the
server might be reordered before arriving at the tester, as shown in
\autoref{fig:reordered-trace}.  Such an observation can be explained by various
outputs on the SUT side.  The validator needs to consider all possible outputs
that can explain such an observation and see if any one of them complies with
the specification.  If no valid output can explain the observation, then the
tester should reject the SUT, as shown in \autoref{fig:invalid-trace}.

We need to construct a tester that can handle external nondeterminism
systematically and provide a generic way for reasoning on the environment.

\section{Test Harness and Inter-execution Nondeterminism}
\label{sec:inter-execution-nondeterminism}
A good tester consists of (i) a validator that accurately determines whether its
observations are valid or not, and (ii) a test harness that can reveal invalid
observations effectively.  \autoref{sec:internal-external-nondeterminism} has
explained the challenges in the validator.  Here we discuss the test harness.

\subsection{Test harness}
Intuitively, a tester generates a test input and launches the test execution.
It then validates the observation and accepts/rejects the SUT, as shown in
\autoref{fig:gen-only}.
\begin{figure}
  \includegraphics[width=.4\linewidth]{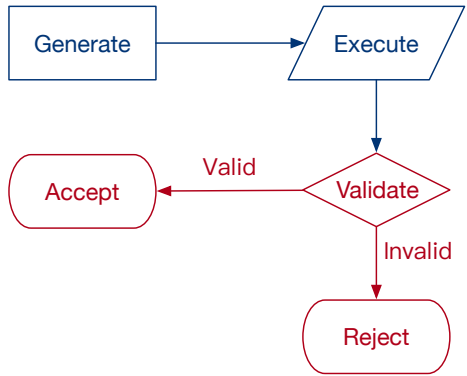}
  \caption{Simple tester architecture without shrinking.}
  \label{fig:gen-only}
\end{figure}

However, to achieve better coverage, a randomized generator might produce huge
test inputs~\cite{Hughes2016}.  Suppose the tester has revealed an invalid
observation after thousands of interactions; such a report provides limited
intuition of where the bug was introduced.  To help developers locate the bug
more effectively, the tester should present a {\em minimal counterexample} that
can reproduce the violation.  This is done by {\em shrinking} the failing input
and rerunning the test with the input's substructures.  As shown
in \autoref{fig:gen-shrink}, if a test input has no substructure that can cause
any failure, then we report it as the minimal counterexample.
\begin{figure}
\vspace*{2em}
  \includegraphics[width=.64\linewidth]{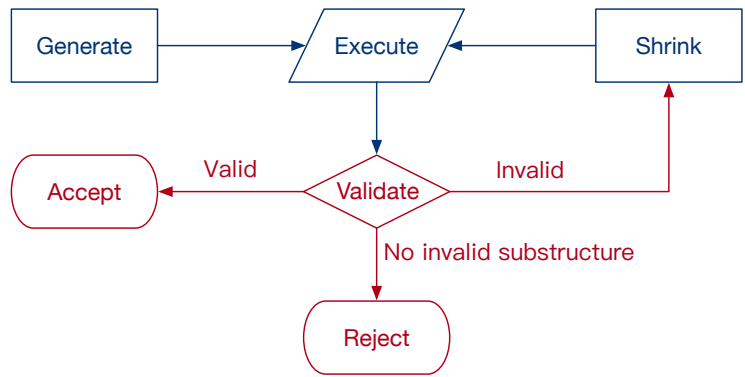}
  \caption{Tester architecture with shrinking mechanism.}
  \label{fig:gen-shrink}
\end{figure}

The test harness consists of generator, shrinker, and executor.  This thesis
studies the generator and the shrinker that produce the test input.

Interesting test inputs are those that are more likely to reveal invalid
observations.  Such a subset is usually sparse and cannot be enumerated within a
reasonable budget, \eg, in \autoref{sec:internal-nondeterminism}, it would be
infeasible to enumerate request ETags to match the target resources'.  The
tester needs to manipulate the inputs' distribution, by implementing {\em
heuristics} that emphasize certain input patterns, which is challenged by
another form of nondeterminism discussed as follows.

\subsection{Inter-execution nondeterminism}
\label{sec:inter-execution}
Consider \http, where requests may be conditioned over timestamps.  If a client
has cached a version with a certain timestamp, then it can send the timestamp as
the \inlinec{If-Modified-Since} precondition.  The server should not transmit
the request target's content if its \inlinec{Last-Modified} timestamp is not
newer than the precondition's:
\begin{lstlisting}[style=customc,escapeinside={(*}{*)}]
  /* Client: */
  GET /index.html HTTP/1.1
  If-Modified-Since: (*\httpdate\DayAfter[-1]~\currenttime *) GMT
                             /* Server: */
                             HTTP/1.1 200 OK
                             Last-Modified: (*\httpdate\today~\currenttime *) GMT
                             ... content of target ...
  /* Client: */
  GET /index.html HTTP/1.1
  If-Modified-Since: (*\httpdate\today~\currenttime *) GMT
                             /* Server: */
                             HTTP/1.1 304 Not Modified
\end{lstlisting}

In this scenario, an interesting candidate for the \inlinec{If-Modified-Since}
precondition in a test case is the \inlinec{Last-Modified} timestamp of a
previous response.  To emphasize this request pattern, the tester needs to
implement heuristics that generate test inputs based on previous observations.

In case the tester has revealed invalid observations from the server, it needs
to rerun the test with shrunk input.  The problem is that the timestamps on the
server might be different from the previous execution, so an interesting
timestamp in a previous run might become meaningless in this run.

The fact that a system may perform differently among executions is called {\em
inter-execution nondeterminism}.  Such nondeterminism poses challenges to the
input minimization process: To preserve the input pattern, the shrunk \http
request should use the timestamps from the new execution.  We need to implement
a generic shrinking mechanism that can reproduce the heuristics in the test
generator's design.

\section{State of the Art}
\label{sec:existing-work}
This section explains the context for this thesis.  I sketch the state of the
art in the relevant parts of the field of testing and describe its limitations
in addressing the challenges posed by nondeterminism.

\subsection{Property-based testing: QuickCheck}
Property-based testing~\cite{pbt} is a testing methodology for validating
semantic properties of programs' behavior.  The properties are specified as
executable boolean predicates over the behavior.  To check whether an SUT
satisfies a specification, the tester generates test input and executes the SUT
with the generated input.  The tester then observes the SUT's behavior and
computes the predicates with the observations.

Practices of property-based testing include QuickCheck for Haskell~\cite{qc} and
its variant QuickChick for Coq~\cite{quickchick}.  These tools can generate
random inputs that satisfy logical conditions~\cite{gengood} and integrate with
fuzz testing~\cite{fuzzchick} and combinatorial testing~\cite{judge-cover}, and
they have tested real-world systems like telecoms software~\cite{Quviq2006} and
Dropbox~\cite{testing-dropbox}.

\subsection{Model-based testing: TorXakis}
Instead of specifying predicates over systems' behavior, model-based
testing~\cite{broy2005model} defines an abstract model that produces valid behavior.  When
a tester observes an SUT's behavior, it checks whether the behavior is
producible by the specification model.

Practical tools for model-based testing include TorXakis~\cite{TorXakis}, whose
modeling language is inspired by Language of Temporal Ordering Specification
(LOTOS)~\cite{lotos}, the ISO standard for specifying distributed systems.  The
tool can compile process algebra specifications into tester programs and can be
used for testing DropBox~\cite{torxakis-dropbox}.

\subsection{Limitations}
In property-based testing, internal and external nondeterminism makes predicates
difficult to write, as discussed
in \autoref{sec:internal-external-nondeterminism}.  TorXakis provides limited
support for internal nondeterminism, allowing the specification to enumerate all
possible values of internal choices.  This works for scenarios where the space
of choices is small, \eg, within a dozen.  When testing ETags
in \autoref{sec:internal-nondeterminism}, it's infeasible to include a list of
all strings in the specification.

To handle inter-execution nondeterminism in QuickCheck, \citet{Hughes2016}
introduced abstract representations for generating and shrinking test inputs
that can adapt to different runtime observations.  His technique works for
synchronous interactions that blocks the tester to wait for observations, and
lacks support for asynchronous testing where the SUT's output may be
indefinitely delayed by the environment.

\section{Contribution}
\label{sec:contribution}
This thesis presents ``testing by dualization'' (TBD), a technique that
addresses the challenges in asynchronous testing caused by various forms of
nondeterminism.  I introduce symbolic languages for specifying the protocol and
representing test input, and I dualize the specification into the tester's (1)
validator, (2) generator, and (3) shrinker:

\begin{enumerate}
\item The specification is written as a reference implementation---a
  nondeterministic program that exhibits all possible behaviors allowed by the
  protocol.  Internal and external nondeterminism are represented by symbolic
  variables, and the space of nondeterministic behavior is defined by all
  possible assignments to the variables.

  For internal nondeterminism, the validator computes the symbolic
  representation of the SUT's output.  The symbolic output expectation is then
  {\em unified} against the tester's observations, reducing the problem of
  validating observations to constraint solving.

  For external nondeterminism, I introduce a model that specifies the
  environment.  The environment model describes the relation between the SUT's
  output and the tester's observations.  By composing the environment model with
  the reference implementation, we get a tester-side specification that defines
  the space of valid observations.
\item Test generation heuristics are defined as computations from observations
  to the next input.  To specify such heuristics in a generic way, I introduce
  intermediate representations for observations and test inputs, which are
  protocol-independent.

  Heuristics in this framework produce symbolic test inputs that are
  parameterized over observations.  During execution, the test harness computes
  the concrete input by {\em instantiating} the symbolic input's arguments with
  runtime observations.
\item The language for test inputs is designed with inter-execution
  nondeterminism in mind.  By instantiating the inputs' symbolic intermediate
  representation with different observations, the test harness gets different
  test inputs but preserves the pattern.

  To minimize counterexamples, the test harness only needs to shrink the inputs'
  symbolic representation.  When rerunning the test, the shrunk input is
  reinstantiated with the new observations, thus reproduces the heuristics by
  the test generator.
\end{enumerate}

\paragraph{Thesis claim}
Testing by dualization can address challenges in testing interactive systems
with uncertain behavior.  Specifying protocols with a symbolic reference
implementation enables validating observations of systems with internal and
external nondeterminism.  Representing test input and observations symbolically
allows generating and shrinking interesting test cases despite inter-execution
nondeterminism.  Combining these methods results in a rigorous tester that can
capture protocol violations effectively.

This claim is supported by the following publications:
\begin{enumerate}
\item {\it From C to Interaction Trees: Specifying, Verifying, and Testing a
  Networked Server}~\citep{cpp19}, with Nicolas Koh, Yao Li, Li-yao Xia, Lennart
  Beringer, Wolf Honor\'e, William Mansky, Benjamin C. Pierce, and Steve
  Zdancewic, where I developed a tester program based on a swap server's
  specification written as ITrees~\citep{itree}, and evaluated the tester's
  effectiveness by mutation testing.
\item {\it Verifying an HTTP Key-Value Server with Interaction Trees and
  VST}~\citep{itp21}, with Hengchu Zhang, Wolf Honor\'e, Nicolas Koh, Yao Li,
  Li-yao Xia, Lennart Beringer, William Mansky, Benjamin C. Pierce, and Steve
  Zdancewic, where I developed the top-level specification for \http, and
  derived a tester client that revealed liveness and interrupt-handling bugs in
  our HTTP server, despite it was formally verified.
\item {\it Model-Based Testing of Networked Applications}~\citep{issta21}, with
  Benjamin C. Pierce and Steve Zdancewic, which describes my technique of
  specifying \http with symbolic reference implementations, and from the
  specification, automatically deriving a tester program that can find bugs in
  Apache and Nginx.
%% \item {\it Testing by Dualization} (to be submitted to OOPSLA), a theory for
%%   interactive testing, explaining how to specify protocols using abstract model
%%   implementations, and how to guarantee the soundness and completeness of
%%   validators derived from the abstract model.
\end{enumerate}

\paragraph{Outline}
This thesis is structured as follows: \autoref{chap:theory} presents a theory
for synchronous testing, introduces a language family for representing
validators, and shows how to reason about their correctness.
\autoref{chap:dualize} applies the validator theory to a computation model
that exhibits internal nondeterminism, showing how to derive validators
automatically from protocol specifications.
\autoref{chap:practices} transitions from synchronous testing to asynchronous
testing, addressing external nondeterminism by specifying the environment.
\autoref{chap:harness} presents a mechanism for generating and shrinking test
inputs that address inter-execution nondeterminism.  To evaluate the techniques
proposed in this thesis, \autoref{chap:eval} applies them to testing web servers
and file synchronizers.  I then compare my technique with related works
in \autoref{chap:related-work} and discuss future directions in
\autoref{chap:discussion}.

\chpt{Validator Theory}
\label{chap:theory}
This chapter provides a theoretical view of synchronous testing that involves
internal nondeterminsm.  \autoref{sec:concepts} defines the basic concepts in
testing.  \autoref{sec:qac} introduces a language family for writing protocol
specifications and validators.  \autoref{sec:correctness} shows how to reason on
the soundness and completeness of validators with respect to the specification.

\section{Concepts}
\label{sec:concepts}
Testers are programs that determine whether implementations are compliant or
not, based on observations.  This section defines basic concepts and notations
in interactive testing.

\begin{definition}[Implementations and Traces]
  {\em Implementations} are programs that can interact with their environment.
  {\em Traces} are the implementations' inputs and outputs during execution.
  ``Implementation $i$ can {\em produce} trace $t$'' is written as ``$\behaves i
  t$''.
\end{definition}

This chapter focuses on synchronous testing and assumes no external
nondeterminism.  Here the tester's observation is identical to the
implementation's output, so the tester-side trace is the same as that on the
implementation side.  Asynchronous testing will be discussed in
\autoref{chap:practices}.

\begin{definition}[Specification, Validity, and Compliance]
  \label{def:compliance}
  A {\em specification} is a description of valid traces.  ``Trace $t$ is {\em
    valid} per specification $s$'' is written as ``$\valid s t$''.

  An implementation $i$ {\em complies} with a specification $s$ (written as
  ``$\complies s i$'') if it only produces traces that are valid per the
  specification:
  \[\complies s i\quad\triangleq\quad\forall t,(\behaves i t)\implies\valid s t\]
\end{definition}

\begin{definition}[Tester components and correctness]
  \label{def:tester}
  A tester consists of (i) a {\em validator} that accepts or rejects
  traces (written as ``$\accepts v t$'' and ``$\rejects v t$''), and
  (ii) a {\em test harness} that triggers different traces with
  various inputs.

  A tester is {\em rejection-sound} if it rejects only non-compliant
  implementations; it is {\em rejection-complete} if it can reject all
  non-compliant implementations, provided sufficient time of execution.  A
  tester is {\em correct} if is rejection-sound and -complete.\footnote{The
    semantics of ``soundness'' and ``completeness'' vary among contexts.  This
    thesis inherits terminologies from existing literature~\cite{Tretmans}, but
    explicitly uses ``rejection-'' prefix for clarity.  ``Rejection soundness''
    is equivalent to ``acceptance completeness'', and vice versa.}
\end{definition}

The tester's correctness is based on its components' properties: A
rejection-sound tester requires its validator to be rejection-sound; A
rejection-complete tester consists of (i) a rejection-complete validator and
(ii) an exhaustive test harness that can eventually trigger invalid traces.  The
validators' soundness and completeness are defined as follows:

\begin{definition}[Correctness of validators]
  A validator $v$ is {\em rejection-sound} with respect to specification $s$
  (written as ``$\rejSound v s$'') if it only rejects traces that are invalid
  per $s$:
  \[\rejSound v s\quad\triangleq\quad\forall t,\rejects v t\implies\invalid s t\]

  A validator $v$ is {\em rejection-complete} with respect to specification $s$
  (written as ``$\rejComplete v s$'') if it rejects all behaviors that are
  invalid per $s$:
  \[\rejComplete v s\quad\triangleq\quad\forall t,\invalid s t\implies\rejects v t\]
\end{definition}

The rest of this chapter shows how to construct specifications and validators,
and how to prove the validators' correctness with respect to the specifications.

\section{QAC Language Family}
\label{sec:qac}
In this section, I introduce the ``query-answer-choice'' (QAC) language family
for writing specifications and validators for network protocols that involve
internal nondeterminism.  Languages in the QAC family can specify protocols of
various message types, server states, and internal choices, by instantiating the
specification language with different type arguments.

\subsection{Specifying protocols with server models}
\label{sec:qac-model}
Network protocols can be specified with ``reference implementations'', \ie,
model programs that exhibit the space of valid behaviors.  For client-server
systems such as WWW, we can specify networked servers as programs that receive
queries and compute the responses.  Here I model the server programs with a data
structure called state monad.

\begin{definition}[State monad]
  Let $S$ be the state type, and $A$ be the result type.  Then type $(S\to
  A\times S)$ represents a computation that, given a pre state, yields a result
  and the post state.  This computation is pronounced a ``state monad with state
  type $S$ and result type $A$''.

  For example, let the state be a key-value mapping $(K\to V)$, then we can
  define \ilc{get} and \ilc{put} computations as follows:
  \begin{align}
    \tag{1}&\mathtt{get}:K\to((K\to V)\to V\times(K\to V))\\
    \tag{2}&\mathtt{get}(k)(f)\triangleq (f(k), f)\\
    \tag{3}&\mathtt{put}:K\times V\to((K\to V)\to ()\times(K\to V))\\
    \tag{4}&\mathtt{put}(k,v)(f)\triangleq((),\update f k v)
  \end{align}

  These function definitions should be read as:
  \begin{enumerate}
    \item The \ilc{get} function takes a key as argument, and constructs a
      state monad with state type $(K\to V)$ and result type $V$.
    \item Given argument $k$ of type $K$, $\mathtt{get}(k)$ takes a mapping $f$
      as pre state and yields the mapped value $f(k)$ as result.  The post state
      is the original mapping $f$ unchanged.
    \item The \ilc{put} function takes a key-value pair as argument, and
      constructs a state monad with state type $(K\to V)$ and result type
      ``$()$'' (unit type, which corresponds to \inlinec{void} return type in
      C/Java functions).
    \item Given argument $(k,v)$ of type $(K\times V)$, $\mathtt{put}(k,v)$
      takes a mapping $f$ as pre state and substitues its value at key $k$ with
      $v$.  The post state is the substituted mapping $\update f k v$.
  \end{enumerate}
\end{definition}

Now we can define the server model in terms of state monad:

\begin{definition}[Deterministic server model]
  \label{def:qaserver}
  A deterministic server is an infinite loop whose loop body takes a query and
  produces a response.  The server definition consists of the loop body and a
  current state:
  \[\mathsf{DeterministicServer}\triangleq\sigT{S}{(Q \to S \to A \times S) \times S}\]
  This type definition is pronounced as: A deterministic server has an initial
  state of some type $S$.  Its loop body takes a request of type $Q$ and
  computes a state monad with state type $S$ and result type $A$, where type $A$
  represents the response.

  Note that the server's state type is existentially quantified~\cite{tapl},
  while its query and response types are not.  This is because a protocol
  specification only defines the space of valid traces, and doesn't require the
  implementation's internal state to be a specific type.

  An instance of server model is written as:
  \[\existT S \sigma (\sstep,state_0)\]
  This expression is pronounced as: The server state is of type $\sigma$.  Its
  loop body is function $\sstep$ (which has type $Q\to\sigma\to A\times\sigma$)
  and its initial state is $state_0$ (which has type $\sigma$).
\end{definition}

For example, consider a compare-and-set (CMP-SET) protocol: The server stores a
number \inlinec n.  If the client sends a request that is smaller than \inlinec
S, then the server responds with \inlinec 0.  Otherwise, the server sets
\inlinec n to the request and responds with \inlinec 1:

\begin{minipage}{\linewidth}
\begin{cpp}
  int n = 0;
  while (true) {
    int request = recv();
    if (request <= n) send(0);
    else { n = request; send(1); }
  }
\end{cpp}
\end{minipage}

Such a server can be modelled as:
\begin{align*}
  \existT{S}{\Int}{(&\lam{(q)(n)}{\begin{cases}
        (0,s)&q\le n\\
        (1,q)&\mathrm{otherwise}
    \end{cases}}\\
    &,0)}
\end{align*}
In general, servers' responses and transitions might depend on choices that are
invisible to the testers, so called internal nondeterminism, as discussed in
\autoref{sec:internal-nondeterminism}.  I represent the space of nondeterminstic
behaviors by parameterizing it over the server's internal choice.

\begin{definition}[Nondeterministic server model]
  \label{def:server}
  A nondeterministic server is an infinite loop whose loop body takes a query
  and an internal choice to produce a response.  The nondeterministic server
  definition extends \autoref{def:qaserver} with a choice argument of type $C$:
  \[\Server\triangleq\sigT S{(Q\times C\to S\to A\times S)\times S}\]
\end{definition}

Consider changing the aforementioned CMP-SET into compare-and-reset (CMP-RST):
When the request is greater than \inlinec S, the server may reset \inlinec S to
any arbitrary number:

\begin{cpp}
  int arbitrary();
  int n = 0;
  while (true) {
    int request = recv();
    if (request <= n) send(0);
    else { n = arbitrary(); send(1); }
  }
\end{cpp}

Its corresponding server model can be written as:
\begin{align*}
  \existT{S}{\Int}{(&\lam{(q,c)(n)}{
      \begin{cases}
        (0,n)& q\le n\\
        (1,c)&\mathrm{otherwise}
      \end{cases}
    }\\
    &,0)}
\end{align*}
This model represents the space of uncertain behavior with the internal choice
parameter of type integer.  For any value $(c:\Int)$, the server is allowed to
reset $S$ to $c$.

\subsection{Valid traces of a server model}
By specifying protocols with server models, we can now instantiate the trace
validity notation ``$\valid s t$'' in \autoref{def:compliance} in terms of
operational semantics.

\begin{definition}[Server transitions]
  \label{def:server-step}
  Upon request $q$ and choice $c$, server model $s$ can step to $s'$ yielding
  response $a$ (written ``$\triggers sc{(q,a)}s'$~'') if and only if the
  response and the post model can be computed by the $\stepServer$ function:
\begin{align*}
  &\triggers sc{(q,a)}s'\quad\triangleq\quad\stepServer(q,c)(s)=(a,s')\\
  &\stepServer:Q\times C \to \Server \to A\times \Server \\
  &\stepServer(q,c)(s)\triangleq\\
  &\qquad\letin{(\existT{S}{\sigma}{(\sstep, state)})}{s}\\
  &\qquad\letin{(a,state')}{\sstep(q,c)(state)} \\
  &\qquad(a, \existT{S}{\sigma}{(\sstep, state')})
\end{align*}
The $\stepServer$ function takes a query and a choice and computes a state monad
with state type $\Server$ and result type $A$, by pattern matching on argument
$(s:\Server)$.  Let $\sigma$ be the server state type of $s$, $\sstep$ the loop
body, and $(state:\sigma)$ the current state of $s$.  Then $\stepServer(q,c)(s)$
computes the result $(a:A)$ and the post state $(state':\sigma)$ using the
$\sstep$ function, and substitutes the server's pre-step $state$ with the
post-step $state'$.
\end{definition}

\begin{definition}[Trace validity in QAC]
  \label{def:trace-validity}
  In the QAC language family, a trace is a sequence of $Q\times A$ pairs.  When
  specifying a protocol with a $\Server$ model, a trace $t$ is valid per
  specification $s$ if and only if it can be {\em produced} by the server model:
  \[\valid s t\quad\triangleq\quad\exists s',\behaves s t s'\]
  Here the producibility relation in \autoref{sec:concepts} is expanded with an
  argument $s'$ representing the post-transition state, pronounced
  ``specification $s$ can produce trace $t$ and step to specification $s'$~'':
  \begin{enumerate}
  \item A server model can produce an empty trace and step to itself:
    \[\behaves s \nil s\]
  \item A server model can produce a non-empty trace if it can produce the head
    of the trace and step to some server model that produces the tail of the
    trace:
    \[\behaves s {t+(q,a)} s_2\quad\triangleq\quad\exists s_1,c,\behaves s t s_1\wedge\triggers {s_1}c{(q,a)}s_2\]
  \end{enumerate}
\end{definition}

\subsection{Validating traces}
The validator takes a trace and determines whether it is valid per the protocol
specification.

\begin{definition}[Validator]
A validator is an infinite loop whose loop body takes a pair of query and
response and determines whether it is valid or not.  The validator iterates over
a state of some type $V$.  Given a $Q\times A$ pair, the loop body may return a
next validator state or return nothing, written as type ``$\option V$'':
\[\begin{array}{lll}
  \Validator&\triangleq&\sigT{V}{(Q\times A\to V\to\option V)\times V}\\
  \option X&\triangleq&\Some(x:X)\mid\None
\end{array}\]
\end{definition}

For example, a validator for the CMP-SET protocol is written as:
\begin{align*}
  \existT{V}{\Int}{(&\lam{(q,a)(v)}{
      \begin{cases}
        \If a\Is 1\Then\Some v\Else\None&q\le v\\
        \If a\Is 1\Then\Some q\Else\None&\mathrm{otherwise}
      \end{cases}
    }\\
    &,0)}
\end{align*}
Here the validator state is the same as the server model's.  The loop body computes
the expected response and compares it with the observed response.  If they are
the same, then the next server state is used as the next validator state.
Otherwise, the function returns $\None$, indicating that the response is
invalid.

Having defined the validator type, we can now instantiate the trace acceptation
notation ``$\accepts v t$'' in \autoref{def:tester} in terms of operational
semantics.

\begin{definition}[Validator transitions]
  \label{def:validator-step}
  Validator $v$ can consume request $q$ and response $a$ and step to $v'$
  (written ``$\behaves v{(q,a)}v'$~'') if and only if the post validator can be
  computed by the $\stepValidator$ function:
\begin{align*}
  &\behaves v{(q,a)}v'\quad\triangleq\quad\stepValidator(q,a)(v)=\Some v'\\
  &\stepValidator:Q\times A\to\Validator\to\option\Validator\\
  &\stepValidator(q,a)(\existT{V}{\beta}{(\vstep,state)})\\
  &\qquad\triangleq\begin{cases}
  \Some{(\existT{V}{\beta}{(\vstep,state')})} & \vstep(q,a,state)=\Some{state'} \\
  \None & \vstep(q,a,state)=\None
  \end{cases}
\end{align*}

The $\stepValidator$ function takes a query and a response, and computes the
validator transition by pattern matching on argument $(v:\Validator)$.  Let
$\beta$ be the validator state type of $v$, $\vstep$ be the loop body and
$(state:\beta)$ the current state of $v$.  Then $\stepValidator(q,a)(v)$ calls
the $\vstep$ function to validate the $Q\times A$ pair.  If the pair is valid,
then $\vstep$ returns a post-validation $state'$, which replaces the validator's
current $state$.  Otherwise, the validator halts with $\None$.
\end{definition}

\begin{definition}[Trace acceptance in QAC]
Validator $v$ accepts trace $t$ if and only if it {\em cosumes} the trace and
steps to some validator $v'$, written as ``$\behaves vtv'~$'':
\[\accepts v t\quad\triangleq\quad\exists v',\behaves v t v'\]
Here the consumability relation is defined as follows:
\begin{enumerate}
\item A validator can consume an empty trace and step to itself:
  \[\behaves v\nil v\]
\item A validator consumes a non-empty trace if it can consume the head of the
  trace, and step to some validator that consumes the tail of the trace:
  \[\behaves v {t+(q,a)} v_2\quad\triangleq\quad\exists v_1,\behaves v t v_1\wedge
  \behaves{v_1}{(q,a)}{v_2}\]
\end{enumerate}
\end{definition}

\section{Soundness and Completeness of Validators}
\label{sec:correctness}
We can now phrase the correctness properties in \autoref{sec:concepts} in terms
of the QAC language family:
\begin{enumerate}
  \item A rejection-sound validator consumes
    all traces that are producible by the protocol specification:
    \[\begin{array}{lrl}
      \rejSound v s&\triangleq&\forall t,\rejects v t\implies\invalid s t\\
      &\triangleq&\forall t,(\exists s',\behaves s t s')\implies\exists v',\behaves v t v'
    \end{array}\]
  \item A rejection-complete validator only
    consumes traces that are producible by the protocol specification:
    \[\begin{array}{lrl}
      \rejComplete v s&\triangleq&\forall t,\invalid s t\implies\rejects v t\\
      &\triangleq&\forall t,(\exists v',\behaves v t v')\implies\exists s',\behaves s t s'
    \end{array}\]
\end{enumerate}

Both the specification and the validator are infinite loops.  To show that the
validator consumes the same space of traces as the specification produces, we
need to show the correspondence between each server and validator step.  This is
done by introducing some loop invariant between the server and validator states
and showing that it is preserved by the server's and the validator's loop body.

This section shows how to prove that validator $\existT{V}{\beta}{(\vstep,v_0)}$
is sound and complete with respect to the server model
$\existT{S}{\sigma}{(\sstep,s_0)}$.  The core of the proof is the loop invariant
defined as a binary relation between the validator state $\beta$ and the server
state $\sigma$.  Notation ``$(\Reflects v s)$'' is pronounced ``validator state
$v$ simulates server state $s$''.

\subsection{Proving rejection soundness}
\label{sec:qac-soundness}
To prove that any trace producible by the server is consumable by the validator,
I perform forward induction on the server's execution path and show that every
step has a corresponding validator step:

\begin{itemize}
\item The initial server state $s_0$ simulates the initial validator state $v_0$:
  \begin{equation}
    \tag{RejSound-Init}
    \label{eq:rs1}
    \Reflects{v_0}{s_0}
  \end{equation}
\item Any server step $\triggers sc{(q,a)}s'$ whose pre-execution state $s$
  reflects some pre-validation state $v$ can be consumed by the validator
  yielding a post-validation state $v'$ that reflects the post-execution state
  $s'$:
  \begin{align*}
    \tag{RejSound-Step}
    \label{eq:rs2}
    &\forall(q:Q)(c:C)(a:A)(s,s':\sigma)(v:\beta),\\
    &\triggers sc{(q,a)}s'\;\wedge\;\Reflects{v}{s}\\
    &\implies\exists(v':\beta),\;\behaves v{(q,a)}v'\;\wedge\;\Reflects{v'}{s'}
  \end{align*}
  \begin{center}
    \includegraphics[width=.5\textwidth]{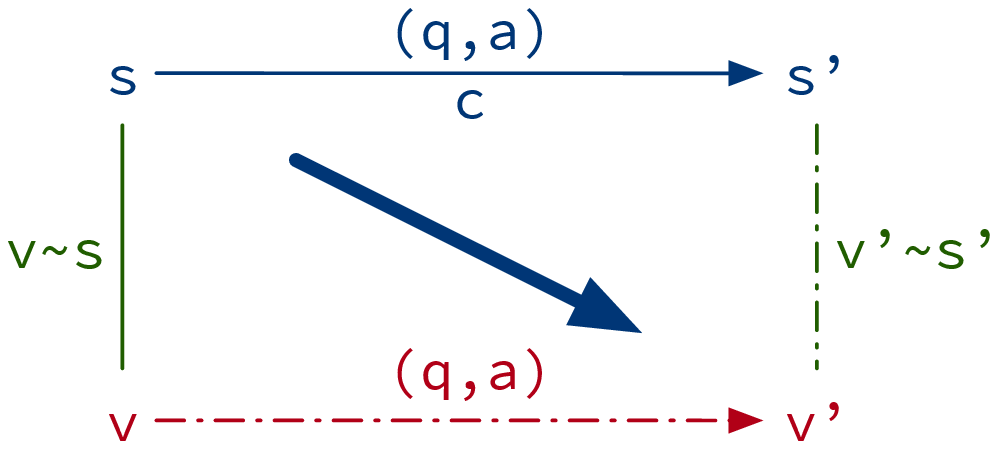}
  \end{center}
\end{itemize}

Here syntax ``$\triggers sc{(q,a)}s'$~'' and ``$\behaves v{(q,a)}v'$~'' are
simplified from \autoref{def:server-step} and \autoref{def:validator-step},
representing the server and validator instances by their states.  This is
because their state types $\sigma,\beta$ and step functions $\sstep,\vstep$
remain constant over the transitions.

\subsection{Proving rejection completeness}
\label{sec:qac-completeness}
Rejection completeness says that any trace consumable by the validator is
producible by the server model.  I construct the server's execution path
$\behaves sts'$ by {\em backward} induction of the validation path $\behaves
vtv'$:
\begin{itemize}
\item Any accepting validator step $\behaves v{(q,a)}v'$ has some server
  state $s'$ that reflects the post-validation state $v'$:
  \begin{align*}
    \tag{RejComplete-End}
    \label{eq:rc1}
    \forall(q:Q)(a:A)(v, v':\beta),\;&\behaves v{(q,a)}v'\\
    &\implies\exists s':\sigma,\Reflects{v'}{s'} 
  \end{align*}
  This gives us a final server state from which we can construct the server's
  execution path inductively.

\item Any accepting validator step $\behaves v{(q,a)}v'$ whose
  post-validation state $v'$ reflects some post-execution server state $s'$
  has a corresponding server step from a pre-execution state $s$
  that reflects the pre-validation state $v$:
  \begin{align*}
    \tag{RejComplete-Step}
    \label{eq:rc2}
    &\forall(q:Q)(a:A)(v,v':\beta)(s':\sigma),\\
    &\vstep(q,a,v)=\Some{v'}\wedge\Reflects{v'}{s'}\\
    &\implies\exists(s:\sigma)(c:C),\sstep(q,c,s)=(a,s')\wedge\Reflects{v}{s}
  \end{align*}
  \begin{center}
    \includegraphics[width=.5\textwidth]{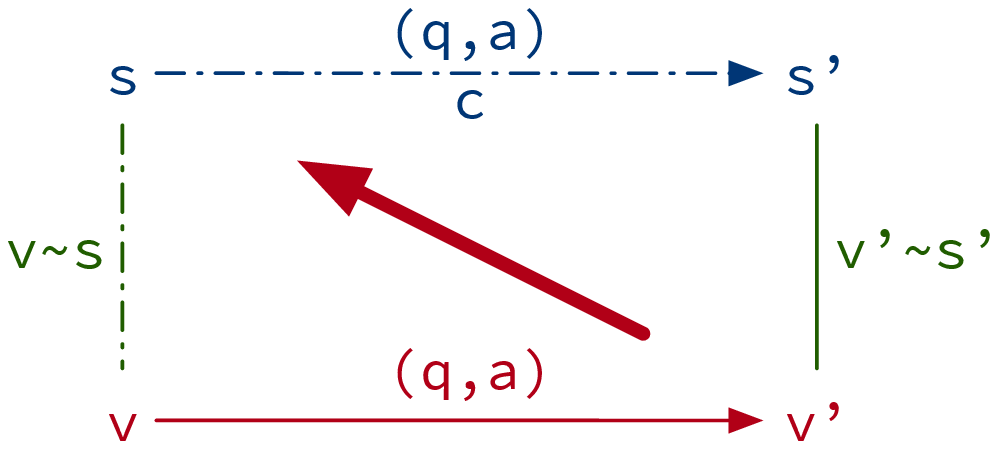}
  \end{center}
  This allows us to inducively construct an execution path starting from some
  server state that reflects the initial validator state $v_0$.  To show the
  existence of an execution path that starts from the initial server state
  $s_0$, we need the following hypothesis:

\item The initial validator state $v_0$ only reflects the initial server state $s_0$:
  \begin{equation}
    \tag{RejComplete-Init}
    \label{eq:rc3}
    \{s\mid\Reflects{v_0}{s}\}=\{s_0\}
  \end{equation}
\end{itemize}

Rejection soundness is proven by forward induction, while rejection completeness
is proven by backward induction.  This is because the choice $C$ is known from
the server step, but unknown from the validator step: Given a validator step, we
cannot predict ``what choices the server will make in the future'', but we can
analyze ``what choices the server might have made in the past''.  This proof
strategy is formalized in the Coq proof assistant and will be demonstrated with
an example in \autoref{sec:proof}.

\chpt{Synchronous Validator by Dualization}
\label{chap:dualize}
As discussed in \autoref{sec:internal-external-nondeterminism}, nondeterminism
makes validators difficult to write.  To address this challenge, I construct
validators {\em automatically} from their specifications.  The key idea is to
encode the specification with a programming language, and {\em dualize} the
specification program to derive a validator.

This chapter demonstrates the dualization technique with a programming language
in the QAC family.  \autoref{sec:encode-spec} introduces the $\Prog$ language
for encoding specifications.  Specifications written in $\Prog$ are dualized
into validators in \autoref{sec:dualize-prog}, with correctness proven in Coq
in \autoref{sec:proof}.

\section{Encoding Specifications}
\label{sec:encode-spec}
Constructing the validator automatically requires analyzing the computations of
the specification program.  The QAC language family in \autoref{sec:qac} only
exposes a state monad interface for server models, which is a function that
cannot be destructed within the meta language to perform program analysis.  This
section introduces a programming language for encoding specifications whose
structures can be analyzed.

For readability, I demonstrate the dualization technique on a subset of QAC
server models called integer machine models, featuring random-access memory (RAM)
and arithmetic operations.  To test real-world systems like web servers, I'll
employ a more complex specification language in \autoref{chap:practices}.

\subsection{Integer machine model}
The server state of an integer machine model is a key-value mapping that
resembles a RAM.  The addresses are natural numbers, and the data are integers.
The initial server state has zero data in all addresses:
\begin{align*}
  &s_0:\Nat\to\Int\\
  &s_0=(\_\mapsto0)\\
  &\ie\text{, }\forall (k:\Nat), s_0!k=0
\end{align*}
Here syntax ``$s!k$'' is pronounced ``data stored in address $!k$ of memory
$s$''.  I use ``$!k$'' to indicate that the natural number $k$ is being thought
of as an address.

The server's queries, responses, and choices ($Q$, $A$, $C$) are of type integer.
At the beginning of each server loop, the internal choice is written to address
$!0$, and the query is written to address $!1$.  The server then performs some
computation $f:(\Nat\to\Int)\to(\Nat\to\Int)$ that manipulates the memory, and
then sends back the value stored in address $!1$ as the response:
\begin{align*}
  \sstep_f(q,c)(s)\quad\triangleq\quad&\letin{s_1}{\update s0c}\\
  &\letin{s_2}{\update{s_1}1q}\\
  &\letin{s_3}{f(s_2)}\\
  &(s_3!1,s_3)
\end{align*}

Each memory-manipulating computation $f$ defines an instance of the integer
machine model:
\[\existT{S}{\Nat\to\Int}{(\sstep_f,s_0)}\]

Dualizing an integer machine model requires structural analysis of its memory
manipulation.  Next, I'll introduce a programming language to encode computations
$(\Nat\to\Int)\to(\Nat\to\Int)$.

\subsection{The $\Prog$ modeling language}
\label{sec:prog-lang}
\paragraph{Syntax}
A program in the $\Prog$ language may read and write at any address of the
memory, perform arithmetic operations, and make conditional branches:
\[\begin{array}{lrll}
\Prog&\triangleq&\Return&\text{end computation}\\
&\mid&!dst\coloneqq\Sexp;\Prog&\text{write to address }dst\in\Nat\\\null
&\mid&\If\Sexp\le\Sexp\Then\Prog\Else\Prog&\text{conditional branch}\\
\Sexp&\triangleq&\Int&\text{constant integer}\\
&\mid&!\src&\text{read from address }\src\in\Nat\\
&\mid&\Sexp\oop\Sexp&op\in\{+,-,\times,\div\}
\end{array}\]

For example, the following program computes the absolute value of data stored in
$!1$ and stores it in address $!1$:
\[\begin{array}{ll}
  \If&!1\le0\\
  \tthen&!1\gets(0-!1);\;\Return\\
  \eelse&\Return
\end{array}\]

\paragraph{Semantics}
Each program $(p:\Prog)$ specifies a computation on the integer machine:
\[\begin{array}{ll}
\multicolumn{2}{l}{\Eval\;:\;\Prog\to(\Nat\to\Int)\to(\Nat\to\Int)}\\
\Eval(p)(s)&\triangleq
\begin{cases}
  s&p\Is\Return\\
  \Eval(p')(\update{s}{\dst}{e^s})&p\Is !\dst\coloneqq e;p'\\
  \Eval(\If {e_1}^s\le{e_2}^s\Then p_1\Else p_2)(s)&p\Is\If e_1\le e_2\Then p_1\Else p_2
\end{cases}\\
e^s&\triangleq
\begin{cases}
  z\hphantom{\Exec(\If {e_1}^s\le{e_2}^s\Then p_1\Else p_2,)}&e\Is z:\Int\\
  s!\src&e\Is !\src\\
  {e_1}^s\oop{e_2}^s&e\Is e_1\oop e_2
\end{cases}
\end{array}\]

Here ``$e^s$'' is pronounced ``evaluating server expression $(e:\Sexp)$ on
memory $(s:\Nat\to\Int)$''.  It substitutes all occurences of ``$!\src$'' with
the data stored in address $!\src$ of memory $s$.

Syntax ``$s[k\mapsto v]$'' is pronounced ``writing value $v$ to address $!k$ of
memory $s$''.  It produces a new memory that maps address $!k$ to $v$, while
other addresses remain unchanged from $s$:
\[s[k\mapsto v]\quad\triangleq\quad k'\mapsto\begin{cases}v&k'=k\\
s!k'&k'\neq k\end{cases}\]

\paragraph{From $\Prog$ to server model}
Every program in the $\Prog$ language corresponds to a server model that
performs the computations it specifies:
\begin{align*}
  &\serverOf:\Prog\to\Server\\
  &\serverOf(p)\;\triangleq\;\existT{S}{\Nat\to\Int}{(\sstep_{\Eval(p)},s_0)}
\end{align*}

For example, the CMP-RST protocol in \autoref{sec:qac-model} can be constructed
by applying $\serverOf$ to the following program:
\[\begin{array}{ll}
\If !1\le!2\Then !1\coloneqq0;\Return&(1)\\
\eelse!1\coloneqq1;!2\coloneqq!0;\Return&(2)
\end{array}\]
The constructed server stores its data $n$ in address $!2$.  When the query
stored in $!1$ is less than or equal to $!2$ (case 1), the server writes $0$ to
address $!1$ as response and leaves the data untouched in address $!2$.  For
queries greater than $!2$ (case 2), the server writes $1$ as response, and
updates the data in $!2$ with the internal choice stored in $!0$.

Based on specifications written in the $\Prog$ language, we can now construct
the validator automatically by dualization.

\section{Dualizing Specification Programs}
\label{sec:dualize-prog}
This section presents an algorithm that constructs a validator from the
specification program:
\[\validatorOf:\Prog\to\Validator\]

For every program $p$, $\validatorOf(p)$ determines whether a trace is
producible by $\serverOf(p)$:
\begin{align*}
  &\forall (p:\Prog)(t:\List(Q\times A),\\
  &(\behaves{\validatorOf(p)}t)\;\iff\;(\behaves{\serverOf(p)}t)
\end{align*}

More specifically, given a trace of $Q\times A$ pairs, the validator determines
whether there exists a sequences of internal choices $C$ that explains how the
server produces the trace in \autoref{def:trace-validity}.

The idea is similar to the \ilc{tester} in \autoref{sec:interactive-testing},
which \ilc{validate}s the trace by executing the \ilc{serverSpec} and comparing
the expected response against the tester's observations.

However, executing a nondeterministic specification does not produce a specific
expectation of response, but a nondeterministic response that depends on the
internal choice.  Therefore, upon observing a response $A$, the validator should
determine whether there is a choice $C$ that leads the specification to produce
this response.

This reduces the trace validation problem to constraint solving.  The validator
maintains a set of constraints that require the responses observed from the
implementation to be explainable by the specification.

More specifically, the validator executes the $\Prog$ model and represents
internal choices with {\em symbolic variables}.  These variables are carried
along the program execution, so the expected responses are computed as {\em
  symbolic expressions} that might depend on those variables.  The validator
then constrains that the symbolic response is equal to the concrete observation.

To achieve this goal, the validator stores a symbolic variable for each address
of the server model.  It also remembers all the constraints added as
observations are made during testing.  This information is called a ``validation
state'':
\[(\Nat\to\Var)\times\Set\constraint\]
Here the $\constraint$s are relations between validator expressions ($\Vexp$s)
that may depend on symbolic $\Var$iables:
\[\begin{array}{lrl@{\qquad}l}
\constraint&\triangleq&\Vexp\ccmp\Vexp&cmp\in\{<,\le,=\}\\
\Vexp&\triangleq&\Int&\text{constant integer}\\
&\mid&\#x&\text{variable }x\in\Var\\
&\mid&\Vexp\oop\Vexp&op\in\{+,-,\times,\div\}\\
\end{array}\]
The $\Vexp$ type replaces $\Sexp$'s address constructor $!\src$ with variable
constructor $\#x$.  This allows the validator to constrain the values of the
same address at different times \eg the internal choice stored in $!0$ in
different iterations.  The validation state maps each address to its current
representing variable $(\Nat\to\Var)$, which updates as the validator
symbolically executes the server program.

Note that the server program in \autoref{sec:prog-lang} has conditional
branches.  When executing the specification program, the branch condition might
depend on the internal choices, which is invisible to the validator.  As a
result, the validator doesn't know the exact branch taken by the specification,
so it maintains multiple validation states, one for each possible execution
path:
\[\Set((\Nat\to\Var)\times\Set\constraint)\]

The initial state of the validator is a single validation state that corresponds
to the specification's initial state:
\[\{(\_\mapsto\#0,\{\#0=0\})\}\]
Here the initial validation state says, ``all addresses are mapped to variable
$\#0$, and the value of variable $\#0$ is constrained to be zero''.  This
reflects the initial server state that maps all addresses to zero value.

The validator's loop body is derived by analyzing the computations of the server
model.

\begin{enumerate}
\item \label{rule:write} When the server performs a write operation
  $!\dst\coloneqq \mathit{exp}$, the validator creates a fresh variable $x$ to
  represent the new value stored in address $!\dst$, and adds a constraint that
  says $x$'s value is equal to that of $\mathit{exp}$.  This rule also applies
  to writing the request to address $!1$ before executing the program.

  For example, if the server performs $!2\gets!0$, then the validator should
  step from $(\vs,\cs)$ to $(\update{\vs}{2}{x},\cs\cup\{\#x=\#(\vs!0)\})$,
  where $\#x$ is a fresh variable.  The new validation state says the value
  stored in address $!2$ is represented by variable $\#x$.  It also constraints
  that $\#x$ should be equal to $\#(\vs!0)$, the variable that represents
  address $!0$ in the pre-validation state $\vs$.
\item \label{rule:branch} When the server makes a nondeterministic branch $\If
  e_1\le e_2\Then p_1\Else p_2$, the validator considers both cases: (a) if
  $p_1$ was taken, then the validator should add a constraint $e_1\le e_2$; or
  (b) if $p_2$ was taken, then the validator should add constraint $e_2<e_1$.
\item \label{rule:choice} Before executing the program, the server writes the internal
    choice $c$ to address $!0$.  Accordingly, the validator creates a fresh
    variable to represent the new value stored in address $!0$, without adding
    any constraint.
\item \label{rule:return} After executing the program, the server sends back the
  value stored in $!1$ as response.  Accordingly, the validator adds a
  constraint that says the variable representing address $!1$ is equal to the
  observed response.
\item \label{rule:unsat} When the constraints of a validation state become
  unsatisfiable, it indicates that the server model cannot explain the
  observation.  This is because either (i) the observation is invalid, \ie,
  not producible by the server model, or (ii) the observation is valid, but was
  produced by a different execution path of the server model.

\item \label{rule:reject} If the set of validation states becomes empty, it
  indicates that the observation cannot be explained by any execution path of
  the specification, so the validator should reject the trace.
\end{enumerate}

\begin{figure}
\[\begin{array}{l@{\;}r@{\;}l}
\vstep_p(q,a)(v)&\triangleq&\letin{v'}{v_0\gets v;\vstep'_p(q,a)(v_0)}\\
&&\If v'\Is\varnothing\Then\None\Else\Some v'\hfill(\ref{rule:reject})\\
\vstep'_p(q,a)(v_0)&\triangleq&\letin{v_1}{\Havoc(0,v_0)}\\
&&\letin{v_2}{\Write(1,q,v_1)}\\
&&(\vs_3,\cs_3)\gets\Exec(p,v_2);\\
&&\letin{\cs_4}{\cs_3\cup\{\#(\vs_3!1)= a\}}\hfill(\ref{rule:return})\\
&&\If\solvable \cs_4\Then\{(\vs_3,\cs_4)\}\Else\varnothing\hfill(\ref{rule:unsat})\\
\Exec(p,(\vs,\cs))&\triangleq&\begin{cases}
  \{(\vs,\cs)\}&\text{if }p\Is\Return\\
  \Exec(p',\Write(d,e,(\vs,\cs)))&\text{if }p\Is(!d\coloneqq e;p')\\
  \left(\begin{array}{@{}l}
    \letin{v_1}{(\vs,\cs\cup\{{e_1}^{\vs}\le{e_2}^{\vs}\})}\\
    \letin{v_2}{(\vs,\cs\cup\{{e_2}^{\vs}<{e_1}^{\vs}\})}\\
    \Exec(p_1,v_1)\cup\Exec(p_2,v_2)\hfill(\ref{rule:branch})
  \end{array}\right)&\begin{array}{@{}l@{}l}\text{if }&p\Is\\
    &(\If e_1\le e_2\\
    &\tthen p_1\Else p_2)\end{array}
\end{cases}\\
\Write(d,e,(\vs,\cs))&\triangleq&\letin{x_e}{\Fresh (\vs,\cs)}\hfill(\ref{rule:write})\\
&&(\update{\vs}{d}{x_e},\cs\cup\{\#x_e=e^{\vs}\})\\
\Havoc(d,(\vs,\cs))&\triangleq&\letin{x_c}{\Fresh (\vs,\cs)}(\update{\vs}{d}{x_c},\cs)\hfill(\ref{rule:choice})
\end{array}\]
\caption[Dualizing server model into validator.]{Dualizing server model into
  validator, with derivation rules annotated.}
\label{fig:dualize}
\end{figure}

This mechanism is formalized in \autoref{fig:dualize}.  Here the notation
``$v_0\gets v;\vstep'_p(q,a)(v_0)$'' is a monadic bind for sets: Let $\vstep'_p$
map each element $v_0$ in $v$ to a set of validation states
$(\vstep'_p(q,a)(v_0):\Set((\Nat\to\Var)\times\Set\constraint))$, and return the
union of all result sets as $v'$.

The validator adds constraints in three circumstances: \autoref{rule:write} says
the write operation updates the destination with the source expression;
\autoref{rule:branch} guards the branch condition to match its corresponding
execution path; \autoref{rule:return} unifies the server's symbolic response
against the concrete response observed from the implementation.

The constraints added in \autoref{rule:write} and \autoref{rule:branch} are
translated from the specification program.  Given a server expression
$(e:\Sexp)$ from the source expression or the branch condition, syntax $e^\vs$
translates it into a validator expression $\Vexp$ using the validation state
$\vs$, by replacing its addresses with the corresponding variables:
\[e^{\vs}\triangleq\begin{cases}
  n&e\Is z:\Int\\
  \#(\vs!\src)&e\Is!\src\\
  {e_1}^{\vs}\oop{e_2}^{\vs}&e\Is e_1\oop e_2
\end{cases}\]

The validator assumes a constraint solver that can determine whether a set of
constraints is satisfiable, \ie, whether there exists an {\em assignment}
of variables $(\Var\to\Int)$ that satisfies all the constraints:
\begin{gather*}
  \forall \cs,\solvable \cs\iff\exists (\asgn:\Var\to\Int),\satisfy{\asgn}\cs\\
  \begin{array}{r@{\;}l}
    \satisfy{\asgn}\cs\triangleq&\forall(e_1\ccmp e_2)\in \cs, {e_1}^{\asgn}\ccmp{e_2}^{\asgn}\\
    e^{\asgn}\triangleq&\begin{cases}
      z&e\Is z:\Int\\
      \asgn!x&e\Is \#x\\
      {e_1}^{\asgn}\oop{e_2}^{\asgn}&e\Is e_1\oop e_2
    \end{cases}
  \end{array}
\end{gather*}
Here ``$e^{\asgn}$'' is pronounced ``evaluating validator expression $(e:\Vexp)$
with assignment $(\asgn:\Var\to\Int)$''.  It substitutes all occurences of
``$\#x$'' with their assigned value $(\asgn!x)$.

Now we have the algorithm that constructs the validator from the specification
program $p$:
\begin{align*}\validatorOf(p)\;\triangleq\quad&\existT{V}{\Set((\Nat\to\Var)\times\Set\constraint)}\\
  &(\vstep_p,\{(\_\mapsto\#0,\{\#0=0\})\})
\end{align*}

\begin{figure}
\begin{align*}
&\existT{V}{\Set((\Nat\to\Var)\times\Set\constraint)}\\
  &\begin{array}{rll}
     (&\lam{(q,a)(v)}{&\llet v'=\begin{array}[t]{@{}l@{}l@{}ll}
       \multicolumn{3}{@{}l}{(vs_0,cs_0)\gets v;}\\
       \letin{vs_1&}{\update{vs_0}{0}{\Fresh (vs_0,cs_0)}&}&\text{(1)}\\
       \letin{x_q&}{\Fresh (vs_1,cs_0)&}\\
       \letin{vs_2&}{\update{vs_1}{1}{x_q}&}\\
       \letin{cs_2&}{cs_0\cup\{\#x_q= q\}&}\\
       \letin{cs_{3a0}&}{cs_2\cup\{\#(vs_2!1)\le\#(vs_2!2))\}&}&\text{(2a)}\\
       \letin{x_{3a1}&}{\Fresh (vs_2,cs_{3a0})&}\\
       \letin{vs_{3a1}&}{\update{vs_2}{1}{x_{3a1}}&}\\
       \letin{cs_{3a1}&}{cs_{3a0}\cup\{\#x_{3a1}=0\}&}\\
       \letin{cs_{3b0}&}{cs_2\cup\{\#(vs_2!2)<\#(vs_2!1)\}&}&\text{(2b)}\\
       \letin{x_{3b1}&}{\Fresh (vs_2,cs_{3b0})&}\\
       \letin{vs_{3b1}&}{\update{vs_2}{1}{x_{3b1}}&}\\
       \letin{cs_{3b1}&}{cs_{3b0}\cup\{\#x_{3b1}=1\}&}\\
       \letin{x_{3b2}&}{\Fresh (vs_{3b1},cs_{3b1})&}\\
       \letin{vs_{3b2}&}{\update{vs_{3b1}}{2}{x_{3b2}}&}\\
       \letin{cs_{3b2}&}{cs_{3b1}\cup\{\#x_{3b2}=\#(vs_{3b2}!1)\}&}\\
       \multicolumn{3}{@{}l}{((vs_4,cs_4)\gets\{(vs_{3a1},cs_{3a1}),(vs_{3b2},cs_{3b2})\};}&\text{(3)}\\
       \letin{cs_5&}{cs_4\cup\{\#(vs_4!1)= a\}&}\\
       \multicolumn{3}{@{}l}{\If\solvable cs_5\Then\{(vs_4,cs_5)\}\Else\varnothing}\\
       \end{array}\\
       &&\iin\\
       &&\If v'\Is\varnothing\Then\None\Else\Some v'
     }\\
     ,&\multicolumn{2}{l}{\{(\_\mapsto\#0,\{\#0=0\})\}})
   \end{array}
\end{align*}
\caption[Validator for protocol CMP-RST.]{Validator for CMP-RST automatically
  derived from its specification in $\Prog$.  This program consists of three
  parts: (1) symbolizing the query and internal choice before executing the
  model, (2) considering both branches in the model program, propagating a
  validation state for each branch, and (3) filtering the validation states by
  constraint satisfiability, removing invalid states.}
\label{fig:derived-validator}
\end{figure}

For example, to construct a validator for the CMP-RST protocol in
\autoref{sec:qac-model}, we first specify it in $\Prog$ as:
\begin{align*}
  &\If!1\le!2\\
  &\tthen!1\gets0;\Return\\
  &\eelse!1\gets1;!2\gets!0;\Return
\end{align*}
This program stores the data $n$ in address $!2$.  If the request is less than
or equal to $n$, then the program writes response $0$ to address $!1$, and
leaves the data unchanged; Otherwise, it writes $1$ as response, and updates
address $!2$ with the internal choice in $!0$.

We then apply function $\validatorOf$ to this $\Prog$-based specification,
resulting in a validator as shown in \autoref{fig:derived-validator}.
Validators constructed in this way are proven correct in the next
section.

\section{Correctness Proof}
\label{sec:proof}
So far, I have introduced the $\Prog$ language for writing specifications and
shown how to construct a validator from it.  This section shows how to prove the
soundness and completeness of all validators dualized from $\Prog$-based
specifications:
\[\begin{array}{r@{\;}l}
\forall p:\Prog,&\letin{s}{\serverOf(p)}\\
&\letin{v}{\validatorOf(p)}\\
&\rejSound v s\wedge\rejComplete v s\\
&\ie\text{, }\forall t:\List(Q\times A),\\
&\qquad\valid s t\iff\accepts v t\\
&\qquad\ie\text{, }\exists s',\behaves s t s'\iff\exists v',\behaves v t v'
\end{array}\]

To apply the proof techniques in \autoref{sec:correctness}, I design the loop
invariant in \autoref{sec:proof-invariant}.  I then use the invariant to prove
the hypotheses for soundness and completeness in \autoref{sec:proof-sound}
and \autoref{sec:proof-complete}, respectively.  The entire proof is formalized
in the Coq proof assistant.

\subsection{Designing the loop invariant}
\label{sec:proof-invariant}
Let $\beta=\Set((\Nat\to\Var)\times\Set\constraint)$ be the validator state type
and $\sigma=\Nat\to\Int$ the server state type.  Then the loop invariant
$\Reflects{(v:\beta)}{(s:\sigma)}$ is a binary relation between the validator
state $v$ and the server state $s$.  To define this relation, I first discuss
the semantics of validator states.

Each validation state in the validator state consists of an address-variable
mapping and a set of constraints over the variables.  The validation state
defines a space of server states.  A validator accepts a trace if it has a
validation state whose constraints are satisfiable, which indicates the
existence of a server that produces the trace.  A corresponding server state can
be constructed from any assignment that satisfies the validation state's
constraints:
\[\vs^{\asgn}\quad\triangleq\quad \mathit{addr}\mapsto \asgn!(\vs!\mathit{addr})\]
Let $(\vs:\Nat\to\Var)$ be a mapping in the validation state,
$(\asgn:\Var\to\Int)$ be an assignment of variables, then
$(\vs^\asgn:\Nat\to\Int)$ is a server state that maps each address
$!\mathit{addr}$ to the value assigned by $\asgn$ to the address' corresponding
variable in $\vs$.

\begin{definition}[Loop invariant for $\Prog$-based validators]
Validator state $v$ reflects server state $s$ (written ``$\Reflects vs$'') if
the server state lies within the space of servers defined by some validation
state $(\vs,cs)\in v$.  That is, there exists an assignment that relates the
server state and the validation state:
\begin{align*}
\Reflects vs\quad\triangleq\quad&\exists(\vs,\cs)\in
  v,\;\exists\asgn,\\
  &\satisfy{\asgn} \cs\;\wedge\;\vs^{\asgn}=s
  \end{align*}
\end{definition}

Having defined the loop invariant, we only need to instantiate the hypotheses
in \autoref{sec:qac-soundness}--\ref{sec:qac-completeness} with $\Prog$-based
definitions.  The rest of this section proves the hypotheses for establishing
validators' soundness and completeness.

\subsection{Proving rejection soundness}
\label{sec:proof-sound}
\begin{lemma}[\ref{eq:rs1}]
\[\begin{array}{ll}
\text{If:}&
\vs=(\_\mapsto\#0)\qquad
\cs=\{\#0=0\}\qquad
s=(\_\mapsto0)\\
\text{Then:}&\Reflects{\{(\vs,\cs)\}}{s}
\end{array}\]
\end{lemma}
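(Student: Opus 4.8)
The plan is to unfold the definition of the loop invariant $\Reflects{\cdot}{\cdot}$ and supply explicit witnesses. Recall that $\Reflects{v}{s}$ asserts the existence of a validation state $(\vs,\cs)\in v$ together with an assignment $\asgn$ such that $\satisfy{\asgn}{\cs}$ and $\vs^{\asgn}=s$. Since the validator state in question is the singleton $\{(\vs,\cs)\}$, the existential over validation states has exactly one candidate witness, namely the pair $(\vs,\cs)$ given in the hypothesis. It then remains only to exhibit a suitable assignment and verify the two conjuncts.

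For the assignment I would take the constant-zero assignment $\asgn=(\_\mapsto0):\Var\to\Int$, which sends every variable to $0$. First I would check the satisfaction conjunct $\satisfy{\asgn}{\cs}$: the constraint set $\cs$ contains the single equation $\#0=0$, and evaluating both sides under $\asgn$ yields $\asgn!0=0$ on the left and the constant $0$ on the right, so the equation holds and hence all constraints in $\cs$ are satisfied. Next I would check the reflection conjunct $\vs^{\asgn}=s$: since $\vs=(\_\mapsto\#0)$ maps every address to the variable $0$, the induced server state $\vs^{\asgn}=\mathit{addr}\mapsto\asgn!(\vs!\mathit{addr})$ sends every address to $\asgn!0=0$, which agrees pointwise with $s=(\_\mapsto0)$. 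By functional extensionality the two memories are equal.

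There is essentially no obstacle here: this is the \emph{base case} of the soundness argument, and the statement reduces to a definitional unfolding with a trivially chosen witness. The only point needing minor care is the type distinction between variables and validator expressions --- $\vs$ ranges over $\Var$ while the equation $\#0=0$ lives in $\Vexp$ --- so I would make sure the witness assignment maps the single variable $0$, which appears both in $\vs$ and in $\cs$, consistently to the value $0$. Any assignment with $\asgn!0=0$ suffices, and the constant-zero assignment is the cleanest choice.
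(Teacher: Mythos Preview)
Your proposal is correct and follows essentially the same approach as the paper: both pick the constant-zero assignment $\asgn=(\_\mapsto0)$ as the witness and verify the two conjuncts by direct computation. Your write-up is slightly more detailed (noting functional extensionality and the $\Var$/$\Vexp$ distinction), but the argument is identical.
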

\begin{proof}
Since $(\vs,\cs)$ is the only element in the validator state, we only need to show
that:
\[\exists(\asgn:\Var\to\Int),\quad\satisfy{\asgn} \cs\;\wedge\;\vs^{\asgn}=s\]

By constructing the assignment as: \(\asgn=(\_\mapsto0)\)

We have: \(\#0^{\asgn}=0\)

Thus: \(\satisfy{\asgn} \cs\)

We also know that: \[\forall k, \quad\asgn!(\vs!k)=0=(s!k)\]

Therefore: \(\vs^{\asgn}=s\)
\end{proof}

\begin{lemma}[\ref{eq:rs2}]
  \begin{align*}
    &\forall(q,c,a:\Int)(s,s':\sigma)(v:\beta),\\
    &\triggers sc{(q,a)}s'\;\wedge\;\Reflects{v}{s}\\
    &\implies\exists v':\beta,\;\behaves v{(q,a)}v'\;\wedge\;\Reflects{v'}{s'}
  \end{align*}
\begin{proof}
The invariant $\Reflects{v}{s}$ tells us that $v$ contains a pre-validation
state that reflects the pre-step server state $s$:
\[\exists(\vs,\cs)\in v,\;\exists asgn,\quad\satisfy{\asgn} \cs\;\wedge\;{\vs}^{\asgn}=s\]

The $\vstep'_p$ function in \autoref{fig:dualize} leads to a set of
post-validation states.  We need to show that some state in it reflects the
post-step server state $s'$:
\[\exists(\vs',\cs')\in\vstep_p'(q,a)(\vs,\cs),\;\exists\asgn',\quad\satisfy{\asgn'}\cs'\;\wedge\;{\vs'}^{~\asgn'}=s'\]

The server's internal choice $c$ was provided, so we know the server's entire
execution path.  By induction on the $\Prog$ syntax, we can choose the right
post-validation state $(\vs',\cs')$ by computing each branch condition, and
deduce the satisfying assignment $\asgn'$ by computing each write operation's
source value and destination variable.
\end{proof}
\end{lemma}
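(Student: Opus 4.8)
The plan is to exploit the fact that, once the internal choice $c$ is fixed, the server step is fully deterministic: I can therefore trace the server's concrete execution by selecting a single symbolic path inside the validator's branching set, while maintaining an explicit assignment that I extend as fresh variables are minted. First I would unpack both hypotheses. From $\Reflects{v}{s}$ I extract a validation state $(\vs,\cs)\in v$ together with an assignment $\asgn$ such that $\satisfy{\asgn}{\cs}$ and $\vs^{\asgn}=s$. From $\triggers sc{(q,a)}s'$ I unfold $\stepServer$ into the concrete intermediate memories $s_1=\update s0c$, $s_2=\update{s_1}1q$, and $s_3=\Eval(p)(s_2)$, with $a=s_3!1$ and $s'=s_3$. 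The goal then reduces to exhibiting a post-validation state that survives the solvability filter of $\vstep'_p(q,a)(\vs,\cs)$ and reflects $s_3$.

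The workhorse is a sub-lemma relating the two evaluation judgments: for every server expression $e$, evaluating its translation under the current assignment agrees with evaluating $e$ on the reflected memory, i.e.\ $(e^{\vs})^{\asgn}=e^{(\vs^{\asgn})}$. This follows by a routine structural induction on $e$, the only interesting case being $!\src$, where $(\#(\vs!\src))^{\asgn}=\asgn!(\vs!\src)=(\vs^{\asgn})!\src$. Using it, I match the validator's opening operations to the server's. For $\Havoc(0,\cdot)$ I introduce the fresh variable $x_c$ and set $\asgn_1=\update{\asgn}{x_c}{c}$; freshness (so $x_c$ occurs in neither $\cs$ nor the range of $\vs$) preserves $\satisfy{\asgn_1}{\cs}$ and makes the updated mapping reflect $s_1$. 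The $\Write(1,q,\cdot)$ step is analogous, yielding an $\asgn_2$ under which the constraint $\#x_q=q$ holds and the mapping reflects $s_2$; call the resulting validation state $v_2$.

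The core is an induction on the syntax of $p$, carrying the invariant that the current validation state reflects the current server memory via a satisfying assignment extending $\asgn_2$. The $\Return$ case is immediate, since $\Exec$ returns a singleton and both sides are unchanged. For $!\dst\coloneqq e;p'$ the server moves to $\update s\dst{e^s}$, while $\Write$ introduces a fresh $x_e$ constrained by $\#x_e=e^{\vs}$; I extend the assignment by $x_e\mapsto e^{s}$, discharge the new constraint with the sub-lemma, and recurse on $p'$. For $\If e_1\le e_2\Then p_1\Else p_2$ I select exactly the arm that the server took: since $(e_1^{\vs})^{\asgn}=e_1^s$ and $(e_2^{\vs})^{\asgn}=e_2^s$, the guard attached to that arm ($e_1^{\vs}\le e_2^{\vs}$ in the then-branch, $e_2^{\vs}<e_1^{\vs}$ in the else-branch) is already satisfied by the current assignment, so the chosen subset of $\Exec(p,\cdot)$ still reflects $s$, and I recurse into the matching subprogram.

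After the induction I obtain $(\vs_3,\cs_3)\in\Exec(p,v_2)$ and an assignment $\asgn_3$ with $\satisfy{\asgn_3}{\cs_3}$ and $\vs_3^{\asgn_3}=s_3$. The return constraint evaluates under $\asgn_3$ to $\asgn_3!(\vs_3!1)=(\vs_3^{\asgn_3})!1=s_3!1=a$, so $\cs_4=\cs_3\cup\{\#(\vs_3!1)=a\}$ stays satisfiable; hence $\solvable{\cs_4}$ holds, $(\vs_3,\cs_4)$ passes the filter, and it lies in $v'=\vstep_p(q,a)(v)$, which is therefore non-empty and produces $\Some{v'}$, giving $\behaves v{(q,a)}v'$. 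The witness $(\vs_3,\cs_4)\in v'$ together with $\asgn_3$ then establishes $\Reflects{v'}{s'}$. I expect the main obstacle to be the freshness bookkeeping: the induction hypothesis must guarantee that every newly generated variable is genuinely absent from all accumulated constraints and from the ranges of the mappings, so that each assignment extension leaves earlier constraints intact and the reflected memory is perturbed only at the intended address. Threading this freshness discipline uniformly through the $\Havoc$, $\Write$, and branch cases is the delicate part; the expression sub-lemma and the per-case reflection arguments are otherwise mechanical.
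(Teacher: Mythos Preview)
Your proposal is correct and follows essentially the same approach as the paper: unpack the invariant, then use induction on the $\Prog$ syntax to select the validator branch matching the server's (now deterministic, since $c$ is known) execution path while extending the assignment at each fresh variable. The paper's proof is merely a sketch of this argument; you have supplied the details it elides, including the expression-evaluation sub-lemma $(e^{\vs})^{\asgn}=e^{(\vs^{\asgn})}$ and the explicit freshness bookkeeping, both of which are exactly what the paper's phrase ``deduce the satisfying assignment $\asgn'$ by computing each write operation's source value and destination variable'' is gesturing at.
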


\subsection{Proving rejection completeness}
\label{sec:proof-complete}

\begin{lemma}[\ref{eq:rc1}]
\begin{align*}
\forall(q,a:\Int)(v,v':\beta),\;&\behaves v{(q,a)}v'\\
&\implies\exists s':\sigma,\;\Reflects{v'}{s'}
\end{align*}
\begin{proof}
Since the $\vstep_p$ function in \autoref{fig:dualize} checks the nonemptiness
of the result, we know that $v'$ must be nonempty.  Consider validation state
$(\vs',\cs')\in v'$.  Since $\vstep'_p$ checks that $(\solvable \cs')$, we know
that:
\[\exists \asgn,\quad\satisfy{\asgn}cs'\]

Let:
\(s'=\vs'^{~\asgn}\)

Then we have:
\begin{align*}
&(\vs',\cs')\in v'\quad\wedge\quad\satisfy{\asgn}{\cs'}\quad\wedge\quad\vs'^{~\asgn}=s'\\
&\ie\text{, }\Reflects{v'}{s'}\tag*{\qedhere}
\end{align*}
\end{proof}
\end{lemma}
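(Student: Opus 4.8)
The plan is to exploit the fact that this obligation is strictly weaker than its companion step rule \eqref{eq:rc2}: here I only need to produce \emph{some} server state reflecting $v'$, with no requirement to reconstruct a matching server step or the choice that induced it. Consequently no induction on the $\Prog$ syntax is required; the witness can be read off directly from the two guards that $\vstep_p$ enforces in \autoref{fig:dualize}---the nonemptiness test at the top level of $\vstep_p$ and the satisfiability filter buried inside $\vstep'_p$.

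First I would unfold $\behaves v{(q,a)}v'$ via \autoref{def:validator-step} into the equation $\vstep_p(q,a)(v)=\Some{v'}$. Since $\vstep_p$ returns $\None$ precisely when the set-monadic union $\bigcup_{v_0\in v}\vstep'_p(q,a)(v_0)$ is empty, the fact that it returned $\Some{v'}$ forces $v'\neq\varnothing$, so I may pick an arbitrary element $(\vs',\cs')\in v'$. Next I would trace this element back through the bind: it must lie in $\vstep'_p(q,a)(v_0)$ for some $v_0\in v$, and inspecting $\vstep'_p$ shows that every state it emits has the shape $(\vs_3,\cs_4)$ that has already passed the $\solvable{\cs_4}$ check---the alternative branch yields $\varnothing$ and contributes nothing. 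Hence $\solvable{\cs'}$ holds. Invoking the constraint-solver specification then supplies an assignment $\asgn$ with $\satisfy{\asgn}{\cs'}$. Finally I set $s'\triangleq\vs'^{\asgn}$; by construction the three conjuncts $(\vs',\cs')\in v'$, $\satisfy{\asgn}{\cs'}$, and $\vs'^{\asgn}=s'$ all hold at once, which is exactly $\Reflects{v'}{s'}$.

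The only genuine work is propagating the satisfiability invariant through the set-monad bind, \ie establishing the auxiliary fact that membership in a union implies membership in one of the summands, so that any chosen $(\vs',\cs')\in v'$ inherits solvability from the guard inside $\vstep'_p$. In Coq this is a short lemma about the set monad rather than a structural induction, and that is precisely what makes \eqref{eq:rc1} far easier than \eqref{eq:rc2}: there one must rebuild a concrete server step together with a choice $c$ by backward induction on $p$, whereas here a single satisfying assignment is enough, with $\vs'^{\asgn}$ manufacturing the reflecting state for free.
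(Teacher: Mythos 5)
Your proposal is correct and follows essentially the same route as the paper's proof: nonemptiness of $v'$ from the top-level check in $\vstep_p$, solvability of $\cs'$ from the filter inside $\vstep'_p$, and then constructing the witness $s'=\vs'^{\asgn}$ from a satisfying assignment. Your extra care in tracing $(\vs',\cs')$ back through the set-monad bind just makes explicit a step the paper glosses over, and your observation that no induction on $\Prog$ is needed matches why the paper's argument here is so short compared to \eqref{eq:rc2}.
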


\begin{lemma}[\ref{eq:rc2}]
\begin{align*}
&\forall(q,a:\Int)(v,v':\beta)(s':\sigma),\\
&\behaves v{(q,a)}v'\;\wedge\;\Reflects{v'}{s'}\\
&\implies\exists(s:\sigma)(c:\Int),\;\triggers sc{(q,a)}s'\;\wedge\;\Reflects{v}{s}
\end{align*}
\begin{proof}
We first construct the pre-step server state $(s:\sigma\mid\Reflects{v}{s})$.
We then compute the internal choice $c$ and prove the server step $\triggers
sc{(q,a)}s'$.

The definition of $\Reflects{v'}{s'}$ says:
\[\exists (\vs',\cs')\in v',\;\exists \asgn,\quad \satisfy{\asgn}{\cs'}\;\wedge\;\vs'^{~\asgn}=s'\]

From the definition of $\vstep_p$, we know that:
\[\exists (\vs,\cs)\in v,\quad (\vs',\cs')\in\vstep'_p(q,a)(\vs,\cs)\]

Since $\vstep'_p$ monotonically increases the set of constraints, we have
$\cs\subseteq \cs'$.  Therefore: \[\satisfy{\asgn}{\cs}\]

Let: \(s=\vs^{\asgn}\)

Then we have:
\begin{align*}
&(\vs,\cs)\in v\quad\wedge\quad\satisfy{\asgn}{\cs}\quad\wedge\quad \vs^{\asgn}=s\\
&\ie\text{, }\Reflects{v}{s}
\end{align*}

From the definition of $\vstep'_p$, the validator first creates a fresh variable
to represent the server's internal choice, so we can deduce the internal choice
from the assignment:
\[x_c=\Fresh(\vs,\cs)\qquad c=\asgn!x_c\]

Now we need to show that the server's loop body $\sstep_p(q,c)(s)$ results in
response $a$ and post-execution state $s'$.  Since the post-validation state
$v'$ simulates $s'$ and guarantees the response to be $a$, we only need to prove
the post-execution state to be $s'$.

We have known that:
\[s=\vs^\asgn\qquad s'=\vs'^{~\asgn}\qquad\behaves{(\vs,\cs)}{(q,a)}{(\vs',\cs')}\]

Therefore, we can prove $\triggers sc{(q,a)}s'$ by induction on the $\Prog$
syntax, showing that every write or branch operation preserves $\asgn$'s ability
to unify the server state with the validation state.  This leads the
post-execution state to be unifiable against $\vs'$ using $\asgn$, thus to be
$s'$.
\end{proof}

The core of this proof is to find the internal choice $c$ for server step
$\triggers sc{(q,a)}s'$.  The choice was computed with the assignment deduced
from the loop invariant.  The assignment contains the value of all symbolic
variables, which includes all choices made by the server, past and future.  The
loop invariant requires the assignment to explain the past choices but cannot
predict future choices.  Therefore, we can only infer the server step from the
validator step using backward induction.
\end{lemma}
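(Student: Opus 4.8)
The plan is to reconstruct a single server step \emph{backward} from the given validator step, reusing the very assignment that already witnesses $\Reflects{v'}{s'}$. First I would unfold $\Reflects{v'}{s'}$ to obtain a validation state $(\vs',\cs')\in v'$ together with an assignment $\asgn$ such that $\satisfy{\asgn}{\cs'}$ and $\vs'^{~\asgn}=s'$. Since $\behaves v{(q,a)}v'$ means $v'$ is the nonempty union $\bigcup_{v_0\in v}\vstep'_p(q,a)(v_0)$, this $(\vs',\cs')$ descends from some $(\vs,\cs)\in v$ with $(\vs',\cs')\in\vstep'_p(q,a)(\vs,\cs)$. Inspecting $\vstep'_p$, $\Exec$, $\Write$, and $\Havoc$ in \autoref{fig:dualize}, every step only adds constraints, so $\cs\subseteq\cs'$; hence $\satisfy{\asgn}{\cs}$ as well. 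Taking $s=\vs^{\asgn}$ immediately gives $\Reflects{v}{s}$ with the same $\asgn$. The internal choice is read off from the fresh variable $x_c$ that $\Havoc$ introduced for address $!0$: I set $c=\asgn!x_c$, which is unconstrained and so may take exactly the value $\asgn$ assigns it.

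It then remains to verify the server step $\triggers sc{(q,a)}s'$, i.e. $\sstep_p(q,c)(s)=(a,s')$. The response component is $a$ because the return constraint $\#(\vs'!1)=a$ lies in $\cs'$ and $\satisfy{\asgn}{\cs'}$ forces $\asgn!(\vs'!1)=a$, which is exactly the value $\sstep_p$ reads from address $!1$. The real work is matching the post-state, and here the key supporting fact is a \emph{substitution lemma} relating the two encodings to direct evaluation: for every server expression $e$, $(e^{\vs})^{\asgn}=e^{\vs^{\asgn}}$. This holds by a one-line induction on $e$, since $!\src$ is sent to $\#(\vs!\src)$ and then to $\asgn!(\vs!\src)$, whereas $(\vs^{\asgn})!\src$ is the same value.

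Armed with this, I would prove by induction on the $\Prog$ syntax the bridge statement: for any $(\vs',\cs')\in\Exec(p,(\vs,\cs))$ with $\satisfy{\asgn}{\cs'}$, we have $\Eval(p)(\vs^{\asgn})=\vs'^{~\asgn}$. The $\Return$ case is immediate. In the write case $p=(!d\coloneqq e;p')$, the constraint $\#x_e=e^{\vs}$ added by $\Write$ together with the substitution lemma gives $\asgn!x_e=e^{\vs^{\asgn}}$, so $(\update{\vs}{d}{x_e})^{\asgn}=\update{\vs^{\asgn}}{d}{e^{\vs^{\asgn}}}$, matching one step of $\Eval$; the induction hypothesis for $p'$ closes the case. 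In the branch case $p=(\If e_1\le e_2\Then p_1\Else p_2)$, the element $(\vs',\cs')$ comes from exactly one branch, whose guard ($e_1^{\vs}\le e_2^{\vs}$ or $e_2^{\vs}<e_1^{\vs}$) sits in $\cs'$; satisfaction of that guard under $\asgn$, via the substitution lemma, forces $\Eval$ to take the same branch, and the induction hypothesis finishes it. Prefixing this with the $\Havoc$ step for $!0$ (so that $c=\asgn!x_c$) and the $\Write$ step for the query at $!1$ reconciles the very start of $\sstep_p$ with $\vstep'_p$, yielding $\Eval(p)(s)=\vs'^{~\asgn}=s'$ and hence the desired server step.

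I expect the main obstacle to be the branch case, or more precisely the fact that a single fixed assignment $\asgn$ must simultaneously satisfy all accumulated branch guards so that the server is pinned to exactly the one execution path recorded in $(\vs',\cs')$. The invariant that $\asgn$ satisfies the final $\cs'$---and therefore, by monotonicity, every intermediate constraint set along that path---is what rules out the server wandering onto a different branch. This is the crux of the \emph{backward}-induction strategy: because $c$ is baked into $\asgn$ from the outset, the past choices are fully determined, which is exactly why reconstruction succeeds here whereas a forward prediction of future choices would not.
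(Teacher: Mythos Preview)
Your proposal is correct and follows essentially the same approach as the paper: extract $(\vs',\cs')$ and $\asgn$ from $\Reflects{v'}{s'}$, trace back to $(\vs,\cs)\in v$, use constraint monotonicity to get $\satisfy{\asgn}{\cs}$, set $s=\vs^{\asgn}$ and $c=\asgn!x_c$, then establish the server step by induction on the $\Prog$ syntax. Your explicit substitution lemma $(e^{\vs})^{\asgn}=e^{\vs^{\asgn}}$ and the precise bridge statement for $\Exec$ versus $\Eval$ are exactly the details that the paper's proof sketch leaves implicit when it says ``every write or branch operation preserves $\asgn$'s ability to unify the server state with the validation state.''
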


\begin{lemma}[\ref{eq:rc3}]
\[\begin{array}{ll}
\text{If:}&
\vs=(\_\mapsto\#0)\qquad
\cs=\{\#0\equiv0\}\qquad
s_0=(\_\mapsto0)\\
\text{Then:}&\{s\mid\Reflects{\{(\vs,\cs)\}}{s}\}=\{s_0\}
\end{array}\]
\begin{proof}
The requirement for $s$ says:
\[\exists \asgn:\Var\to\Int,\quad\satisfy{\asgn}{\cs}\quad\wedge\quad \vs^{\asgn}=s\]

The constraint satisfaction tells us that:
\(\asgn!0=0\)

We then have:
\[\forall k:\Nat,\quad s!k=\asgn!(vs!k)=\asgn!0=0=s_0!k\]

Therefore, $s_0$ is the only server state that $(\vs,\cs)$ simulates.
\end{proof}
\end{lemma}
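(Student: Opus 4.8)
The plan is to prove the set equality $\{s\mid\Reflects{\{(\vs,\cs)\}}{s}\}=\{s_0\}$ by establishing the two inclusions separately, exploiting the fact that the sole validation state $(\vs,\cs)$ collapses every memory address onto the single variable $\#0$, whose value is pinned to $0$ by the singleton constraint set $\cs=\{\#0=0\}$.

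First I would prove the forward inclusion $\{s\mid\Reflects{\{(\vs,\cs)\}}{s}\}\subseteq\{s_0\}$. Taking an arbitrary $s$ with $\Reflects{\{(\vs,\cs)\}}{s}$ and unfolding the loop invariant over the singleton validator state, I obtain an assignment $\asgn$ with $\satisfy{\asgn}\cs$ and $\vs^{\asgn}=s$. Evaluating the lone constraint $\#0=0$ under $\asgn$ forces $\asgn!0=0$. Since $\vs=(\_\mapsto\#0)$ sends every address to $\#0$, the reconstructed state satisfies $s!k=\asgn!(\vs!k)=\asgn!0=0=s_0!k$ for every address $k$, so $s$ agrees with $s_0$ pointwise.

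Next I would prove the reverse inclusion $\{s_0\}\subseteq\{s\mid\Reflects{\{(\vs,\cs)\}}{s}\}$ by exhibiting the witnessing assignment $\asgn=(\_\mapsto0)$. It satisfies $\cs$ because $\#0$ evaluates to $\asgn!0=0$, and it reconstructs $s_0$ because $\vs^{\asgn}=(\mathit{addr}\mapsto\asgn!0)=(\_\mapsto0)=s_0$. This direction is precisely Lemma~\ref{eq:rs1} instantiated at the initial states, so it can be discharged by reuse rather than reproved.

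The only genuine obstacle is technical rather than conceptual: concluding the function equality $s=s_0$ from the pointwise agreement $\forall k,\;s!k=s_0!k$ requires functional extensionality, since server states inhabit the function type $\Nat\to\Int$. In the Coq development this is the standard extensionality axiom. Everything else is routine unfolding of the definitions of $\Reflects{}{}$, $\satisfy{}{}$, and $\vs^{\asgn}$, together with the two structural observations that $\cs$ holds exactly one constraint and that $\vs$ is a constant map; neither induction on the $\Prog$ syntax nor any appeal to the constraint solver is needed here.
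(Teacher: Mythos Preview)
Your proposal is correct and follows essentially the same argument as the paper: unfold the invariant on the singleton validator state, use the lone constraint to pin $\asgn!0=0$, and conclude pointwise equality with $s_0$. You are merely more explicit than the paper in treating the reverse inclusion (which indeed reduces to the earlier \ref{eq:rs1} lemma) and in flagging the appeal to functional extensionality; the paper's proof silently relies on both.
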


Now we have proven that all $\Prog$-based validators satisfy the hypotheses
defined in \autoref{sec:qac-soundness}--\ref{sec:qac-completeness}, and we can
conclude that the validators constructed by dualization are sound and complete.
Next, I'll show how to apply this dualization technique to test real-world
programs.

\chpt{Asynchronous Tester by Dualization}
\label{chap:practices}
So far I've introduced the theory of validating synchronous interactions using
the QAC language family, and shown how to construct validators by dualization
with a simple $\Prog$ language.

However, in real-world testing practices, there are more problems to consider.
For example: How to interact with the SUT via multiple channels?  How to handle
external nondeterminism?

As discussed in \autoref{sec:intro-external-nondet}, a networked server's
response may be delayed by the network environment, and an asynchronous tester
may send other requests rather than waiting for the response.  Therefore, we
cannot view the trace as a sequence of $Q\times A$ pairs like we did
in \autoref{def:trace-validity}, and the state monad in the QAC language family
becomes insufficient for defining the space of asynchronous interactions.

This chapter applies the idea of dualization to testing asynchronous systems.  I
transition from the QAC language family to the ITree specification language, a
data structure for modeling programs' asynchronous interactions in the Coq
proof assistant.  ITree provides more expressiveness than QAC and allows
specifying the external nondeterminism caused by the network environment.  The
ITree-based specifications are derived into tester programs that can interact
with the SUT and reveal potential defects.

\begin{figure}[t]
  \includegraphics[width=\linewidth]{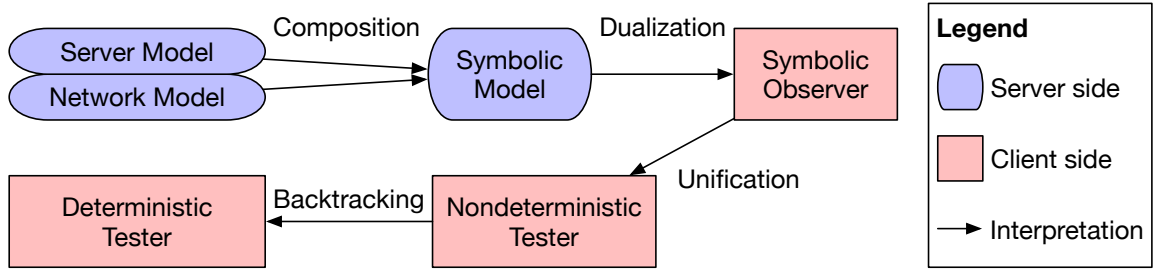}
  \caption{Deriving tester program from specification.}
  \label{fig:framework}
\end{figure}

\autoref{fig:framework} illustrates the derivation framework from ITrees to
testers.  \autoref{sec:itree} introduces the ITree language that encodes each
box in the framework.  \autoref{sec:internal-nondet} and
\autoref{sec:external-nondet} address internal and external nondeterminism in
the ITree context and interprets the ``server model'' into a ``nondeterministic
tester''.  \autoref{sec:backtrack} then explains how to execute the
nondterministic tester model as an interactive tester program that runs on
deterministic machines.

\section{From QAC to Interaction Trees}
\label{sec:itree}
To write specifications for protocols' rich semantics, I employed ``interaction
trees'' (ITrees), a generic data structure for representing interactive programs
in the Coq programming language, introduced by \citet{itree}.  I provide a brief
introduction to the ITree specification language in \autoref{sec:itree-lang}.

ITrees allow specifying protocols as monadic programs that model valid
implementations' possible behavior.  In \autoref{sec:qac-itree}, I show how to
embed specifications in the QAC family in terms of interaction trees.

The derivation from server specifications to interactive testers is by {\em
interpreting} ITree programs, which corresponds to the arrows
in \autoref{fig:framework}.  In \autoref{sec:interp}, I explain the mechanism of
interpretation by deriving testers from deterministic server models.  The rest
of this chapter then shows how to handle nondeterminism when interpreting
ITrees.

\subsection{Language definition}
\label{sec:itree-lang}
Consider an echo program, which keeps reading some data and writing it out
verbatim, until reaching EOF.  We can represent the program in Coq using the
ITrees datatype (left), with a reference in C (right) as:
\begin{multicols}{2}
\begin{coq}
  CoFixpoint echo :=
    c <- getchar;;
    if c is EOF
    then EXIT
    else
      putchar c;;
      echo.
\end{coq}
\columnbreak
\begin{cpp}
  void echo() {
    const char c = getchar();
    if (c == EOF)
      return;
    else {
      putchar(c);
      echo();
    }
\end{cpp}
\end{multicols}
Here the behavior after \ilc{getchar} depends on the value actually read.  The
monadic computation in Coq can be desugared into the following code, which uses
the \ilc{Bind} constructor to represent sequential composition of programs:
\begin{coq}
  CoFixpoint echo :=
    Bind getchar
         (fun c => if c is EOF
                 then EXIT
                 else Bind (putchar c)
                           (fun _ => echo)
         ).
\end{coq}
Here, the first argument of \ilc{Bind} is program that returns some value and
the second argument is a {\em continuation} that represents the subsequent
computation that depends on the value returned by the first argument.  Such
continuation-passing style can be represented as a tree of interactions.  To
help readers better understand the interaction tree language, I first provide a
simplified version of it that better shows its tree structure, and then explain
the actual type definition used in practice.

\begin{figure}
\begin{coq}
  CoInductive itreeM (E: Type -> Type) (R: Type) :=
    Ret     : R   -> itreeM E R
  | Trigger : E R -> itreeM E R
  | Bind    : forall {X : Type}, itreeM E X -> (X -> itreeM E R) -> itreeM E R.
\end{coq}
\vspace*{1em}
\caption{Mock definition of interaction trees.}
\label{fig:mock-itree}
\end{figure}

\begin{figure}
\vspace*{1em}
  \includegraphics[width=.5\linewidth]{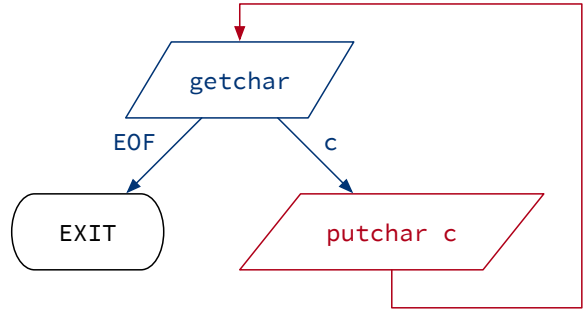}
  \caption{Interaction tree for echo program.}
  \label{fig:echo-itree}
\end{figure}

\paragraph{Mock interaction trees}
As shown in \autoref{fig:mock-itree}, a mock interaction tree (\ilc{itreeM}) has
two kinds of leaves, \ilc{Ret} and \ilc{Trigger}, and has internal nodes
constructed by \ilc{Bind}:
\begin{itemize}
\item \ilc{(Ret r)} represents a pure computation that yields a value \ilc r.
  In the echo example, \ilc{EXIT} halts the program with return value zero:
\begin{coq}
  Definition EXIT := Ret 0.
\end{coq}

\item \ilc{(Trigger e)} performs an impure event \ilc e and returns its result.
  Here \ilc{(e: E R)} is an event whose result is of type \ilc R.  For example,
  \ilc{getchar} has result type \ilc{char}, and \ilc{putchar}'s result type is
  \ilc{unit} (which corresponds to \inlinec{void} in C/C++, or \ilc{()} in
  Haskell).  These effective programs are constructed by triggering standard I/O
  events:
\begin{coq}
  Variant stdioE: Type -> Type := (* event type *)
    GetChar:         stdioE char
  | PutChar: char -> stdioE unit.
  
  Definition getchar           : itreeM stdioE char := Trigger  GetChar.
  Definition putchar (c: char) : itreeM stdioE unit := Trigger (PutChar c).
\end{coq}
\item \ilc{(Bind m k)} binds the return value of \ilc m to the continuation
  function \ilc k.  It first runs program \ilc m until it returns some value of
  type \ilc X.  The return value \ilc{(x: X)} then instantiates \ilc k into the
  following computation \ilc{(k x: itreeM E R)}.  This corresponds to the
  monadic \ilc{(;;)} syntax:
\begin{coq}
  Notation "x <- m1;; m2" := (Bind m1 (fun x => m2)).
  Notation "m1;; m2"      := (Bind m1 (fun _ => m2)).
\end{coq}

As illustrated in \autoref{fig:echo-itree}, each possible return value \ilc x is
an edge that leads to the child it instantiates, \ie, \ilc{(k x)}.  In this
way, the \ilc{Ret} and \ilc{Trigger} nodes are connected into a tree
structure.
\end{itemize}

The \ilc{CoInductive} keyword indicates that type \ilc{itreeM} can represent
infinite data.  For example, the aforementioned \ilc{echo} program (also
illustrated in \autoref{fig:echo-itree}) is an instance of \ilc{(itreeM stdioE
Z)} defined with keyword \ilc{CoFixpoint}, where \ilc Z indicates that the
program's result value (if it ever returns by \ilc{EXIT}) is of type integer.

The mock interaction tree provides an intuitive continuation-passing structure
for representing impure programs.  However, to implement dualization
effectively, we need to revise the language definition of monadic binding.

\paragraph{Practical interaction trees}

\begin{figure}
\begin{coq}
  CoInductive itree (E: Type -> Type) (R: Type) :=
    Pure   : R -> itree E R
  | Impure : forall {X : Type}, E X -> (X -> itree E R) -> itree E R.
\end{coq}
\caption{Formal definition of interaction trees (simplified).}
\label{fig:itrees}
\end{figure}

Instead of binding a program---which might have many events---to a continuation,
ITree binds a single impure event, as shown in \autoref{fig:itrees}.  I
use \ilc{(Impure e k)} to replace \ilc{(Bind (Trigger e) k)} representations
in \ilc{itreeM}.  A \ilc{Pure} computation cannot be bound to a continuation and
must be the leaf of an ITree.\footnote{For readability, the ``practical'' ITree
definition is a simplified version from \citet{itree}.  Here \ilc{Pure}
and \ilc{Impure} correspond to the \ilc{Ret} and \ilc{Vis} constructors.  I
dropped the \ilc{(Tau : itree E R -> itree E R)} constructor, which carries no
semantics and is used for satisfying Coq's guardedness condition.}

The \ilc{Ret}, \ilc{Trigger}, and \ilc{Bind} constructors introduced in
\ilc{itreeM} have equivalent representations in \ilc{itree}, so we can still
write programs in the monadic syntax:
\begin{coq}
  Definition ret {E R} : R -> itree E R := Pure.
  
  Definition trigger {E R} (e: E R) : itree E R := Impure e Pure.

  CoFixpoint bind {X E R} (m: itree E X) (f: X -> itree E R) : itree E R :=
    match m with
    | Pure   x   => f x
    | Impure e k => Impure e (fun r => bind (k r) f)
    end.

  Notation "x <- m1;; m2" := (bind m1 (fun x => m2)).
  Notation "m1;; m2"      := (bind m1 (fun _ => m2)).

  CoFixpoint translateM {E R} (m: itreeM E R) : itree E R :=
    match m with
    | Ret     r => ret r
    | Trigger e => trigger e
    | Bind m1 k => x <- translateM m1;; translateM (k x)
    end.
\end{coq}
\vspace*{1em}

\subsection{QAC in ITrees}
\label{sec:qac-itree}
The ITree specification language is a superset of the QAC language family.  Any
server model in \autoref{sec:qac-model} can be translated into an interaction
tree that receives requests and sends responses.

The deterministic server's event type is defined as follows:
\begin{coq}
  Variant ioE (I O: Type) : Type -> Type := (* I/O event *)
    Recv: qaE I                     (* receive an input  *)
  | Send: O -> qaE unit.            (* send    an output *)

  Definition qaE := ioE Q A.
\end{coq}
Here \ilc{ioE} is a parameterized event that takes types \ilc I and \ilc O as
arguments.  Type \ilc{qaE} is an instance of I/O event that receives requests of
type \ilc Q and sends responses of type \ilc A.

Given a loop body \ilc{sstep} and an initial state \ilc s, we can define the
interaction tree of the server model as a recursive program:
\begin{coq}
  CoFixpoint detServer (sstep: Q -> sigma -> A * sigma) (s: sigma) : itree qaE void :=
    q <- trigger Recv;;
    let (a, s') := sstep q s in
    trigger (Send a);;
    detServer sstep s'.
\end{coq}
The server program iterates over state \ilc{(s: sigma)}.  In each iteration, it
receives a request \ilc{(q: Q)} and computes the response \ilc{(a: A)} using the
state monad function \ilc{sstep}.  It then sends back the response and recurses
with the post state \ilc{(s': sigma)}.

To specify a nondeterministic server with internal choices of type $C$, I
introduce another event to represent its internal choices.
\begin{coq}
  Variant choiceE: Type -> Type :=
    Choice: choiceE C.    (* make an internal choice *)

  Definition qacE: Type -> Type := qaE +' choiceE.
\end{coq}
Here \ilc{qacE} is a sum type of \ilc{qaE} and \ilc{choiceE} events, meaning
that the server's events are either sending/receiving messages or making
internal choices.

At the beginning of each iteration, the nondeterministic server makes an
internal choice \ilc{(c: C)} as the seed of its \ilc{sstep} function:
\begin{coq}
  CoFixpoint server (sstep: Q -> C -> sigma -> A * sigma) (s: sigma) : itree qacE void :=
    c <- trigger Choice;;
    q <- trigger Recv;;
    let (a, s') := sstep q c s in
    trigger (Send a);;
    server sstep s'.
\end{coq}

Such server models can be derived into tester programs by interpretation.

\subsection{Interpreting interaction trees}
\label{sec:interp}
To interpret an ITree is to specify the semantics of its events.
For example, the \ilc{tester} program in \autoref{sec:interactive-testing} can
be constructed by interpreting a deterministic server model in
\autoref{sec:qac-itree}.

\paragraph{Tester event type}
An interactive tester generates requests and sends them to the SUT, receives
responses, and validates its observation against the specification:
\begin{coq}
  Variant genE: Type -> Type :=
    Gen : genE Q.          (* generate a request *)

  Variant exceptE: Type -> Type :=
    Throw: forall X, exceptE X. (* throw an exception *)

  Definition testE := ioE A Q +' genE +' exceptE.
\end{coq}
This type definition is pronounced as: ``The tester receives responses \ilc A as
input, sends requests \ilc Q as output, generates requests to send, or throws an
exception when it detects a violation in its observations.

\paragraph{Dualizing server events}
The tester is constructed by dualizing the server specification's behavior: When
the specification receives a request, the tester generates a request and sends
it to the SUT; When the specification sends a response, the tester expects to
receive the response from the SUT.  We can write this correspondence as a
function from server events to the tester's behaviors:
\begin{coq}
  Definition dualize {R} (e: qaE R) : itree testE R :=
    match e with
    | Recv   => q <- trigger Gen;;
                trigger (Send q);;
                ret q
    | Send a => a' <- trigger Recv;;
                if a' =? a
                then ret tt
                else trigger (Throw ("Expect " ++ a ++ " but observed " ++ a'))
    end.
\end{coq}
The \ilc{dualize} function takes an event and yields an ITree program.  It maps
the server's receive event to the tester's generate and send operations, and
vice versa, maps the server's send event to the tester's receive operation.  If
the tester's received response differs from that specified by the server model,
then the tester throws an exception to indicate a violation it detected.

\paragraph{Applying event handlers}
The \ilc{dualize} function is also called a {\em handler}.  It produces an ITree
that has the same result type as its argument event.  Therefore, we can
construct the tester by substituting the server's events with the operations
defined by the handler function.  This process of substituting events with the
results of handling them is called {\em interpretation}, and it is implemented
as shown below:
\begin{coq}
  CoFixpoint interp {E F T} (handler: forall {R}, E R -> itree F R) (m: itree E T)
             : itree F T :=
    match m with
    | Pure   r   => Pure r
    | Impure e k => x <- f e;;
                    interp handler (k x)
    end.

  Definition tester (sstep: Q -> sigma -> A * sigma) (s: sigma) : itree testE void :=
    interp dualize (detServer sstep s).
\end{coq}
Applying the dualization handler to the server model gives us an ITree program
that performs tester interactions.  The tester keeps generating, sending and
receiving messages until observing an unexpected response when it throws an
exception and rejects the SUT.  It is semantically equivalent to the
\ilc{tester} in \autoref{sec:interactive-testing}.

To derive testers from nondeterministic server models, I design more complex
tester events and handlers, as discussed in the following sections.

\section{Handling Internal Nondeterminism}
\label{sec:internal-nondet}
This section applies the idea of dualization in \autoref{chap:dualize} to the
ITree context, showing how to address internal nondeterminism by symbolic
evaluation based on ITree specifications.  It covers the derivation path from
``symbolic model'' to ``nondeterministic tester'' in \autoref{fig:framework},
using \http entity tags introduced in \autoref{sec:internal-nondeterminism} as
an example.

As discussed in \autoref{sec:encode-spec}, dualization requires refining the
representation of the server's computation, \eg, encoding its branches over
symbolic conditions.  This is done by designing ITrees' event types in
\autoref{sec:symbolic-model} and specifying the server's behavior with a
symbolic model.

The server specification is derived into a tester client by {\em interpreting}
interaction trees.  To interpret is to define semantic rules that transform one
ITree program into another and corresponds to the arrows in
\autoref{fig:framework}.  \autoref{sec:interp} explains the interpretation of
ITrees.

The interpretation from symbolic model to the nondeterministic tester model is
implemented in two phases, illustrated as ``dualization'' and ``unification''
arrows in \autoref{fig:framework}: \autoref{sec:dualize-interaction} dualizes
the server's behavior into the tester client's, resulting in a ``symbolic
observer'' that encodes symbolic evaluation as primitive events.
\autoref{sec:symbolic-eval} then instantiates the primitive events into
pure computations that unify concrete observations against their symbolic
representations.

\subsection{Symbolic server model}
\label{sec:symbolic-model}
The server specification is an ITree program that exhibits all valid behavior of
the protocol.  I combine the $\Prog$ language in \autoref{sec:prog-lang} with
the simple ITree \ilc{ioE} events for sends and receives as we saw in
\autoref{sec:qac-itree}.

\paragraph{Network packet type}
Instead of receiving requests and sending responses, the server receives and
sends {\em packets} that carry routing information.  This will allow us to
specify the server's interaction against concurrent clients.  A packet consists
of headers that indicate its source and destination, and a payload of either a
request or a response:
\begin{coq}
  Notation connection := N. (* N for natural number *)

  Record packet Q A := {
    Source      : endpoint;
    Destination : endpoint;
    Payload     : Q + A
  }.
\end{coq}
This type definition says: the \ilc{packet} type is parameterized over the \ilc
Q and \ilc A types that represent its request and response.  Its \ilc{Source}
and \ilc{Destination} fields each records an \ilc{endpoint} represented as a
natural number.  Its \ilc{Payload} type is the sum of request and response.

Here's an example trace of network packets that represents a simple HTTP
exchange:
\begin{coq}
  Context get: string -> request.
  Context ok : string -> response.

  Definition server_end: endpoint := 0.

  Example trace: list (packet request response) :=
    [ { Source      := 1;
        Destination := server_end;
        Payload     := inl (get "/index.html")
      }
    ; { Source      := server_end;
        Destination := 1;
        Payload     := inr (ok "<p>Hello!</p>")
      }
    ].
\end{coq}

This trace encodes a transaction between client 1 and the server (represented as
endpoint 0).  The client sends a GET request to fetch the resource in path
\ilc{"/index.html"}, and the server responds with 200 OK and content
\ilc{"<p>Hello!</p>"}.  The \ilc{inl} and \ilc{inr} are constructors for sum
types that here are used to distinguish requests from responses:
\begin{coq}
  Context inl: forall {X Y}, X -> X + Y.
  Context inr: forall {X Y}, Y -> X + Y.
\end{coq}
\vspace*{1em}

\paragraph{Symbolic representation}
To specify systems' nondeterministic behavior, the $\Prog$ language in
\autoref{sec:prog-lang} encodes data as symbolic expressions $\Sexp$, so that
the responses and branch conditions may depend on internal choices.  I do the
same for ITree specifications, by symbolizing the choice events and branch
conditions.  Take my HTTP specification~\cite{issta21} as an example.  Its
choice event has symbolic expression as result type:

\begin{coq}
  Variant comparison := Strong | Weak.

  Variant exp: Type -> Type :=
    Const    : string -> exp string
  | Var      : var    -> exp string
  | Compare  : string -> exp string -> comparison -> exp bool.

  Variant choiceE: Type -> Type :=
    Choice: choiceE (exp string).
\end{coq}

Here I instantiate the \ilc{choiceE} in \autoref{sec:itree-lang} with symbolic
return type \ilc{(exp string)}, pronounced ``expression of type string''.  In
this example, I use strings to represent entity tags (ETags) that HTTP servers
may generate, which was discussed in \autoref{sec:internal-nondeterminism}.  The
type interface can be adjusted to other protocols under test.

Symbolic expressions may be constructed as constant values, as variables, or
with operators.  Here are some examples of expressions:
\begin{coq}
  Context x : var.

  Example expression1: exp string := Const "foo".
  Example expression2: exp string := Var x.
  Example expression3: exp bool   := Compare "bar" expression2 Weak.
\end{coq}

The \ilc{Compare} constructor takes an expression of type string and compares it
against a constant string.  \ilc{(Compare t tx cmp)} represents the ETag
comparison between \ilc{t} and \ilc{tx}, using ``strong comparison'' or ``weak
comparison'' mechanism\footnote{\http servers may choose to generate ETags as
``strong validators'' (with uniqueness guarantee) or ``weak validators'' (for
potentially better performance).  Weak validators have prefix \iletag{W/} while
strong validators do not.  When handling compare-and-swap operations such as PUT
requests conditioned over \inlinec{If-Match} in
\autoref{sec:internal-nondeterminism}, the server should evaluate its
precondition with ``strong comparison'' that doesn't allow weak validators,
\eg, \iletag{W/"foo"}) to match any ETag including itself.  For GET requests
conditioned over \inlinec{If-None-Match}, the server may evaluate with ``weak
comparison'' where a weak validator like \iletag{W/"bar"} matches itself and
also matches strong validator \iletag{"bar"}, but doesn't match \iletag{W/"foo"}
or \iletag{"foo"}.\label{foot:etag}} specified by \ilc{cmp}.  The constant ETag
is provided by the request, and the symbolic one comes from the server state.

\begin{figure}
\begin{lstlisting}[numbers=left]
Notation sigma := (path -> resource).

Context OK PreconditionFailed : symbolic_response.
Context process: request -> sigma -> itree smE (symbolic_response * sigma).

CoFixpoint server_http (state: sigma) :=
  pq <- trigger Recv;;
  let respond_with a :=
    trigger (Send { Source      := server_conn
                  ; Destination := pq.(Source)
                  ; Payload     := inr a } ) in
  let q : request    := request_of pq        in
  let v : string     := q.(Content)          in
  let k : path       := q.(TargetPath)       in
  let t : string     := if_match q           in
  let tx: exp string := (state k).(ETag)     in
  IFX (Compare t tx Strong)%\label{line:etag-ifx}%
  THEN
    if q.(Method) is Put%\label{line:etag-pure-if}%
    then
      tx' <- or (trigger Choice)%\label{line:etag-choice}%
                (Pure (Const EmptyString));;
      let state' := state [k |-> {Content := v; ETag := tx'}] in
      respond_with OK;;
      server_http state'
    else                 (* handling other kinds of requests *)
      (a, state') <- process q state;;
      respond_with a;;
      server_http state'
  ELSE
    respond_with PreconditionFailed;;
    server_http s.%\label{line:etag-end}%
\end{lstlisting}
\vspace*{1em}
\caption{Server model for HTTP conditional requests.}
\label{fig:if-match-server}
\end{figure}

\autoref{fig:if-match-server} shows an ITree model for If-Match requests 
(\autoref{sec:internal-nondeterminism}).  The server state type \ilc{sigma} maps
each path to its corresponding ``resource''---file content and metadata like
ETag.  The server first evaluates the request's \inlinec{If-Match} condition by
``strong comparison'' as required by HTTP.  If the request's ETag matches its
target's, then the server updates the target's contents with the request
payload.  The target's new ETag \ilc{tx'} is permitted to be any value, so the
model represents it as \ilc{Choice} event.

Note that the server model exhibits three kinds of branches:
\begin{enumerate}
\item The \ilc{if} branch in \autoref{line:etag-pure-if} is provided by ITree's
embedding language Coq, and takes a boolean value as condition.
\item The \ilc{IFX} branch in \autoref{line:etag-ifx} constructs an ITree that
nondeterministically branches over a condition written as a symbolic expression
of type bool:
\begin{coq}
  Variant branchE: Type -> Type := (* symbolic-conditional branch *)
    Decide: exp bool -> branchE bool.

  Notation "IFX condition THEN x ELSE y" :=
    (b <- trigger (Decide condition);;
     if b then x else y).
\end{coq}
\item The \ilc{or} operator in \autoref{line:etag-choice} takes two ITrees
\ilc x and \ilc y as possible branches and constructs a nondeterministic program
that may behave as either \ilc x or \ilc y:
\begin{coq}
  Variant nondetE: Type -> Type := (* nondeterministic branch *)
    Or: nondetE bool.

  Definition or {E R} `{nondetE -< E} (x y: itree E R) : itree E R :=
    b <- trigger Or;;
    if b then x else y.
\end{coq}
Here \ilc{nondetE} is a special case of \ilc{choiceE} with boolean space of
choices, but they'll be handled differently during test derivation.  The type
signature \ilc{\{E R\} `\{nondetE -< E\}} says the \ilc{(or)} combinator can
apply to ITrees whose event \ilc E is a super-event of \ilc{nondetE}, and with
arbitrary return type \ilc R.  For example, arguments \ilc x and \ilc y can be
of type \ilc{(itree (qaE +' nondetE +' choiceE +' branchE) void)}.
\end{enumerate}

These three kinds of branch conditions play different roles in the
specification, and will be handled differently during testing:
\begin{enumerate}
\item The ``pure'' \ilc{if} condition is used for deterministic branches like
  \ilc{(q.(Method) is Put)} in the example.  Here \ilc q is a ``concrete
  request''---a request that doesn't involve symbolic variables, as opposed to
  ``symbolic'' ones---generated by the client and sent to the server, so its
  method is known by the tester and needn't be symbolically evaluated.
\item The ``symbolic'' \ilc{IFX} condition here plays a similar role as the
  $\mathsf{if}$ branches in the $\Prog$ language: Which branch to take depends
  on the server's internal choices, so the tester needs to consider both cases.
\item The \ilc{or} branch defines multiple control flows the server may take.
  In the HTTP example, the server may generate an ETag for the resource's new
  content but is not obliged to do so.  It may choose to generate no ETags
  instead, using \ilc{(Pure (Const EmptyString))} as an alternative
  to \ilc{(trigger Choice)}.
\end{enumerate}

In addition to \ilc{IFX} branch conditions, the symbolic expressions may also
appear in the server's responses.  For example, after generating an ETag in
\autoref{line:etag-choice} of \autoref{fig:if-match-server}, the server may
receive a GET request and send the ETag to the client:
\begin{coq}
  Example ok_with_etag: symbolic_response :=
    { ResponseLine := { Version := { Major := 1; Minor := 1 }
                      ; Code    := 200
                      ; Phrase  := OK
                      }
    ; ResponseFields :=
      [ { Name := "Content-Length"; Value := Const "13" }
      ; { Name := "ETag"          ; Value := Var    x   }
      ]
    ; ResponseBody := "<p>Hello!</p>"
    }.
\end{coq}
Suppose the server generated the ETag as expression \ilc{(Var x)}, then we can
use the expression to construct the symbolic response in the specification,
rather than determining its concrete value.  The mechanism of producing
expressions for ETags is explained in \autoref{sec:dualize-interaction}.

Now we can define the specification's event type \ilc{smE}.  The symbolic server
model receives concrete requests and sends symbolic responses, so its event is
defined as:
\begin{coq}
  Definition symbolic_packet := packet request symbolic_response.%\label{def:symbolic-packet}%

  Definition qaE := ioE symbolic_packet symbolic_packet.%\label{def:symbolic-qae}%

  Notation smE := (qaE +' nondetE +' choiceE +' branchE).
\end{coq}
The ``Symbolic Model'' in \autoref{fig:framework} is an ITree constructed by
applying the \ilc{server_http} function to an initial state:
\begin{coq}
  Definition sm_http: itree smE void :=
    server_http init_state.
\end{coq}

The rest of this section will explain the interpretations from this symbolic
model.

\subsection{Dualizing symbolic model}
\label{sec:dualize-interaction}
This subsection takes the symbolic model composed in
\autoref{sec:symbolic-model} and dualizes its interactions, which corresponds
to the ``Dualization'' arrow in \autoref{fig:framework}.  It applies the idea of
derivation rules (\ref{rule:write})--(\ref{rule:return}) for $\Prog$
(\autoref{sec:dualize-prog}) to models written as ITrees.

This interpretation phase produces a symbolic observer that models the tester's
observation and validation behavior.  The observer sends a request when the
server wants to receive one and receives a response when the server wants to
send one.  It also creates constraints over the server's internal choices based
on its observations.

\autoref{fig:symbolic-observer} shows the dualization algorithm.  It interprets
the symbolic model's events with the \ilc{observe} handler, whose types are
explained as follows:

\begin{figure}
\begin{lstlisting}[numbers=left]
Notation oE := (observeE +' nondetE +' choiceE +' constraintE).

Definition observe {R} (e: smE R) : itree oE R :=
  match e with
  | Recv      => trigger FromObserver%\label{line:observe-absorb}%
  | Send px   => p <- trigger ToObserver;;%\label{line:observe-emit}%
                 trigger (Guard px p)
  | Decide bx => or (trigger (Unify bx true);;  ret true)%\label{line:observe-branch}%
                    (trigger (Unify bx false);; ret false)
  | Or        => trigger Or%\label{line:observe-or}%
  | Choice    => trigger Choice%\label{line:observe-choice}%
  end.

Definition observer_http: itree oE void :=
  interp observe sm_http.
\end{lstlisting}
\vspace*{1em}
\caption{Dualizing symbolic model into symbolic observer.}
\label{fig:symbolic-observer}
\end{figure}

The tester observes a trace of concrete packets, so observer's interactions
return concrete requests and responses, as opposed to the symbolic model whose
responses are symbolic.
\begin{coq}
  Definition concrete_packet := packet request concrete_response.

  Variant observeE : Type -> Type :=
    FromObserver   : observeE concrete_packet
  | ToObserver     : observeE concrete_packet.
\end{coq}

Note that the observer's send and receive events both return the packet sent or
received, unlike the server model whose \ilc{Send} event takes the sent packet
as argument.  This is because the tester needs to generate the request packet to
send, and the event's result value represents that generated and sent packet.

As discussed in \autoref{sec:dualize-prog}, when the server sends a symbolic
response or branches over a symbolic condition, the tester needs to create
symbolic constraints accordingly.  The observer introduces ``constraint events''
to represent the creation of constraints primitively.
\begin{coq}
  Variant constraintE : Type -> Type :=
    Guard : packet -> concrete_packet -> constraintE unit
  | Unify : exp bool -> bool -> constraintE unit.
\end{coq}

Here \ilc{(Guard px p)} creates a constraint that the symbolic packet \ilc{px}
emitted by the specification matches the concrete packet \ilc p observed during
runtime.  \ilc{(Unify bx b)} creates a constraint that unifies the symbolic
branch condition \ilc{bx} with boolean value \ilc b.  These constraints will be
solved in \autoref{sec:symbolic-eval}.

The dualization algorithm in \autoref{fig:symbolic-observer} does the follows:
\begin{enumerate}
\item When the symbolic model receives a packet, in
\autoref{line:observe-absorb}, the observer generates a request packet.

\item When the symbolic model sends a symbolic packet \ilc{px}, in
\autoref{line:observe-emit}, the observer receives a concrete packet \ilc p, and
adds a \ilc{Guard} constraint that restricts the symbolic and concrete packets
to match each other.

\item When the symbolic model branches on a symbolic condition \ilc{bx},
in \autoref{line:observe-branch}, the tester accepts the observation if it can
be explained by any branch.  This is done by constructing the observer as a
nondeterministic program that has both branches, using the \ilc{or} combinator.
For each branch, the observer adds a \ilc{Unify} constraint that the symbolic
condition matches the chosen branch.

\item Nondeterministic branches in \autoref{line:observe-or} are preserved in
this interpretation phase and will be resolved in \autoref{sec:backtrack}.

\item Internal choices in \autoref{line:observe-choice} are addressed by the
next phase in \autoref{sec:symbolic-eval}, along with the constraints created in
this phase.
\end{enumerate}

The result of dualization is a symbolic observer that models tester behaviors
like sending requests and receiving responses.  The symbolic observer is a
nondeterministic program with primitives events like making choices and adding
constraints over the choices.

For example, dualizing \autoref{line:etag-ifx}--\ref{line:etag-end} in
\autoref{fig:if-match-server} results in an observer program that is equivalent
to \autoref{fig:observer-example}.  Dualization transforms the \ilc{sm_http}
specification into the \ilc{observer_http} program in
\autoref{fig:symbolic-observer} automatically, and includes more details than
the hand-written \ilc{observer_body} example.  The next subsection interprets
the observer's primitive \ilc{Guard} and \ilc{Unify} events into pure
computations of symbolic evaluation.

\begin{figure}
\begin{coq}
  Example observer_body: itree oE void :=
    let guard_response a :=
      p <- trigger ToObserver;;
      trigger (Guard { Source      := server_conn
                     ; Destination := pq.(Source)
                     ; Payload     := inr a } ) in
    let bx: exp bool := Compare t tx Strong in
    or (
        trigger (Unify (Compare bx true));;
        if q.(Method) is Put
        then
          tx' <- or (trigger Choice)
                    (Pure (Const EmptyString));;
          let state' := state [k |-> {Content := v; ETag := tx'}] in
          guard_response OK;;
          interp observe (server_http state')
        else
          (a, state') <- interp observe (process q state);;
          guard_response a;;
          interp observe (server_http state')
       )
       (
        trigger (Unify (Compare bx false));;
        guard_response PreconditionFailed;;
        interp observe (server_http s)
       ).
\end{coq}
\vspace*{1em}
\caption{Symbolic observer example.}
\label{fig:observer-example}
\end{figure}

\subsection{Symbolic evaluation}
\label{sec:symbolic-eval}
This subsection takes the symbolic observer produced in
\autoref{sec:dualize-interaction} and solves the constraints it has created.
The constraints unify symbolic packets and branch conditions against the
concrete observations.  The tester should accept the SUT if the constraints are
satisfiable.

\begin{figure}
\begin{lstlisting}[numbers=left]
Notation ntE := (observeE +' nondetE +' exceptE).

Definition V: Type := list var * list (constraintE unit).
  
Definition unify {R} (e: oE R) (v: V) : itree ntE (V * R) :=
  let (xs, cs) := v in
  match e with
  | Choice => let x: var := fresh v in%\label{line:unify-choice}%
              ret (x::xs, cs, Var x)
  | (constraint: unifyE) => let cs' := constraint::cs in%\label{line:unify-constraint}%
                            if solvable cs'
                            then ret (xs, cs', tt)
                            else Trigger (Throw ("Conflict: " ++ print cs'))
  | Or             => b <- trigger Or;; ret (v, b)%\label{line:unify-or}%
  | (oe: observeE) => r <- trigger oe;; ret (v, r)%\label{line:unify-observe}%
  end.

Definition nondet_tester_http: itree ntE void :=
  (_, vd) <- interp_state unify observer_http initV;;
  match vd in void with end.
\end{lstlisting}
\vspace*{1em}
\caption{Resolving symbolic constraints.}
\label{fig:nondet-tester}
\end{figure}

As shown in \autoref{fig:nondet-tester}, the unification algorithm evaluates the
primitive symbolic events into a stateful checker program, which reflects the
$\Prog$-based validator in \autoref{sec:dualize-prog}.  The interpreter
maintains a validation state \ilc V which stores the symbolic variables and the
constraints over them.  The derivation rules are as follows:
\begin{enumerate}
  \item When the server makes an internal choice in \autoref{line:unify-choice},
    the tester creates a fresh variable and adds it to the validation state.
  \item When the observer creates a constraint in
    \autoref{line:unify-constraint}, the tester adds the constraint to the
    validation state and solves the new set of constraints.  If the constraints
    become unsatisfiable, then the tester \ilc{Throw}s an exception that
    indicates the current execution branch cannot accept the observations:
\begin{coq}
  Variable exceptE: Type -> Type :=
    Throw: forall {X}, string -> exceptE X.
\end{coq}
  \item The observer is a nondeterministic program with multiple execution
    paths, constructed by \ilc{Or} events in \autoref{line:unify-or}.  The
    tester accepts the observation if any of the branches does not throw an
    exception.  These branches will be handled in the next section, along with
    the observer's send/receive interactions in \autoref{line:unify-observe}.
\end{enumerate}

Note that the \ilc{unify} function interprets a symbolic observer's event
\ilc{(oE R)} into a function of type \ilc{(V -> itree tE (V * R))}, a so-called
``state monad transformer''.  It takes a pre-validation state \ilc{(v: V)} and
computes an ITree that yields the observer event's corresponding result
\ilc{(r: R)} along with a post-validation state \ilc{(v': V)}.  This stateful
interpretation process is implemented by a variant of \ilc{interp} called
\ilc{interp_state}:
\begin{coq}
  CoFixpoint interp_state {E F V R}
                          (handler: forall {X}, E X -> V -> itree F (V * X))
                          (m: itree E R) (v: V)
             : itree F (V * R) :=
    match m with
    | Pure   r   => ret (v, r)
    | Impure e k => '(v', r) <- handler e v;;
                    interp_state handler (k r) v'
    end.
\end{coq}

So far I have interpreted the symbolic model into a tester model that observes
incoming and outgoing packets, nondeterministically branches, and in some cases
throws exceptions.  In this section, I used the server model verbatim as the
symbolic model, and the derived tester can handle internal nondeterminism by
symbolic evaluation.

The server model's receive and send events are linear.  It doesn't receive new
requests before responding to its previous request.  As a result, the tester
dualized from the server model doesn't send new requests before it receives the
response to the previous request, so it doesn't reveal the server's interaction
with multiple clients simultaneously.  In the next section, I'll explain how to
create a multi-client tester by extending the symbolic model, addressing
external nondeterminism.

\section{Handling External Nondeterminism}
\label{sec:external-nondet}
As introduced in \autoref{sec:intro-external-nondet}, the environment might
affect the transmission of messages, so called external nondeterminism.  The
tester should take the environment into account when validating its
observations.

This section explains how to address external nondeterminism by specifying the
environment, using the networked server example.  It corresponds to the
``Composition'' arrow in \autoref{fig:framework}.  \autoref{sec:net-tcp} defines
a model for concurrent TCP connections.  \autoref{sec:net-compose} then composes
the network model with the server specification, yielding a tester-side
specification that defines the space of valid observations.

\subsection{Modelling the network}
\label{sec:net-tcp}
When testing servers over the network, request and response packets may be
delayed.  As a result, messages from one end might arrive at the
other end in a different order from that they were sent.

The space of network reorderings can be modelled by a {\em network model}, a
conceptual program for the ``network wire''.  The wire can be viewed as a
buffer, which absorbs packets\footnote{In this section, ``packet'' is a
shorthand for the ``symbolic packet'' defined on
Page~\pageref{def:symbolic-packet}.} and later emits them:
\begin{coq}
  Notation packet := symbolic_packet.

  Definition netE: Type -> Type :=
    ioE packet packet.

  Notation Absorb := Recv.
  Notation Emit   := Send.
\end{coq}

For example, the network model for concurrent TCP connections is defined in
\autoref{fig:tcp-model}.  The model captures TCP's feature of maintaining the
order within each connection, but packets in different connections might be
reordered arbitrarily.  When the wire chooses a packet to send, the candidate
must be the oldest in its connection.

\begin{figure}
\begin{coq}
(* filter the oldest packet in each connection *)
Context oldest_in_each_conn : list packet -> list packet.

Fixpoint pick_one (l: list packet) : itree nondetE (option packet) :=%\label{def:pick-one}%
  if l is p::l'
  then or (Ret (Some p)) (pick_one l')
  else ret None.

CoFixpoint tcp (buffer: list packet) : itree (netE +' nondetE) void :=
  let absorb := pkt <- trigger Absorb;;
                tcp (buffer ++ [pkt])      in
  let emit p := trigger (Emit p);;
                tcp (remove pkt buffer)    in
  let pkts   := oldest_in_each_conn buffer in
  opkt <- pick_one pkts;;
  if opkt is Some pkt
  then emit pkt
  else absorb.
\end{coq}
\vspace*{1em}
\caption[Network model for concurrent TCP connections.]{Network model for
  concurrent TCP connections.  The model is an infinite program iterating over a
  \ilc{buffer} of all packets en route.  In each iteration, the model either
  \ilc{absorb}s or \ilc{emit}s some packet, depending on the current
  \ilc{buffer} state and the choice made in \ilc{pick_one}.  Any absorbed packet
  is appended to the end of buffer.  When emitting a packet, the model may
  choose a connection and send the oldest packet in it.}
\label{fig:tcp-model}
\end{figure}

Note the \ilc{pick_one} function, which might return (i) \ilc{Some p} or (ii)
\ilc{None}.  The network model then (i) emits packet \ilc p or (ii) absorbs a
packet into \ilc{buffer}.

\begin{itemize}
\item When the given list \ilc{pkts} is empty, \ilc{pick_one} always returns
  \ilc{None}, because the wire has no packet in the \ilc{buffer}, and must
  absorb some packet before emitting anything.
\item Given a non-empty linked list \ilc{(p::l')}, with \ilc p as head and
  \ilc{l'} as tail, \ilc{pick_one} might return \ilc{(Some p)} by choosing the
  left branch.  In this case, the wire can emit packet \ilc p.  Or the function
  might choose the right branch and recursing on \ilc{l'}, meaning that the wire
  can emit some packet in \ilc{l'} or absorb some packet into the buffer.
\end{itemize}

This network model reflects the TCP environment, where messages are never lost
but might be indefinitely delayed.  In the next subsection, I'll demonstrate how
to compose the server and network models into a client-side observation model.

\subsection{Network composition}
\label{sec:net-compose}

The network connects the server on one end to the clients on other ends.  When
one end sends some message, the network model absorbs it and later emits it to
the destination.

To {\em compose} a server model with a network model is to pair the server's
\ilc{Send} and \ilc{Recv} events with the network's \ilc{Absorb} and \ilc{Emit}
events.  Since the network model is nondeterministic, it might not be ready at
some given moment to absorb packets sent by the server.  The network might also
emit a packet before the server is ready to receive it.

\begin{figure}
  \includegraphics[width=\textwidth]{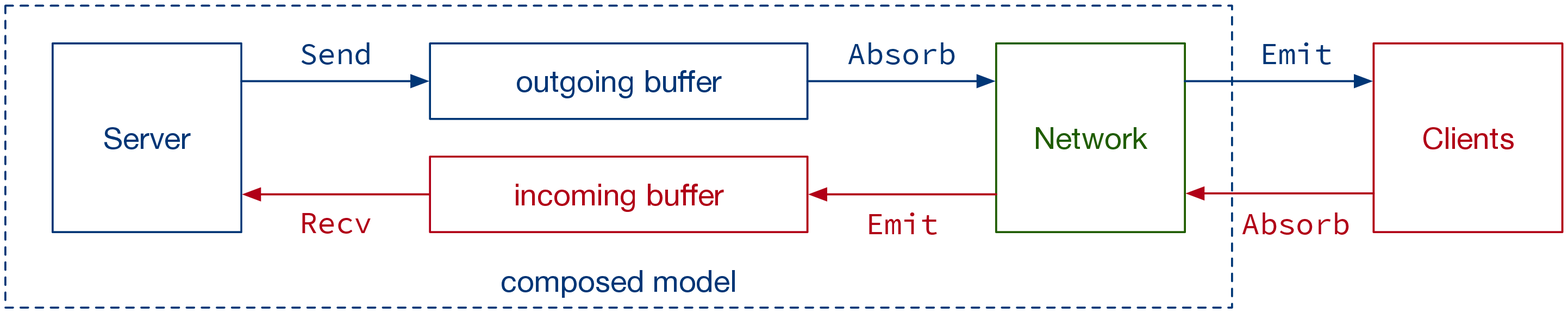}
  \caption{Network composition architecture.}
  \label{fig:net-compose}
\end{figure}

To handle the asynchronicity among the server and network events, I insert
message buffers between them.  As shown in \autoref{fig:net-compose}, the {\em
  incoming buffer} stores the packets emitted by the network but not yet
consumed by the server's \ilc{Recv} events, and the {\em outgoing buffer} stores
the packets sent by the server but not yet absorbed by the network.

\begin{figure}
\begin{lstlisting}[numbers=left]
CoFixpoint compose {E} (srv: itree smE void)          (* server  model *)
           (net  : itree (netE +' nondetE) void)      (* network model *)
           (bi bo: list packet)       (* incoming and outgoing buffers *)
           : itree (netE +' nondetE +' E) void :=
  let step_net :=%\label{line:step-net-def}%
    match net with
    | Impure Absorb knet =>
      match bo with
      | pkt::bo' => compose srv (knet pkt) bi bo'%\label{line:net-absorb}%
      | []       => pkt <- trigger Absorb;;%\label{line:client-send}%
                    compose srv (knet pkt) bi bo
      end
    | Impure (Emit pkt) knet =>%\label{line:net-emit}%
      if toServer pkt
      then compose srv (knet tt) (bi++[pkt]) bo%\label{line:srv-incoming}%
      else trigger (Emit pkt);;%\label{line:net-send}%
           compose srv (knet tt) bi bo
    | Impure Or knet => b <- trigger Or;;
                      compose srv (knet b) bi bo
    | Pure vd => match vd in void with end
    end
  in
  match srv with
  | Impure Recv ksrv =>%\label{line:srv-recv}%
    match bi with
    | pkt::bi' => compose (ksrv pkt) net bi' bo%\label{line:srv-consume}%
    | [] => step_net%\label{line:step-net}%
    end
  | Impure (Send pkt) ksrv =>%\label{line:srv-send}%
    compose (ksrv tt) net bi (bo++[pkt])
  | Impure e ksrv =>        (* other events performed by the server *)
    r <- trigger e;; compose (ksrv r) net bi bo
  | Pure vd => match vd in void with end
  end.
\end{lstlisting}
\vspace*{1em}
\caption[Network composition algorithm.]{Network composition algorithm.  When the
  server wants to send a packet in \autoref{line:srv-send}, the packet is
  appended to the outgoing buffer until absorbed by the network
  in \autoref{line:net-absorb}, and eventually emitted to the client
  in \autoref{line:net-send}.  Conversely, a packet sent by the client is
  absorbed by the network in \autoref{line:client-send}, emitted to the server's
  incoming buffer in \autoref{line:srv-incoming}, until the server consumes it
  in \autoref{line:srv-consume}.}
\label{fig:net-compose-code}
\end{figure}

The server and the clients are the opposite ends of the network.  Each packet
has routing fields that indicate its source and destination.  When the network
emits a packet, we need to determine whether the packet is emitted to the
server's incoming buffer or to the clients, by inspecting its destination:
\begin{coq}
  Definition toServer (p: packet) : bool :=
    if p.(Destination) is server_conn then true else false.
\end{coq}

Now we can define the composition algorithm formally, as shown in
\autoref{fig:net-compose-code}.  The function takes the symbolic server model
in \autoref{sec:symbolic-model} and the network model in \autoref{sec:net-tcp},
and yields a symbolic model of the server's behavior observable from across the
network.

The composition function analyzes the server and the network's behavior:
\begin{enumerate}
\item When the server wants to send a packet in \autoref{line:srv-send}, the
packet is appended to the outgoing buffer.
\item When the network wants to absorb a packet in \autoref{line:net-absorb},
it first checks whether the server has sent some packet to its outgoing buffer.
If yes, then the network absorbs the oldest packet in the buffer.  Otherwise, it
absorbs from the clients.
\item After absorbing some packets, when the network wants to emit a packet in
\autoref{line:net-emit}, the packet is either emitted to the client or appended
to the server's incoming buffer, based on its destination.
\item When the server wants to receive a packet in \autoref{line:srv-recv},
it first checks whether the network has emitted some packet to the incoming
buffer.  If yes, then the server takes the oldest packet in the buffer.
Otherwise, it waits for the network model to emit one.
\end{enumerate}

Note that this algorithm schedules the server at a higher priority than the
network model.  The composed model only steps into the network model when the
server is starved in \autoref{line:step-net}, by calling the \ilc{step_net}
process defined in \autoref{line:step-net-def}.  This design is to avoid
divergence of the derived tester program, which I'll further explain in
\autoref{sec:backtrack}.

By composing the server and network models, we get a symbolic model that may
send and receive packets asynchronously, as opposed to the server model that
processes one request at a time.  Dualizing this asynchronous symbolic model
results in an asynchronous tester model that may send multiple requests
simultaneously rather than waiting for previous responses.  Next I'll show how
to execute the tester's ITree model against the SUT.

\section{Executing the Tester Model}
\label{sec:backtrack}
This section takes the nondeterministic tester model derived in
\autoref{sec:symbolic-eval} and transforms it into an interactive program.
\autoref{sec:backtracking} handles the nondeterministic branches via
backtracking and produces a deterministic tester model.  \autoref{sec:itree-io}
then interprets the deterministic tester into an IO program that interacts with
the SUT.

\subsection{Backtrack execution}
\label{sec:backtracking}
This subsection explains how to run the nondeterministic tester on a
deterministic machine.  It reflects the derivation rules (\ref{rule:unsat}) and
(\ref{rule:reject}) for $\Prog$ in \autoref{sec:dualize-prog}, and constructs
the ``Backtracking'' arrow in \autoref{fig:framework}.

The deterministic tester implements a client that sends and receives concrete
packets:
\begin{coq}
  Variant clientE: Type -> Type :=
    ClientSend: concrete_packet -> clientE unit
  | ClientRecv: clientE (option concrete_packet).
\end{coq}

Note that the \ilc{ClientRecv} event might return \ilc{(Some pkt)}, indicating
that the SUT has sent a packet \ilc{pkt} to the tester; or it might
return \ilc{None}, when the SUT is silent, or its sent packet hasn't arrived at
the tester side.  This allows the tester to perform non-blocking interactions,
instead of waiting for the SUT, which might cause starvation.

\begin{figure}
\begin{lstlisting}[numbers=left]
Notation tE := (clientE +' genE +' exceptE).

CoFixpoint backtrack (current:      itree ntE void)
                     (others: list (itree ntE void))
           : itree tE void :=
  match current with
  | Impure e k =>
    match e with
    | Or           => b <- trigger GenBool;;%\label{line:backtrack-or}%
                      backtrack (k b) (k (negb b)::others)
    | Throw msg    => match others with%\label{line:backtrack-throw}%
                      | other::ot' => backtrack other ot'
                      | []         => trigger (Throw msg)
                      end
    | FromObserver => q <- trigger GenPacket;;%\label{line:backtrack-send}%
                      trigger (ClientSend q);;
                      let others' := expect FromObserver q others in
                      backtrack (k q) others'
    | ToObserver   =>
      oa <- trigger ClientRecv;;%\label{line:backtrack-recv}%
      match oa with
      | Some oa => let others' := expect ToObserver a others in
                   backtrack (k a) others'
      | None    =>%\label{line:backtrack-silent}%
        match others with
        | other::ot' => backtrack other (ot'++[current]) (* postpone *)%\label{line:backtrack-postpone}%
        | []         => backtrack m     []               (* retry    *)
        end
      end
    end
  | Pure vd => match vd in void with end
  end.

Definition tester_http: itree tE void :=
  backtrack nondet_tester_http [].
\end{lstlisting}
\vspace*{1em}
\caption{Backtrack execution of nondeterministic tester.}
\label{fig:backtrack}
\end{figure}

\autoref{fig:backtrack} shows the backtracking algorithm.  It interacts with the
SUT and checks whether the observations can be explained by the nondeterministic
tester model.  That is, checking whether the tester has an execution path that
matches its interactions.  This is done by maintaining a list of all possible
branches in the tester, and checking if any of them accepts the observation.

The tester exhibits two kinds of randomness: (1) When sending a request packet
to the SUT, it generates the packet randomly with \ilc{GenPacket}; (2) When the
nondeterministic tester model branches, the deterministic tester randomly picks
one branch to evaluate, using \ilc{GenBool}:
\begin{coq}
  Variant genE: Type -> Type :=
    GenPacket : genE concrete_packet
  | GenBool   : genE bool.
\end{coq}

\begin{figure}
\begin{lstlisting}[numbers=left]
CoFixpoint match_observe {R} (e: observeE R) (r: R)
                             (m: itree ntE (V * void))
           : itree ntE (V * void) :=
  match m with
  | Impure (oe: observeE concrete_packet) k =>
    match oe, e with
    | FromObserver, FromObserver%\label{line:match-from}%
    | ToObserver  , ToObserver => k r%\label{line:match-to}%
    | _, _ => trigger (Throw ("Expect " ++ print oe%\label{line:match-throw}%
                           ++ " but observed " ++ print e))
    end
  | Impure e0 k =>
    r0 <- trigger e0;;
    match_observe e r (k r0)
  | Pure (_, vd) => match vd in void with end
  end.

Definition expect {R} (e: observeE R) (r: R)
  : list (itree ntE (V * void)) -> list (itree ntE (V * void))
  := map (match_observe e r).
\end{lstlisting}
\vspace*{1em}
\caption{Matching tester model against existing observation.}
\label{fig:match-observe}
\end{figure}

The execution rule is defined as follows:
\begin{enumerate}
\item When the tester nondeterministically branches, in
\autoref{line:backtrack-or}, randomly pick a branch \ilc{(k b)} to evaluate and
push the other branch \ilc{(k (negb b))} to the list of other possible cases.

\item When the \ilc{current} tester throws an exception, in
\autoref{line:backtrack-throw}, it indicates that the current execution path
rejects the observations.  The tester should try to explain its observations
with other branches of the tester model.  If the \ilc{others} list is empty, it
indicates that the observation is beyond the specification's producible
behavior, so the tester should reject the SUT.

\item When the tester wants to observe a packet {\em from} itself, it generates
a packet and sends it to the SUT in \autoref{line:backtrack-send}.

Note that if the current branch is rejected and the tester backtracks to other
branches, the sent packet cannot be recalled from the environment.  Therefore,
all other branches should recognize this sent packet and check whether future
interactions are valid follow-ups of it.  This is done by matching the branches
against the send event, using the \ilc{expect} function.

  As shown in \autoref{fig:match-observe}, \ilc{(expect e r l)} matches every
  tester in list \ilc l against the observation \ilc e that has return
  value \ilc r.  For each element $\texttt m\in\texttt l$, if \ilc m's first
  observer event \ilc{oe} matches the observation \ilc e
  (\autoref{line:match-from} and \autoref{line:match-to}),
  then \ilc{match_observe} instantiates the tester's continuation function \ilc
  k with the observed result \ilc r.  Otherwise, the tester throws an exception
  in \autoref{line:match-throw}, indicating that this branch cannot explain the
  observation because they performed different events.
  \label{rule:backtrack-send}

\item When the current tester wants to observe a packet {\em to} itself, it
  triggers the \ilc{ClientRecv} event in \autoref{line:backtrack-recv}.  If a
  packet has indeed arrived, then it instantiates the current branch as well as
  other possible branches, as discussed in Rule~(\ref{rule:backtrack-send}).

  If the tester hasn't received a packet from the SUT
  (\autoref{line:backtrack-silent}), it doesn't reject the SUT, because the
  expected packet might be delayed in the environment.  If there are \ilc{other}
  branches to evaluate (\autoref{line:backtrack-postpone}), then the tester
  postpones the \ilc{current} branch by appending it to the back of the queue.
  Otherwise, if the current branch is the only one that hasn't rejected, then
  the tester retries the receive interaction.

  Note that if the SUT keeps silent, then the tester will starve but won't
  reject, because (i) such silence is indistinguishable from the SUT sending a
  packet that is delayed by the environment, and (ii) the SUT hasn't {\em
  exhibited} any violations against the specification.  The starvation issue is
  addressed in \autoref{sec:itree-io}.
\end{enumerate}

The backtracking algorithm also explains the network composition design
in \autoref{fig:net-compose-code}, where the server model is scheduled at a
higher priority than the network model.  Suppose the SUT has produced some
invalid output.  Then the tester should reject its observation by throwing an
exception in every branch.  However, the network model is always ready to absorb
a packet, because the \ilc{pick_one} function on Page~\pageref{def:pick-one}
always includes a branch that returns \ilc{None}.  If the composition algorithm
prioritizes the network model over the server model, then when one branch
rejects, the derived tester backtracks to other branches, which includes
generating and sending another packet to the SUT (dualized from the network
model's absorption event).  Evaluating the network model lazily prevents the
composed symbolic model from having infinitely many absorbing branches.  This
allows the derived tester to converge to rejection upon violation, instead of
continuously sending request packets.

Now we have derived the specification into a deterministic tester model in
ITree.  The tester's events reflect actual computations of a client program.  In
the next subsection, I'll translate the ITree model into a binary executable
that runs on silicon and metal.

\subsection{From ITree model to IO program}
\label{sec:itree-io}
The deterministic tester model derived in \autoref{fig:backtrack} is an ITree
program that never returns (its result type \ilc{void} has no elements).  It
represents a client program that keeps interacting with the SUT until it reveals
a violation and throws an exception.

In practice, if the tester hasn't found any violation after performing a certain
number of interactions, then it accepts the SUT.  This is done by executing the
ITree until reaching a certain depth.

\begin{figure}
\begin{lstlisting}[numbers=left]
Fixpoint execute (fuel: nat) (m: itree tE void) : IO bool :=
  match fuel with
  | O       => ret true            (* accept if out of fuel *)%\label{line:execute-accept}%
  | S fuel' =>
    match m with
    | Impure e k =>
      match e with
      | Throw _      => ret false  (* reject upon exception *)%\label{line:execute-reject}%
      | ClientSend q => client_send q;;
                        execute fuel' (k tt)
      | ClientRecv   => oa <- client_recv;;
                        execute fuel' (k oa)
      | GenPacket    => pkt <- gen_packet;;%\label{line:execute-gen}%
                        execute fuel' (k pkt)
      | GenBool      => b <- ORandom.bool;;
                        execute fuel' (k b)
      end
    | Pure vd => match vd in void with end
    end
  end.

Definition test_http: IO bool :=
  execute bigNumber tester_http.
\end{lstlisting}
\vspace*{1em}
\caption{Interpreting ITree tester to IO monad.}
\label{fig:execute}
\end{figure}

As shown in \autoref{fig:execute}, the \ilc{execute} function takes an
argument \ilc{fuel} that indicates the remaining depth to explore in the ITree.
If the execution ran out of fuel (\autoref{line:execute-accept}), then the test
accepts; If the tester model throws an exception
(\autoref{line:execute-reject}), then the test rejects.  Otherwise, it
translates the ITree's primitive events into IO computations in
Coq~\cite{SimpleIO}, which are eventually extracted into OCaml programs that can
be compiled into executables that can communicate with the SUT over the
operating system's network stack.

This concludes my validation methodology.  In this chapter, I have shown how to
test real-world systems that exhibit internal and external nondeterminism.  I
applied the dualization theory in \autoref{chap:theory} to address internal
nondeterminism and handled external nondeterminism by specifying the
environment's space of uncertainty.  The specification is derived into an
executable tester program by multiple phases of interpretations.  The
derivation framework is built on the ITree specification language, but the
method is applicable to other languages that allow destructing and analyzing the
model programs.

So far, I have answered ``how to tell compliant implementations from violating
ones''.  The next chapter will answer ``how to generate and shrink test inputs
that reveal violations effectively'' and unveil the techniques behind
\ilc{gen_packet} in \autoref{line:execute-gen} of \autoref{fig:execute}.

\chpt{Test Harness Design}
\label{chap:harness}
A tester consists of a validator and a test harness.  Chapters~\ref{chap:theory}
and \ref{chap:practices} have explained the validator's theory and practices.
In this chapter, I present a language-based design for the test harnesses,
showing how to generate and shrink test inputs effectively and address
inter-execution nondeterminism.

\autoref{sec:harness-overview} provides a brief overview of how test harnesses
work.  \autoref{sec:heuristics} explains how to write heuristics to generate
interesting test inputs.  \autoref{sec:shrinking} then shows a shrinking
mechanism that keeps the test inputs interesting among different executions.

\section{Overview}
\label{sec:harness-overview}
\begin{figure}[t]
  \centering
  \includegraphics[width=.9\textwidth]{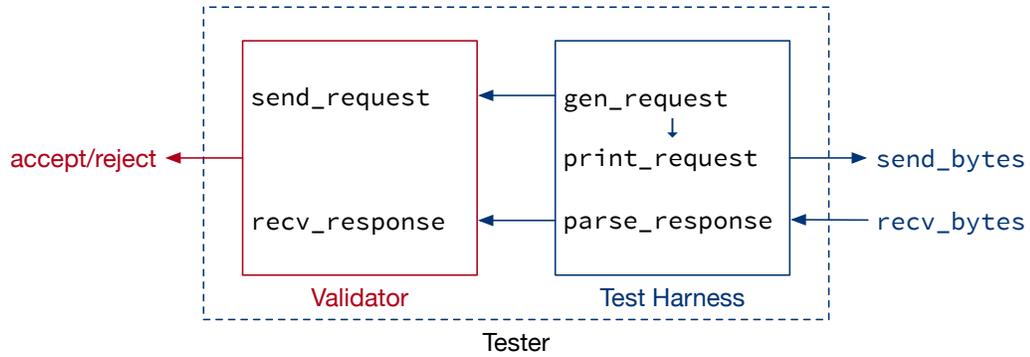}
  \caption{Tester architecture outline.}
  \label{fig:overview}
\end{figure}

This section introduces the abstract architecture of an interactive tester,
using the networked server as an example.  I'll present a na\"ive implementation
of the test harness, which will be improved in the following sections.

The test harness interacts with the environment and provides the observations
for the validator.  The validator may represent requests and responses as
abstract datatypes for the convenience of specification.  The test harness
translates these abstract representations into bytes transmitted on the
underlying channel.

As shown in \autoref{fig:overview}, when the validator wants to observe a sent
request, the harness generates the request and encodes it into bytes to send.
Conversely, when the validator wants to observe a received response, the harness
receives bytes from the environment and decodes them into abstract messages.

\begin{figure}
\begin{lstlisting}[numbers=left]
Definition gen_packet: IO concrete_packet :=
  src          <- random_conn;;
  method       <- oneof [Get; Put];;
  target       <- random_path;;%\label{line:random-path}%
  precondition <- oneof [IfMatch, IfNoneMatch];;
  etag         <- random_etag;;
  payload      <- random_string;;
  ret { Source      := src;
        Destination := server_conn;
        Data        := inr { Method     := method;
                             TargetPath := target;
                             Headers    := [(precondition, etag)];
                             Payload    := payload
                           }
      }.
\end{lstlisting}
\vspace*{1em}
\caption{Na\"ive generator for HTTP requests.}
\label{fig:naive-generator}
\end{figure}

A generator is a randomized program that produces test inputs.  One example is
the \ilc{gen_packet} function in \autoref{fig:execute}.  The HTTP packet
generator can be na\"ively implementation as shown in
\autoref{fig:naive-generator}.  This version fills in the request's fields with
arbitrary values, and has limited coverage of the SUT's behavior.  This is
because the request target and ETags are both generated randomly, but a request
is interesting only if its ETag matches its target's corresponding resource
stored on the server.  A randomly generated request would result in 404 Not
Found and 412 Precondition Failed in almost all cases.

To reveal more interesting behavior from the SUT, we should tune the generator's
distribution to emphasize certain patterns of the test input.  For example, if
the tester knows the set of paths where the server has stored resources, then it
can generate more paths within the set to hit the existing resources; if the
tester has observed some ETags generated by the server, then it can include
these ETags in future requests.  In the next section, I'll explain how to
implement such heuristics in ITree-based testers.

\section{Heuristics for Test Generation}
\label{sec:heuristics}
This section implements heuristics for generating test inputs.  I'll use the
HTTP tester as an example to show how to make requests more interesting by
parameterizing them over the model state (\autoref{sec:heuristic-state}) and the
trace (\autoref{sec:heuristic-trace}).

\subsection{State-based heuristics}
\label{sec:heuristic-state}
\paragraph{Motivation}
The model state may instruct the test generator to produce more interesting test
inputs.  For example, consider the \ilc{random_path} generator in
\autoref{line:random-path} of \autoref{fig:naive-generator}.  One way to improve
it is to generate more paths that have corresponding resources on the server:
\begin{coq}
  Definition gen_path (state: list (path * resource)) : IO path :=
    let paths: list path := map fst state in
    freq [(90, oneof paths);
          (10, random_path)].
\end{coq}

Here I modify the server model's state type \ilc{sigma} from \ilc{(path ->
  resource)} in \autoref{fig:if-match-server} into \ilc{(list (path *
  resource))}, which allows the generator to access the list of all \ilc{paths}
  in the server state.  The generator chooses from these existing paths in 90\%
  of the cases, as assigned by the \ilc{freq} combinator.  The remaining 10\%
  are still generated randomly, to discover how the SUT handles nonexistent
  paths.

For the \ilc{gen_packet} generator in \autoref{fig:naive-generator}, replacing
its \ilc{random_path} with the improved \ilc{gen_path} would generate more
interesting request targets.  This requires the \ilc{gen_packet} function to
carry the server state to instantiate \ilc{gen_path}.

As shown in \autoref{fig:backtrack}, the \ilc{GenPacket} generator is triggered
when the tester wants to observe a packet from itself to the SUT.
\autoref{fig:symbolic-observer} then shows that such \ilc{FromObserver} expectation
happens when the symbolic model \ilc{Send}s a packet.  Such a \ilc{Send} event
only happens when the server wants to receive a packet in
\autoref{fig:net-compose}.  The \ilc{Recv} events are triggered by the server
model in \autoref{fig:if-match-server}, which iterates over the server state
\ilc{sigma}.

\paragraph{Implementation}
To expose the server state to the request generator, I extend the symbolic
server model's \ilc{Recv} event type on Page~\pageref{def:symbolic-qae} to
include the server state:

\begin{minipage}{\linewidth}
\begin{coq}
  Variant qaE: Type -> Type :=
    Recv : sigma      -> qaE packet
  | Send : packet -> qaE unit.
\end{coq}
\end{minipage}

\vspace*{1em}
Now when the server wants to receive a request, it triggers \ilc{(Recv state)},
where \ilc{(state: sigma)} contains the server's paths and resources at that
point.  The \ilc{state} argument is then carried to the generator, by adding
parameters to the event types along the interpretation:
\begin{coq}
  Variant netE: Type -> Type :=
    Emit  : packet -> netE unit
  | Absorb: sigma      -> netE packet.

  Variant observeE : Type -> Type :=
    FromObserver   : sigma -> observeE concrete_packet
  | ToObserver     : observeE concrete_packet.

  Variant genE: Type -> Type :=
    GenPacket : sigma -> genE concrete_packet
  | GenBool   : genE bool.

  Definition gen_packet: sigma -> IO concrete_packet.
\end{coq}

\begin{figure}
\includegraphics[width=.6\linewidth]{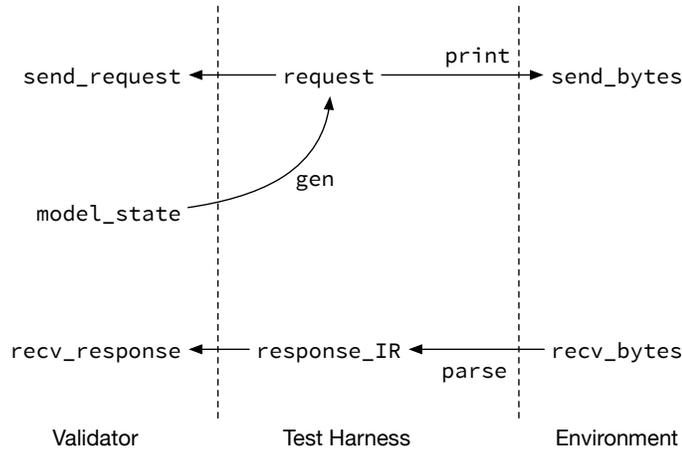}
\caption{State-based heuristics.}
\label{fig:stategen}
\end{figure}

As a result, when instantiating the \ilc{(GenPacket state)} event in
\autoref{fig:execute}, we can feed the \ilc{gen_packet} function with argument
\ilc{state}, so that \ilc{gen_path} can generate interesting paths based on the
server state.  \autoref{fig:stategen} illustrates this state-based heuristics.
It refines the test harness box in \autoref{fig:overview}, and will be extended
with trace-based heuristics in the next subsection.

\subsection{Trace-based heuristics}
\label{sec:heuristic-trace}

When the SUT makes internal choices, \eg, generating ETags, the
specification represents them as symbolic variables.  These variables' concrete
values are not stored in the specification state, but may be observed during
execution.  For example, when an HTTP server responds to a GET request, it might
include the resource's ETag as shown in \autoref{sec:internal-nondeterminism}.

To improve the generator in \autoref{fig:naive-generator}, we can generate
interesting ETags based on the trace produced during execution.  The trace is a
list of packets sent and received by the tester, and the packets' payloads may
include responses that have an ETag field.  The \ilc{gen_etag} function
emphasizes ETags that were observed in the trace, which are more likely to match
those generated by the SUT:
\begin{coq}
  Definition gen_etag (trace: list concrete_packet) : IO string :=
    let etags: list string := tags_of trace in
    freq [(90, oneof etags);
          (10, random_etag)].
\end{coq}

To utilize this improved generator for ETags, the tester needs to record the
trace of packets sent and received.  This is done by modifying the \ilc{execute}
function in \autoref{fig:execute}, adding an accumulator called ``\ilc{trace}''
as the recursion parameter:
\begin{coq}
  Fixpoint execute (fuel: nat) (trace: list concrete_packet)
                   (m: itree tE void) : IO bool :=
    match fuel with
    | S fuel' =>
      match m with
      | Impure e k =>
        match e with
        | (ClientSend q|) => client_send q;;
                             execute fuel' (trace ++ [q]) (k tt)
        | (ClientRecv|)   => oa <- client_recv;;                             
                             let trace' := match oa with
                                           | Some a => trace ++ [a]
                                           | None   => trace
                                           end in
                             execute fuel' trace' (k oa)
        | (|GenPacket state|) => pkt <- gen_packet state trace;;
                                 execute fuel' trace (k pkt)
        ... (* similar to %\autoref{fig:execute}% *)
\end{coq}
%% \lys{\ilc{trace ++ [q]} is indeed less inefficient than storing a reversed trace,
%% but this code is to show the concept of ``appending'' requests and responses to
%% a trace, and assumes that readers have the ability of optimizing its
%% performance.}

When the tester sends or receives a packet, the packet is appended to the
runtime \ilc{trace}.  Then the \ilc{gen_packet} generator can take the trace
accumulated so far and feed it to the ETag generator:
\begin{coq}
  Definition gen_packet (state: sigma) (trace: list concrete_packet) :=
    target <- gen_path state;;
    etag   <- gen_etag trace;;
    ... (* same as %\autoref{fig:naive-generator}% *)
\end{coq}

\begin{figure}
\includegraphics[width=.7\linewidth]{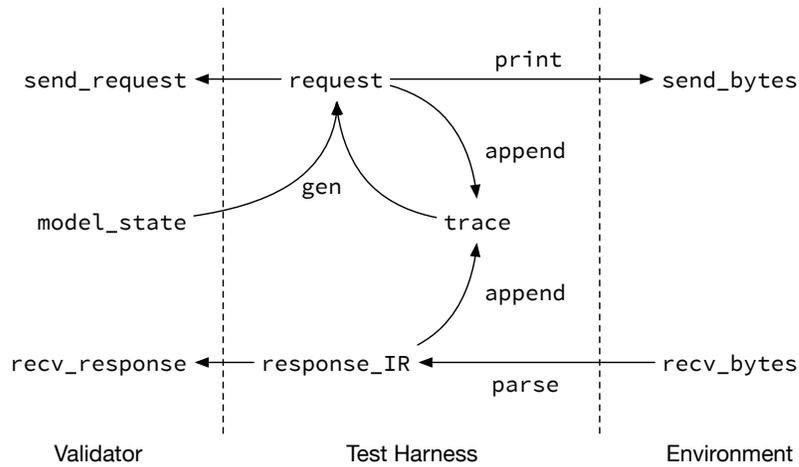}
\caption{Combining state-based and trace-based heuristics.}
\label{fig:gen}
\end{figure}

Now we can generate interesting test inputs by combining state-based and
trace-based heuristics, as shown in \autoref{fig:gen}.  In the next section,
I'll further extend this framework and shrink the test inputs while keeping them
interesting, addressing inter-execution nondeterminism.

\section{Shrinking Interactive Tests}
\label{sec:shrinking}
Suppose we have generated a test input that has caused invalid observations of
the SUT.  The generated counterexample consists of (1) {\em signal} that is
essential to triggering violations, and (2) {\em noise} that does not contribute
to revealing such violations.  We need to shrink the counterexample by removing
the noise and keeping the signal.

For interactive testing, the test input is a sequence of request messages.  An
intuitive way of shrinking is to remove some requests from the original sequence
and rerun the test.  However, rerunning an interesting request might produce
trivial results, due to inter-execution nondeterminism discussed in
\autoref{sec:inter-execution}.

To prevent turning signal into noise when rerunning the test, I shrink the
heuristics instead of shrinking the generated test input.
\autoref{sec:shrink-architecture} introduces the architecture for interactive
shrinking, then \autoref{sec:shrink-ir} explains the language design beneath
that addresses inter-execution nondeterminism.

\subsection{Architecture}
\label{sec:shrink-architecture}

I propose a generic framework for generating and shrinking interactive tests.
The key idea is to introduce an abstract representation for test inputs that
embeds trace-based heuristics.  When shrinking the counterexample, the test
harness picks a substructure of the abstract representation and computes the
corresponding test input using the new runtime trace.

For example, when generating a timestamp, instead of producing the concrete
value, \eg, ``\httpdate\today~\currenttime~GMT'', the generator returns an
abstract representation that says, ``use the timestamp observed in the last
response''.  When rerunning the test, the timestamp is computed from the new
trace, \eg, ``\httpdate\DayAfter~\currenttime~GMT''.

\begin{figure}
  \includegraphics[width=.85\textwidth]{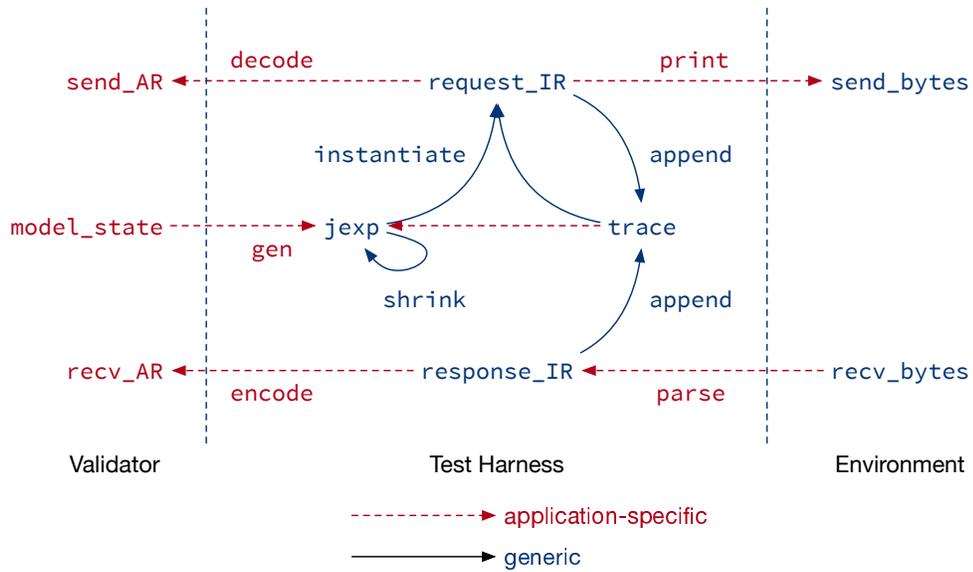}
  \caption{Complete architecture of the test harness.}
  \label{fig:shrink}
\end{figure}

The test generation and shrinking framework is shown in \autoref{fig:shrink}.
It involves four languages, from right to left:
\begin{enumerate}
  \item Byte representation, in which the tester interacts with the environment.
    This can be network packets, file contents, or other serialized data
    produced and observed by the tester.

  \item Intermediate representation (IR), a generic language that abstracts the
    byte representation as structured data.  The test harness {\em parses} byte
    observations and records its trace in terms of the IR, which allows
    representing trace-based heuristics as a generic language, \ie,
    J-expressions.

  \item J-expression (Jexp), a symbolic abstraction of the IR.  The IR
    corresponds to concrete inputs and outputs, whereas Jexp defines a
    computation from trace to IR.  The generator provides test inputs in terms
    of Jexps; The test harness {\em instantiates} the generated Jexps into
    request IR and {\em prints} them into byte representation.

    When shrinking test inputs, the test harness shrinks the sequence of Jexps.
    The shrunk Jexps are then instantiated by the new trace during runtime.

    The intermediate representation and J-expression will be further explained
    in \autoref{sec:shrink-ir}.

  \item Application representation (AR), including the request (\ilc Q),
    response (\ilc A), and state (\ilc S) types used for specifying the
    protocol.  Specification writers can choose the type interface at their
    convenience, provided the request and response types are embeddable into the
    IR.
\end{enumerate}

The testing framework implements protocol-independent mechanisms like recording
the trace and shrinking counterexamples, which correspond to the blue solid
arrows in \autoref{fig:shrink}.  It can be used for testing various protocols,
provided application-specific translations from IR to AR and between IR and
bytes, illustrated as red dotted arrows.  The test developer needs to tune the
generator that produces Jexps, encoding their domain knowledge as state-based
and trace-based heuristics.

\subsection{Abstract representation languages}
\label{sec:shrink-ir}
I choose JSON as the IR in this framework, which allows syntax trees to be
arbitrarily wide and deep and provides sufficient expressiveness for encoding
message data types in general.

\begin{figure}
\[\begin{array}{r@{\;}l}
\mathsf{JSON^T}\triangleq&\mathsf T\mid\{\mathsf{object^T}\}\mid[\mathsf{array^T}]\mid\mathsf{string}\mid\Int\mid\Bool\mid\mathsf{null}\\
\mathsf{object^T}\triangleq&\nil\mid\mathsf{``string": JSON^T,object^T}\\
\mathsf{array^T}\triangleq&\nil\mid\mathsf{JSON^T,array^T}\\
\mathsf{IR}\triangleq&\mathsf{JSON^{IR}}\\
\mathsf{Jexp}\triangleq&\mathsf{JSON^{\Jref{\Label}{\Jpath}{\Function}}}\\
&\text{where }\Label\in\Nat,\Function\in\mathsf{IR}\to\mathsf{IR}\\
\Jpath\triangleq&\This\mid\Jpath\Number\Index\mid\Jpath\At\Field\\
&\text{where }\Index\in\Nat,\Field\in\mathsf{string}
\end{array}\]
\caption{Intermediate representation and J-expression.}
\label{fig:ir-jexp}
\end{figure}

\begin{figure}
\begin{minipage}{.6\textwidth}
\begin{coq}
Notation   labelT := nat.
Definition traceT := list (labelT * IR).

Context q1 q2 a1 a2 : IR.
Example labelled_trace: traceT :=
  [(1, q1); (3, q2); (4, a2); (2, a1)].
\end{coq}
\end{minipage}\begin{minipage}{.3\textwidth}
  \includegraphics[width=\linewidth]{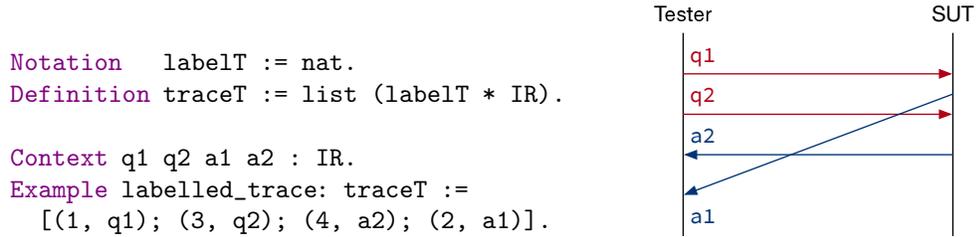}
\end{minipage}
\caption{Labelled trace example.}
\label{fig:ir-trace}
\end{figure}

\begin{figure}
  \begin{minipage}[t]{.4\textwidth}
\begin{json}
  (* a2 = *)
  {
    "files": [
      {
        "name": "foo",
        "mode": 755
      },
      {
        "name": "bar",
        "mode": 500
      }
    ],
    "exitCode": 0
  }
\end{json}
  \end{minipage}\begin{minipage}[t]{.5\textwidth}
\begin{coq}
(* Jpath syntax defined in %\autoref{fig:ir-jexp}%. *)
Example second_file_mode: jpath :=
  this @ "files" # 2 @ "mode".

Example mode_add_write (j: IR) : IR :=
  match j with
  | JSON_Number n =>
    JSON_Number (mode_bits_or 200 n)
  | _ => j
  end.

Example id (j: IR) : IR := j.
\end{coq}
  \end{minipage}
\vspace*{1em}
  \caption{IR, Jpath, and heuristics function example.}
  \label{fig:ir-jpath}
\end{figure}

\paragraph{Syntax}
The J-expression is an extension of JSON that can encode trace-based heuristics.
As shown in \autoref{fig:ir-jexp}, a Jexp may include syntax
$(\Label.\Jpath.\Function)$ that represents trace-based heuristics, specified as:
\begin{enumerate}
\item The test harness records the trace as a list of labelled messages, where
  the requests are labelled odd, and their responses are labelled as the next
  even number.  The $\Label$ in a Jexp locates the IR in the trace with which
  the heuristics computes the input.  Labelling messages allows the reproducing
  trace-based heuristics despite shrinking and inter-execution nondeterminism.

  For example, consider the trace in \autoref{fig:ir-trace}: If a trace-based
  heuristics is interested in \ilc{q2}'s response \ilc{a2}, then it can be
  encoded as ``compute the test input based on message labelled 4'':
\begin{coq}
  Context get_label: labelT -> traceT -> IR.

  Compute get_label 4 labelled_trace.
  (* = a2 : IR *)
\end{coq}
  
  Suppose the test input is shrunk by removing \ilc{q1}, the label for \ilc{q2}
  remains unchanged as 3, so label 4 corresponds to the new response to
  \ilc{q2}:
\begin{coq}
  Example new_trace: traceT :=
    [(3, q2); (4, a2')].

  Compute get_label 4 new_trace.
  (* = a2' : IR *)
\end{coq}

As a result, the trace-based heuristics are preserved and adapted to new
executions during the shrinking process.
\item The $\Jpath$ is a path in the IR's syntax tree and refers to a
  substructure of the IR that the heuristics uses.

  For example, suppose request \ilc{q2} lists files in a directory using the
  POSIX \inlinec{ls} command, and its response \ilc{a2} is encoded as the IR
  shown in \autoref{fig:ir-jpath}.  The response IR is a JSON object whose
  \inlinec{"files"} field is an array of objects, each has a \inlinec{"name"}
  and a \inlinec{"mode"} field.  A heuristic can refer to the second file's
  mode bits by Jpath \ilj{(this@"files"#2@"mode")}, which will guide the test
  harness to locate its corresponding value:
\begin{coq}
  Context get_jpath: jpath -> IR -> IR.

  Compute get_jpath second_file_mode a2.
  (* = JSON_Number 500 : IR *)
\end{coq}
\item The $\Function$ has type $(\mathsf{IR}\to\mathsf{IR})$, and defines the
  computation based on the sub-IR located by the Jpath.

  Consider the mode bits located in the previous example: If the heuristic
  wants to add write permission to the mode bits, it can do so with the
  \ilc{mode_add_write} function in \autoref{fig:ir-jpath}, which produces mode
  700.  Some heuristics might use the sub-IR 500 as-is, using the identity
  function \ilc{id}.
\end{enumerate}

\paragraph{Semantics}
J-expression provides a generic interface for test developers to implement
trace-based heuristics.  For the aforementioned file system example, the tester
can generate a request that changes the mode bits of an observed file, with the
following Jexp:
\begin{json}
  (* e5 = *)
  {
    "command": "chmod",
    "args":
      [ 4.(this@"files"#2@"mode").mode_add_write
      , 4.(this@"files"#2@"name").id ]
  }
\end{json}

To instantiate Jexps into request IR, the test harness substitutes all
occurences of $(\Jref{l}{p}{f})$ in the Jexp with its corresponding IR computed
from the runtime trace:
\begin{coq}
  Definition eval (l: labelT) (p: jpath) (f: IR -> IR) (t: traceT) : IR :=
    let a: IR := get_label l t in
    let j: IR := get_jpath p a in
    f j.
\end{coq}

For example, given the runtime trace in \autoref{fig:ir-trace}, with \ilc{a2} is
defined in \autoref{fig:ir-jexp}, the the above Jexp is instantiated into the
following request:

\begin{json}
  (* instantiate e5 labelled_trace = *)
  { 
    "command": "chmod",
    "args": [ 700, "bar" ]
  }
\end{json}

However, when rerunning the test, the \ilc{new_trace} has a different response
associated with label 4.  The new response \ilc{a2'} might have fewer than 2
files in its payload.  Moreover, the response \ilc{a2'} might have not appeared
in the trace, due to delays in the environment.

To instantiate the original Jexp in such situations, I loosen the
\ilc{get_jpath} and \ilc{get_label} functions when evaluating the heuristics:
\begin{enumerate}
\item When evaluating a Jpath starting with \ilj{p#n}, if \ilj p corresponds to
  an array with fewer than \ilj n elements, or the array's \ilj n-th element
  cannot properly evaluate the remaining path, then try continuing the
  evaluation with any other element in the array.

  For example, consider evaluating \ilj{(this@3#"bar")} on the following IR's:
  \begin{multicols}{2}
\begin{json}
  (* j2 = *)
  [
    { "foo": 21 },
    { "bar": 22 }
  ]
\end{json}
\columnbreak
\begin{json}
  (* j3 = *)
  [
    { "bar": 31 },
    { "baz": 32 },
    { "foo": 33 }
  ]
\end{json}
  \end{multicols}

  Here \ilj{j2} doesn't have a third element, and \ilj{j3}'s third element
  doesn't have field \ilj{"bar"}.  In these cases, \ilj{get_jpath} chooses other
  elements in the two arrays, resulting in value \ilj{22} for \ilj{j2}, and
  \ilj{31} for \ilj{j3}.
  
\item When evaluating label \ilj l and Jpath \ilj p on a trace, if the message
  labelled \ilj l does not exist in the trace, or cannot evaluate Jpath \ilj p
  properly, then try continuing the evaluation with any other IR in the trace.

  For example, consider evaluating J-expression \ilj{6.(this#2@"foo").id} on the
  following traces:
\begin{coq}
  Definition t1: traceT :=
    [(1,q1); (2,j2); (5,q2)].

  Definition t2: traceT :=
    [(3,q1); (4,j3); (5,q2); (6,a2)].
\end{coq}

Here \ilj{t1} doesn't have a message labelled 6, probably caused by environment
delays; \ilj{t2} has label 6 but its corresponding message is an object rather
than an array expected by the Jexp.  In these cases, \ilc{eval} chooses other
messages in the trace to evaluate, resulting in value \ilc{21} for \ilc{t1}, and
\ilc{33} for \ilc{t2}.
\end{enumerate}

By introducing loose evaluation of J-expressions, my test harness allows partial
instantiation of heuristics when the runtime trace is less than satisfying.

So far I have shown how to generate and shrink interactive test inputs and
address inter-execution nondeterminism.  In the next chapter, I'll combine this
test harness design with the validator practice in \autoref{chap:practices}, and
evaluate these techniques by testing real-world systems like HTTP servers and
file synchronizers.

\chpt{Evaluation}
\label{chap:eval}
This chapter evaluates the testing methodology presented in this thesis, by
deriving testers from specifications and running them against systems under
test.  

I conduct the experiments on two kinds of systems, \http server
(\autoref{sec:http}) and file synchronizer (\autoref{sec:sync}).  The research
question is whether the tester can reveal the SUT's violation against the
specification.

\section{Testing Web Servers}
\label{sec:http}

This thesis is motivated by the Deep Specification project~\cite{deepspec},
whose goal is to build systems with rigorous guarantees of functional
correctness, studying HTTP as an example.  I formalized a subset of \http
specification, featuring WebDAV requests GET, PUT, and POST~\cite{rfc4918}, ETag
preconditions~\cite{rfc7232}, and forward proxying~\cite{rfc7231}.

From the protocol specification written as ITrees, I derive a tester client
that sends and receives network packets.  \autoref{sec:http-sut} explains the
system under test and the experiment setup.  \autoref{sec:http-qual} and
\autoref{sec:http-quant} then describe the evaluation results qualitatively and
quantitatively.

\subsection{Systems Under Test}
\label{sec:http-sut}
I ran the derived tester against three server implementations:

\begin{itemize}
\item Apache HTTP Server~\cite{Apache}, one of the most popular servers
  on the World Wide Web~\cite{http-netcraft,http-stats}.  I used the latest
  release 2.4.46, and edited its configuration file to enable WebDAV and proxy
  modules.
\item Nginx~\cite{nginx}, the other most popular server.  The experiment was
  conducted on the latest release 1.19.10, with only WebDAV module enabled,
  because Nginx doesn't fully support forward proxying like Apache does.
\item DeepWeb server developed in collaboration with \citet{itp21}, supporting
  GET and POST requests.  The server's functional correctness was formally
  verified in Coq.
\end{itemize}

The tests were performed on a laptop computer (with Intel Core i7 CPU at 3.1
GHz, 16GB LPDDR3 memory at 2133MHz, and macOS 10.15.7).  The Apache and Nginx
servers were deployed as Docker instances, using the same host machine as the
tester runs on.  Our simple DeepWeb server was compiled into an executable
binary, and also ran on localhost.

The tester communicated with the server via POSIX system calls, in the same way
as over the Internet except using address \inlinec{127.0.0.1}.  The round-trip
time (RTT) of local loopback was $0.08\pm0.04$ microsecond (at 90\% confidence).

\subsection{Qualitative Results}
\label{sec:http-qual}
\paragraph{Apache}
My tester rejected the Apache HTTP Server, which uses strong comparison
(Page \pageref{foot:etag}) for PUT requests conditioned
over \inlinec{If-None-Match}, while RFC~7232 specified that
\inlinec{If-None-Match} preconditions must be evaluated with weak comparison.  I
reported this bug to the developers and figured out that Apache was conforming
with an obsoleted \http standard~\cite{rfc2616}.  The latest standard has
changed the semantics of \inlinec{If-None-Match} preconditions, but Apache
didn't update the logic correspondingly.

To further evaluate the tester's performance in finding other violations, I
fixed the precondition bug by deleting 13 lines of source code and recompiling
the container.

The tester accepted the fixed implementation, which can be explained in two
ways: (1) The server now complies with the specification, or (2) The server has
a bug that the tester cannot detect.  To provide more confidence that (1) is the
case, I ran the tester against servers that are known as buggy and expected the
tester to detect all the intentional bugs.

The buggy implementations were created by mutating the Apache source code
manually\footnote{I didn't use automatic mutant generators because (i) Existing
tools could not mutate all modules I'm interested in; and (ii) The automatically
generated mutants could not cause semantic violations against my protocol
specification.} and recompiling the server program.  I created 20 mutants whose
bugs were located in various modules of the Apache server: \inlinec{core},
\inlinec{http}, \inlinec{dav}, and \inlinec{proxy}.  Some bugs appeared in the
control flow, \eg, removing a return statement in the request handler (as shown
in \autoref{fig:mutant-example}) or skipping the check of preconditions.  Others
appeared in the data values, \eg, calling functions with wrong parameters,
flipping bits in computations, accessing buffer off by one byte, etc.
\begin{figure}
\begin{cpp}
static int default_handler(request_rec *r) {
    ...
    if (r->finfo.filetype == APR_NOFILE) {
        ap_log_rerror(APLOG_MARK, APLOG_INFO, 0, r, APLOGNO(00128)
                      "File does not exist: %s",
                      apr_pstrcat(r->pool, r->filename, r->path_info, NULL));
        // return HTTP_NOT_FOUND;
    }
    ...
\end{cpp}
\vspace*{1em}
\caption{Server mutant example, created by commenting out a line of code.}
\label{fig:mutant-example}
\end{figure}

The tester rejected all of the 20 mutants.  Some mutants took the tester a
longer time to reveal than others, which will be discussed in
\autoref{sec:http-quant}.

\paragraph{Nginx}
When testing Nginx, I found that its WebDAV module did not check the
\inlinec{If-Match} and \inlinec{If-None-Match} preconditions of PUT requests.  I
then browsed the Nginx bug tracker and found a ticket opened by
\citet{nginx242}, reporting the same issue with \inlinec{If-Unmodified-Since}
preconditions.

This issue has been recognized by the developers in 2016 but never resolved.  To
fix this bug, Nginx needs to redesign its core logic and evaluate the request's
precondition {\em before}---instead of {\em after}---handling the request
itself.  As a result, I tested mutants only for Apache, whose original violation
was fixed by simply removing a few lines of bad code.

\paragraph{DeepWeb}
My test derivation framework was developed in parallel with the DeepWeb server.
After my collaborators finished the formal proof of the server's functional
correctness, I tested the server with my derived tester.  The tester
revealed a liveness issue---when a client pipelines more than one requests in a
single connection, the server may hang without processing the latter requests.

This liveness bug was out of scope for the server's functional correctness,
which only requires the server not to send invalid messages.  Such partial
correctness may be trivially satisfied by a silent implementation that never
responds.  The bug was revealed by manually observing the flow of network
packets, where the tester kept sending requests while the server never
responded.  My experiments have shown the complementarity between testing and
verification for improving software quality.

These results show that my tester is capable of finding different kinds of bugs
in server implementations, within and beyond functional correctness.  Next, I'll
evaluate how long the tester takes to reveal bugs.

\subsection{Quantitative Results}
\label{sec:http-quant}

\begin{figure*}
  \includegraphics[width=\textwidth]{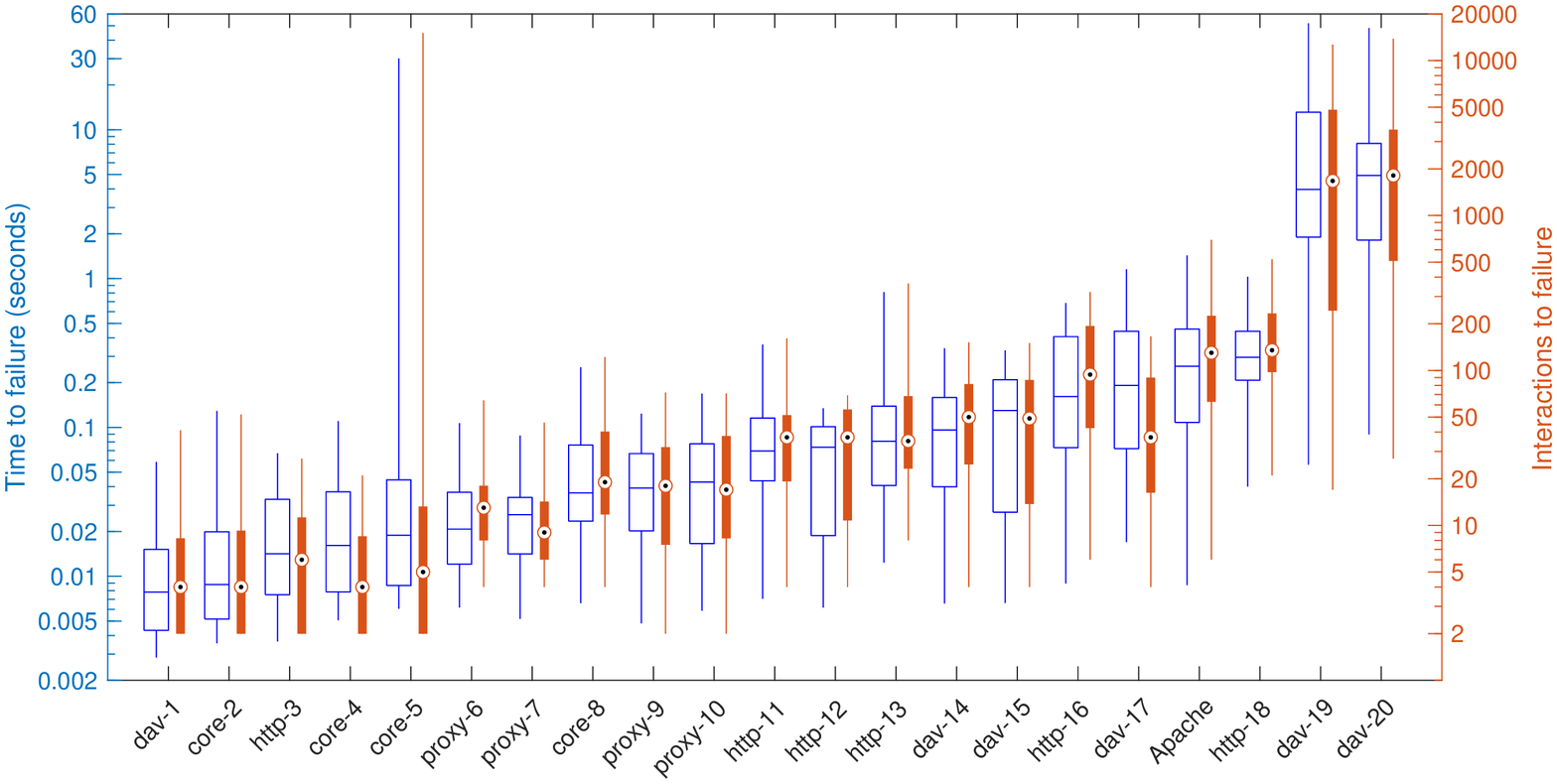}
  \caption[Cost of detecting bug in each server/mutant.]{Cost of detecting bug
    in each server implementation.  The left box with median line is the
    tester's execution time before rejecting the server, which includes
    interacting with the server and checking its responses.  The right bar with
    median circle is the number of \http messages sent and received by the
    tester before finding the bug.  Results beyond 25\%--75\% are covered by
    whiskers.
    %% \bcp{The red horizontal
    %%   lines are very confusing: make them blue!  Also, we need to explain
    %%   what all the parts of the figure mean (or, better, remove most of
    %%   them!)---the thin dotted parts, thin solid parts, why the dotted ones
    %%   are terminated and the solid ones are not, ...}
  }
  \label{fig:checker-performance}
\end{figure*}

To answer the research question at the beginning of this chapter quantitatively,
I measured the execution time and network interactions taken to reject vanilla
Apache and its mutants, as shown in \autoref{fig:checker-performance}.  The 20
mutants are named after the modules where I inserted the bugs.  The tester
rejected all the buggy implementations within 1 minute, and in most cases,
within 1 second.  This does not include the time for shrinking counterexamples.

Some bugs took longer time to find, and they usually required more interactions
to reveal.  This may be caused by (1) The counterexample has a certain pattern
that my generator didn't optimize for, or (2) The tester did produce a
counterexample, but failed to reject the wrong behavior.  I determine the real
cause by analyzing the bugs and their counterexamples:

\begin{figure}
  \includegraphics[width=.9\linewidth]{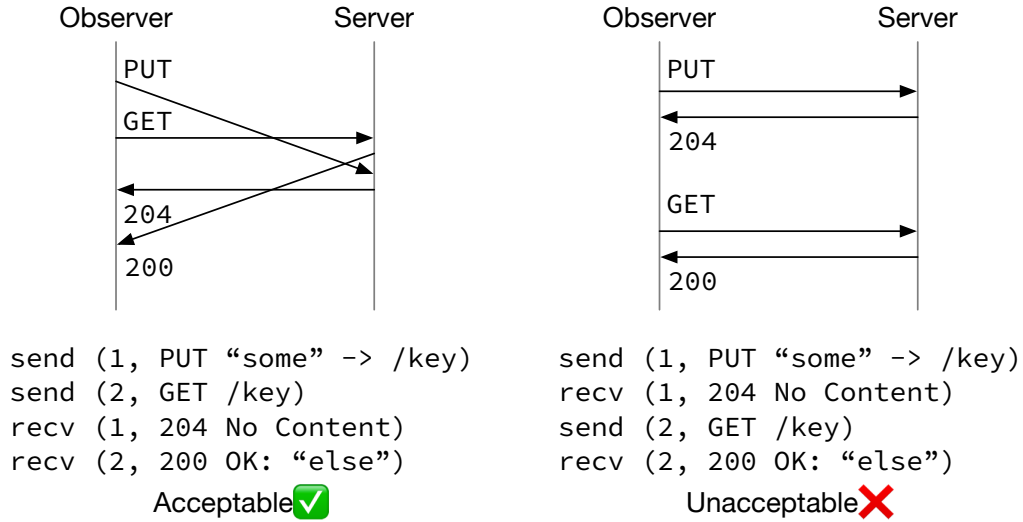}
  \caption[GET-after-PUT bug pattern in Apache mutants.]{GET-after-PUT bug
    pattern in Apache mutants.  The trace on the left does not convince the
    tester that the server is buggy, because there exists a certain network
    delay that explains why the PUT request was not reflected in the 200
    response.  When the trace is ordered as shown on the right, the tester
    cannot imagine any network reordering that causes such observation, thus
    must reject the server.}
  \label{fig:put-bug}
\end{figure}
  
\begin{itemize}
  \item Mutants 19 and 20 are related to the WebDAV module, which handles PUT
    requests that modify the target's contents.  The buggy servers wrote to a
    different target from that requested but responds a successful status to
    the client.

    The tester cannot tell that the server is faulty until it queries the
    target's latest contents and observes an unexpected value.  To reject the
    server with full confidence, these observations must be made in a certain
    order, as shown in \autoref{fig:put-bug}.

  \item Mutant 18 is similar to the bug in vanilla Apache: the server should
    have responded with 304 Not Modified but sent back 200 OK instead.  To
    reveal such a violation, a minimal counterexample consists of 4 messages:
    \begin{enumerate}
    \item GET request,
    \item 200 OK response with some ETag \inlinec{"x"},
    \item GET request conditioned over \inlinec{If-None-Match: "x"}, and
    \item 200 OK response, indicating that the ETag \inlinec{"x"} did not match
      itself.
    \end{enumerate}
    Note that (2) must be observed before (3), otherwise the tester will not
    reject the server, with a similar reason as in \autoref{fig:put-bug}.

  \item Mutant 5 is the one shown in \autoref{fig:mutant-example}.  It causes
    the server to skip the return statement when the response should be 404 Not
    Found.  The counterexample can be as small as one GET request on a
    non-existent target, followed by an unexpected response like 403
    Forbidden.  However, my tester generates request targets within a small
    range, so the requests' targets are likely to be created by the tester's
    previous PUT requests.

    Narrowing the range of test case generation might improve the performance in
    aforementioned Mutants 18--20, but Mutant 5 shows that it could also degrade
    the performance of finding some bugs.

  \item The mutants in the proxy module caused the server to forward wrong
    requests or responses.

    To test servers' forward proxying functionality, the tester consists of
    clients and origin servers, both derived by dualization.  When the origin
    server part of the tester accepts a connection from the proxy, it does not
    know for which client the proxy is forwarding requests.  Therefore, the
    tester needs to check the requests sent by all clients, and make sure none
    of them matches the forwarded proxy request.

    The more client connections the tester has created, the longer it takes the
    tester to check all connections before rejecting a buggy proxy.
\end{itemize}

These examples show that the time-consuming issue of some mutants are likely
caused by the generators' heuristics.  Cases like Mutant 5 can be optimized by
state-based heuristics in \autoref{sec:heuristic-state}; Proxy-related bugs can
be more easily found by trace-based heuristics in \autoref{sec:heuristic-trace};
For Mutants 18--20, the requests should be sent at specific time periods so that
the resulting trace is unacceptable per specification, possibly by integrating
packet dynamics~\cite{pkt-dyn} in the test executor.

\section{Testing a File Synchronizer}
\label{sec:sync}

To demonstrate the generality of my specification-based testing methodology, I
also applied it to file synchronizers.  \autoref{sec:file-sut} introduces my
specification of the file system and synchronization semantics.  From these
specifications, I derived a tester program for the Unison file
synchronizer~\cite{unison}, with results shown in \autoref{sec:file-result}.

\subsection{System Under Test}
\label{sec:file-sut}
A file synchronizer manipulates the file system to reconcile updates in
different replicas~\cite{what-sync}.  To check a synchronizer's correctness, the
tester needs to update replicas, launch the synchronization process, and observe
the propagated updates.

\begin{figure}
\begin{coq}
  Inductive node :=
    File      : content          -> node
  | Directory : list (name*node) -> node.

  Context read : path -> node    -> option content.
  Context write: path -> content -> node -> option node.
  Context mkdir: path -> node    -> option node.
  Context ls   : path -> node    -> list name.
  Context rm   : path -> node    -> option node.
\end{coq}
\vspace*{1em}
\caption{File system specification.}
\label{fig:file-spec}
\end{figure}

My specification consists of two parts:
\begin{enumerate}
  \item A file system model represented as a tree, where the leaves are files,
    and the branches are directories.  As shown in \autoref{fig:file-spec}, the
    file system model is a simplified abstraction from the POSIX file interface,
    ignoring metadata and file permissions.  Specifying more aspects of the file
    system is discussed in \autoref{chap:discussion}.

    Based on the file system model, I specified five basic file operations that
    the tester may perform: (i) reading contents from a file, (ii) writing
    contents to a file, (iii) creating a new directory, (iv) listing entries
    under a directory, and (v) removing a file or directory recursively.

    Some file operations may fail, \eg, when reading from a path that
    refers to a directory.  These failures are represented as return value
    \ilc{(None: option _)} in the node functions.
  \item A file synchronizer model that syncs updates between two replicas,
    implementing the reconcilation algorithm by \citet{what-sync}:
\begin{coq}
  Definition sigma := node * node * node.
      
  Context reconcile: sigma -> sigma.
\end{coq}
Note that the \ilc{reconcile} function manipulates three replicas.  This is
because the synchronizations might be partial: Upon write-write and write-delete
conflicts, the synchronizer does not propagate the conflicting updates, and
leaves the dirty files untouched in both replicas.

The third parameter of the reconcile function represents the subset of the two
replicas that were synchronized: \ilc{(reconcile (a,b,g))} syncs replicas \ilc a
and \ilc b based on their previous consensus \ilc g.  The consensus is initially
empty and updated when a change in one replica is propagated to another, or the
two replicas have made identical changes.
\end{enumerate}

\begin{figure}
\begin{coq}
    Variant F :=        (* file operations *)
    Fls    (p: path)
  | Fread  (p: path)
  | Fwrite (p: path) (c: content)
  | Fmkdir (p: path)
  | Frm    (p: path).

  Variant R := R1 | R2. (* replicas      *)

  Variant Q :=          (* query type    *)
    QFile (r: R) (f: F)
  | QSync.

  Variant A :=          (* response type *)
    Als   (l: list name)
  | Aread (c: content)
  | Aexit (z: Z).       (* exit code     *)
\end{coq}
\vspace*{1em}
\caption{Query and response types for testing file synchronizers.}
\label{fig:file-type}
\end{figure}

Having specified the file system interface and the reconciliation semantics, I
modelled the SUT as a deterministic QA server described in
\autoref{def:qaserver}.  As shown in \autoref{fig:file-type}, the request type
\ilc Q can be file operations or the synchronization command; the response type
\ilc A carries the return value of file system queries, or the transactions'
exit code.  For example, when synchronizing the two replicas, code 1 indicates
partial synchronization with conflicts unresolved, and code 2 means the
synchronizer has crashed due to uncaught exceptions or interruptions.

The QA model for the file system+synchronizer was dualized into a tester program
that makes system calls to manipulate files, launch the synchronizer, and
observe the updates.  The system calls are made one at a time, and the file
synchronizer is run as a foreground process that blocks other interactions.
Testing the synchronizer as a nonblocking background process is discussed in
\autoref{chap:discussion}.

\subsection{Experiment Results}
\label{sec:file-result}
The tester rejected the Unison file synchronizer in two ways, and I reported
the counterexample to the developers.  By analyzing the program's behavior, we
determined that one rejection was a valid but defective feature, and the other
rejection was a documented feature not included in my specification.  The
revealed features are as follows:

\paragraph{Synchronizing read-only directories}
When the tester creates a directory in one replica with read and execution
permission (mode \ilj{500}) and calls the synchronization command, Unison
crashed without creating the corresponding directory in the other replica.

The crashing behavior only occurs on macOS, and is caused by Unison's mechanism
for propagating changes: When copying directory \ilj{foo} from replica \ilj A to
replica \ilj B, the synchronizer first creates a temporary directory
``\ilj{B/.unison.foo.xxxx.unison.tmp}'', and then renames it to ``\ilj{B/foo}''.
The \inlinec{rename} implementation in macOS requires write permission to
proceed, so the change was not propagated.

This issue is not considered a violation in Unison or macOS, because: (1) Unison
is allowed to halt without propagating an expected change, as long as its exit
code has indicated an error, and no unexpected change was propagated.  (2) The
POSIX specification~\cite{posix} says the \inlinec{rename} function {\it may}
require write access to the directory.

Despite being compliant to the specification, this feature in Unison is
considered a defect, as it disables synchronization of read-only files and
directories.  A potential fix might be substituting \inlinec{rename} with other
system calls.

This defect was revealed by accident: My file system specification in
\autoref{fig:file-spec} does not mention file permissions, so I defaulted to
mode \ilj{755}.  However, when implementing the test executor, I made a typo
that wrote the permission as hexadecimal \ilc{0x755} while it should be octal
\ilc{0o755}.  This caused the created directory to have mode bits \ilj{525},
which triggered the aforementioned behavior.

This experiment shows that my current abstraction of the file system is worth
expanding to include more information like file permissions, which might reveal
other features of the file synchronizer.

The experiment also reveals a challenge in specifying programs, that the
underlying operating system might also pose uncertainty to the program's
behavior.  Such external nondeterminism may be handled by parameterizing the
program's specification over the OS's, in a similar way as composing the server
model with the network model in \autoref{sec:external-nondet}.

\paragraph{Detecting write-delete conflict}
Suppose replicas \ilj A and \ilj B have a synchronized file \ilj{foo.txt}.  If
the tester deletes \ilj{A/foo.txt} and writes to \ilj{B/foo.txt}, then there is
a conflict between the two replicas.  My specification required Unison to
indicate this conflict with exit code 1.  During the tests, Unison did not
notice that \ilj{B/foo.txt} was changed, decided to propagate the deletion from
replica \ilj A to replica \ilj B, and halted with exit code 2, representing
``non-fatal failures occured during file transfer''.

This behavior is caused by Unison's ``fastcheck'' feature that improves the
performance at the risk of ignoring conflicts.  With fastcheck enabled by
default on *nix systems, the synchronizer detects file modifications by checking
if their timestamps have changed.  When the tester writes to the file within a
short interval, the timestamps might remain unchanged, so the synchronizer
treats the written file as unmodified.  Such behavior can be avoided by updating
the file contents after a longer interval or turning of the preference.

Note that although Unison decided to propagate the file deletion, it does
check the file contents before deleting it.  As a result, the Unison process
crashed and complained that \ilj{B/foo.txt} was modified during synchronization.

If the fastcheck feature was disabled, then Unison can detect the conflict by
comparing the file contents bit-by-bit.  It will propagate no updates and leave
the two replicas as-is, which results the same as enabling fastcheck but at
lower performance.

To include the space of behaviors with fastcheck enabled or upon failing file
transfers, I modified the synchronization semantics in two ways: (1) Instead of
computing the set of all dirty paths and synchronizing all of them, allow
choosing any subset of the dirty paths to synchronize.  (2) If there is a
conflict in the chosen paths, allow halting the synchronizer with exit code 1
(conflict detected) or 2 (propagation failed).  The loosened specification
accepted all behavior it observed from Unison.

These experiments show that my testing methodology can effectively reject
implementations that do not comply with the specification.  The incompliance
indicates that either (i) the implementation does not conform with the protocol,
\eg, Apache and Nginx violating RFC 7232, or (2) the specification is
incorrect, \eg, my initial file synchronizer specification not modeling all
features of Unison.  Automatically detecting the incompliances helps developers
to correctly specify and implement systems in real-world practices.

\chpt{Related Work}
\label{chap:related-work}
This chapter explores methodologies in specifying and automatically testing
interactive systems.
\autoref{sec:related-validate} compares different specification
techniques for testing purposes.
\autoref{sec:related-harness} then discusses practices in developing test
harnesses.

\section{From Specifications to Validators}
\label{sec:related-validate}

Different testing scenarios exhibit various challenges that motivate the
specification design.  This section partitions the validation techniques by the
languages used for developing the specifications.

\subsection{State machine specification: Quviq QuickCheck}
Property-based testing with QuickCheck has been well adopted in industrial
contexts~\cite{Hughes2016}.  The specification is a boolean function over traces,
\ie, the validator.  My solutions to addressing internal and external
nondeterminism are inspired by practices in QuickCheck.

\paragraph{Internal nondeterminism}
My \http specification was initially written as a QuickCheck property.  Before
handling preconditions like \inlinec{If-Match} and \inlinec{If-None-Match}, the
validator implemented a deterministic server model and compares its behavior
with the observations, as shown in the \ilc{validate} function in
\autoref{sec:interactive-testing}.  When expanding the specification to cover
conditional requests, I implemented the ad hoc validator by manually translating
the trace into tester-side knowledge, as shown in \autoref{fig:etag-tester}.

The complexity in describing ``what behavior is valid'' motivated me to rephrase
the specification.  I applied the idea of model-based
testing~\cite{broy2005model}, and specified the protocol in terms of ``how to
produce valid behavior''.  My specification represented internal nondeterminism
with symbolic variables.  The validator then checks whether the trace is
producible by the symbolic specification, by reducing the producibility problem
to constraint solving.

\paragraph{External nondeterminism}
\citet{testing-dropbox} have used QuickCheck to test Dropbox.  The specification
does not involve internal nondeterminism but does handle external
nondeterminism that local nodes may communicate with the server at any time.
This is done by introducing ``conjectured events'' to represent the possible
communications.  The validator checks if the conjectured events can be inserted
to somewhere in the trace to make it producible by the model.

To specify servers' allowed observations delayed by the network, \citet{cpp19}
introduced the concept of ``network refinement''.  The network may scramble the
traces by delaying some messages after others, with one exception: If the client
has received response \ilc A before it sends request \ilc Q, then by causality,
the server-side trace must have sent response \ilc A before receiving request
\ilc Q.  Upon observing a client-side trace \ilc T, our QuickCheck validator
searches for a server-side trace that (i) can be reasonably scrambled into trace
\ilc T, and (ii) complies with the protocol specification.

My network model design in \autoref{sec:external-nondet} was inspired by these
ideas of conjecturing the environment's behavior.  Instead of inserting
conjectured communication events or reordering the trace, I specified the
network as an independent module and composed it with the server specification.
This provides more flexibility in specifying asynchronous systems: (i) To adjust
the network configurations, \eg, limiting the buffer size or the number of
concurrent connections, we only need to adjust the network model without
modifying the server's events.  (ii) Instead of carefully defining the space of
valid scrambled traces for each network configuration, we can derive it from the
network model by dualization.  (iii) The network model is reusable and allows
specifying various protocols on top of it.

\subsection{Process algebra: LOTOS and TorXakis}
Language of termporal ordering specification (LOTOS)~\cite{lotos} is the ISO
standard for specifying OSI protocols.  It defines distributed concurrent
systems as {\em processes} that interact via {\em channels}.  Using a formal
language strongly inspired by LOTOS, \citet{torxakis-dropbox} implemented a test
generation tool called TorXakis and used it for testing DropBox.

TorXakis supports internal nondeterminism by defining a process for each
possible value.  This requires the space of invisible values to be finite.  In
comparison, I represented invisible values as symbolic variables and employed
constraint solving that can handle inifitite space of data like strings and
integers.

As for external nondeterminism, TorXakis hardcodes all channels between each
pair of processes, assuming no new process joins the system.  Whereas in my
network model, ``channels'' are the ``source'' and ``destination'' fields of
network packets, which allows specifying a server that exposes its port to
infinitely many clients.

\subsection{Transition systems: NetSem and Modbat}
Using labelled transition systems (LTS), \citet{netsem} have developed rigorous
specification for TCP, UDP, and the Sockets API.  To handle internal
nondeterminism in real-world implementations, they used symbolic the model
states, which are then evaluated with a special-purpose symbolic model checker.
The tester helped reveal their post-hoc specifications' mismatch with mainstream
systems like FreeBSD, Linux, and WinXP.  I borrowed the idea of symbolic
evaluation in validating observations and used it for detecting mainstream
servers' violations against the formalized RFC specification.

\citet{modbat} have generated test cases for Java network API, which involves
blocking and non-blocking communications.  Their abstract model was based on
extended finite state machines (EFSM) and could capture bugs in the network
library \verb|java.nio|.  Their validator rejects the SUT upon unexpected
exceptions or observations that fail its {\em encoded} assertions.  In
comparison, assertions in my validator are {\em derived} from the abstract
model, which covers full functional correctness of the SUT modulo external
nondeterminism.

\section{Test Harnesses}
\label{sec:related-harness}
This section explores techniques of generating and shrinking test inputs.
\autoref{sec:related-gen} compares different heuristics used by test generators;
\autoref{sec:related-shrink} explains existing shrinking techniques for
interactive testing scenarios.

\subsection{Generator Heuristics}
\label{sec:related-gen}

In addition to state-based and trace-based heuristics discussed in
\autoref{sec:heuristics}, other kinds of heuristics can be applied to generating
inputs for various testing scenarios.

\paragraph{Constraint-based heuristics}
The reason for introducing heuristics is to increase the chance of triggering
invalid behavior.  I specified the heuristics as ``how to produce interesting
input'', while, in some cases, it's more convenient to specify ``what inputs are
interesting''.  For example, well-typed lambda expressions can be easily defined
in terms of typing rules but are less intuitive to enumerate by a generator.

Narrowing~\cite{narrowing} allows generating data that satisfy certain
constraints.  \citet{gengood} have applied the idea of narrowing to the
QuickChick testing framework in Coq~\cite{quickchick}, representing constraints
as inductive relations.  The relations are compiled into efficient generators
that produce satisfying data.

The narrowing-based generator in QuickChick was used during my preliminary
experiments with \http, where I defined a well-formed relation for HTTP requests
to guide the generator.  However, the QuickChick implementation was unstable and
failed to derive generators as my type definition becomes more and more complex.
This constraint-based heuristics may be integrated to my current testing
framework by interpreting QuickChick's \ilc{Gen} (generator) monad into
the \ilc{IO} monad used by the test harness, provided the generator derivation
failure gets resolved.

\paragraph{Coverage-based heuristics}
Another strategy to increase the chance of revealing invalid behavior is to
cover more execution paths of the SUT.  This idea is applied to fuzz
testing~\cite{fuzz}, with popular implementations AFL~\cite{afl} and
Honggfuzz~\cite{honggfuzz}, and combined with property-based testing by
\citet{fuzzchick}.

Coverage-based testers mutate the test inputs and observe the programs'
execution paths.  An input is considered interesting if it causes the program to
traverse a previously unvisited path.  The interesting inputs will be mutated
and rerun to cover potentially more paths.

To track the program's execution paths, coverage-based heuristics need to
instrument the SUT during compilation, making it inapplicable for black-box
testing.  When fuzzing interactive systems like web servers, the SUT is run in a
simulator where the requests are provided as files instead of network packets.
The responses are ignored by the comtemporary fuzzing tools, which only checks
whether the SUT crashes or hangs and does not care about functional correctness
like the responses' contents.  To produce a trace for validation, the SUT needs
to be carefully modified to record its send and receive events.  Addressing
these challenges would enable coverage-based heuristics to be integrated in
interactive testing of systems' functional correctness.

\subsection{Shrinking Interactive Tests}
\label{sec:related-shrink}

To address inter-execution nondeterminism in Quviq QuickCheck,
\citet{Hughes2016} introduced the idea of shrinking abstract representations of
test input.  The tester first generates a {\em script} using state-based
heuristics and instantiates the script with the tester's runtime state.  The
scripts can be shrunk and adapted to new runtimes when rerunning the test.

The language design of my test harness is inspired by the QuickCheck approach,
and extends it in two dimensions:
\begin{enumerate}
\item The QuickCheck framework assumes synchronous interactions, where the
  requests are function calls that immediately return.  When testing
  asynchronous systems, \eg, networked servers, the responses might be
  indefinitely delayed by the environment, which would block the QuickCheck
  tester's state transition.
  
  To generate test inputs asynchronously, I implemented a non-blocking algorithm
  for instantiating scripts into requests.  When a dependant observation is
  absent due to delays or loss by the environment, the test harness tries to
  instantiate the request using other observations, instead of waiting for the
  observation from the environment.

\item QuickCheck requires the test developer to specify a runtime state to guide
  the generator, which requires them to define the state transition rules for
  each interaction.  The heuristics are implemented based on the tester's point
  of view.

  In comparison, my framework replaced these requirements with state- and
  trace-based heuristics, where the model state was exposed by the underlying
  specification, and the trace was automatically recorded by the test harness.
  This allows test developers to design heuristics based on the implementation's
  point of view, instead of reasoning on the implementation's internal and
  external nondeterminism based on its observations.
\end{enumerate}

\chpt{Conclusion and Future Work}
\label{chap:discussion}
This dissertation presented a systematic technique for testing interactive systems
with uncertain behavior.  I proposed a theory of dualizing protocol
specification into validators, with formal guarantee of soundness and
completeness (\autoref{chap:theory}).  To test systems in real-world practice, I
applied the dualization theory to the interaction tree specification language,
and derived specifications into interactive testing programs
(\autoref{chap:practices}).  I then presented a test harness design to generate
and shrink interactive test inputs effectively (\autoref{chap:harness}).  The
entire methodology was evaluated by detecting programs' incompliance with the
specification using automatically derived testers (\autoref{chap:eval}).

To address challenges posed by internal, external, and inter-execution
nondeterminism, I introduced various flavors of symbolic abstract
interpretation.  Systems' internal choices are represented as symbolic variables
and unified against the tester's observations (\autoref{sec:dualize-prog},
\autoref{sec:internal-nondet}).  Possible impacts made by the environment are
represented as nondeterministic branches (\autoref{sec:external-nondet}) and
determined by backtrack searching (\autoref{sec:backtrack}).  The test inputs
are generated as symbolic intermediate representations that can adapt to
different traces during runtime (\autoref{sec:shrinking}).

The techniques in this thesis can be expanded in several dimensions:
\paragraph{Specifying and testing in various scenarios}
The Unison file synchronizer in \autoref{sec:sync} was specified and tested as a
command line tool that is manually triggered.  Whereas, Unison can run as a
background process that monitors the file system and automatically propagates
the updates, just like Dropbox.  Running in the background introduces more
nondeterminism, \eg, whether the tester's modification has been buffered or
propagated to the file system, whether the operating system has scheduled the
synchronizer process to check the file system's updates, whether the
synchronizer has finished propagating the changes to the other replica, etc.
Migrating testing scenarios might expose the limitations of my
methodology or introduce new challenges in testing.

\paragraph{Integrating other testing techniques to the framework}
The experiments with Apache and Unison have shown that some bugs are
time-related, \eg, only revealed by specific request and response orders, or
require conflicting modifications to occur within the same millisecond.  My
current test harness doesn't tune the timing of executions and might benefit
from integrating packet dynamics by \citet{pkt-dyn}.

The randomized generator in my framework produced a large number of inputs with
guidance of heuristics based on domain-specific knowledges.  To find bugs with
fewer tests, \citet{judge-cover} applied ideas from combinatorial testing to
tune the generators' distributions for better coverage.  Measuring and improving
coverage of asynchronous tests is yet to be studied.

\end{mainf}

%%%%% BIBLIOGRAPHY %%%%%
%-------------------------------------------
\begin{bibliof}
%\nocite{*} % If applicable, uncomment this line to display all entries in the .bib file
\bibliography{bibliography}
\end{bibliof}
%-------------------------------------------
\end{document}